\newtheorem{theorem}{Theorem}[section]
\newtheorem{definition}[theorem]{Definition}
\newtheorem{example}[theorem]{Example}
\newtheorem{remark}[theorem]{Remark}
\newtheorem{lemma}[theorem]{Lemma}
\newtheorem{corollary}[theorem]{Corollary}
\newtheorem{proposition}[theorem]{Proposition}
\DeclarePairedDelimiter\ceil{\lceil}{\rceil}
\DeclarePairedDelimiter\floor{\lfloor}{\rfloor}
\def\la{\lambda}
\def\b{\beta}
\def\si{\sigma}
\def\w{{\bf w}}
\def\x{{\bf x}}
\def\y{{\bf y}}
\def\z{{\bf z}}
\def\Q{{\bf Q}}
\def\P{\mathbf P}
\def\N{\mathbb N}
\def\R{\mathbb R}
\def\Z{\mathbb Z}
\def\cC{\mathcal C}
\def\cL{\mathcal L}
\def\cX{\mathcal X}
\def\fqs{\mathbb F_{q^2}}
\def\fq{\mathbb F_q}
\def\supp{{\rm Supp}}
\def\Div{{\rm Div}}
\def\deg{{\rm deg}}
\def\lub{{\rm lub}}
\def\glb{{\rm glb}}
\def\Id{{\rm Id}}
\def\supp{{\rm Supp}}
\def\res{{\rm res}}
\def\char{\mbox{\rm Char}}
\def\negalpha{\text{\boldmath$\alpha$}}
\def\neglambda{\text{\boldmath$\lambda$}}
\def\neg1{\text{\boldmath$1$}}
\def\nege{\text{\boldmath$e$}}
\def\negbeta{\text{\boldmath$\beta$}}
\def\neggamma{\text{\boldmath$\gamma$}}
\def\negeta{\text{\boldmath$\eta$}}
\def\neggamma{\text{\boldmath$\gamma$}}
\def\negeta{\text{\boldmath$\eta$}}
\def\neg1{\text{\boldmath$1$}}
\def\negi{\text{\boldmath$i$}}
\newcommand{\al}{\alpha}
\newcommand{\be}{\beta}
\begin{document}

\title[The Set of Pure Gaps at Several Rational Places in Function Fields]{The Set of Pure Gaps at Several Rational Places in Function Fields}

\thanks{{\bf Keywords}: Pure Gaps, Kummer extensions, AG codes}

\thanks{{\bf Mathematics Subject Classification (2010)}: 14H55, 11G20, 94B27.}

\thanks{The first author was partially supported by FAPEMIG. The second author was partially supported by FAPESP Grant 2022/16369-2. The third author was partially funded by CNPq.}

\author{Alonso S. Castellanos, Erik A. R. Mendoza, and Guilherme Tizziotti}

\address{Faculdade de Matemática, Universidade Federal de Uberlândia, Campus Santa Mônica, CEP 38400-902, Uberlândia, Brazil}
\email{alonso.castellanos@ufu.br}

\address{Universidade Estadual de Campinas, Instituto de Matemática, Estatística e Computação Científica, CEP 13083-859, Campinas, Brazil}
\email{erikrm@ime.unicamp.br}

\address{Faculdade de Matemática, Universidade Federal de Uberlândia, Campus Santa Mônica, CEP 38400-902, Uberlândia, Brazil}
\email{guilhermect@ufu.br}

\begin{abstract}
In this work, using maximal elements in generalized Weierstrass semigroups and its relationship with pure gaps, we extend the results in \cite{CMT2024} and provide a way to completely determine the set of pure gaps at several rational places in an arbitrary function field $F$ over a finite field and its cardinality. As an example, we determine the cardinality and a simple explicit description of the set of pure gaps at several rational places distinct to the infinity place on Kummer extensions, which is a different characterization from that presented by Hu and Yang in \cite{HY2018}. Furthermore, we present some applications in coding theory and AG codes with good parameters.
\end{abstract}

\maketitle

\section{Introduction}

Weierstrass semigroups $H(\Q)$ at an $n$-tuple of distinct rational places $\Q = (Q_1,\ldots, Q_n)$ in a function field $F$ and its set of gaps $G(\Q)$ are classic objects of research in algebraic geometry with important applications in coding theory. The concept of pure gap in Weierstrass semigroups was introduced by Homma and Kim \cite{HK2001} in a study over Weierstrass semigroups at a pair of places, and it was extended for several places by Carvalho and Torres in \cite{CT2005}. Pure gaps are very useful in the construction of algebraic-geometry codes and are related to improvements on the Goppa bound of the minimum distance of these codes, see e.g. \cite{CT2005}, \cite{CB2020}, \cite{CT2016}, \cite{HK2001}, \cite{HY2018}, \cite{GM2001}. The set of pure gaps is denoted by $G_0(\Q)$. In addition to applications in coding theory, calculating the cardinality of the sets $G(\Q)$ and $G_0(\Q)$, and explicitly determining its elements are two interesting problems that have attracted the attention of researchers in recent decades. For the case $\Q=(Q_1, Q_2)$, in \cite{K1994}, Kim obtained lower and upper bounds on the cardinality of $G(\Q)$, and Homma, in \cite{H1996}, got an exact expression for such cardinality. Matthews in \cite{GM2001} determined the elements and cardinality of the gap set at two places on the Hermitian function field. Bartoli, Quoos, and Zini \cite{BQZ2018} counted the gaps at two places on certain Kummer extensions. In  \cite{YH2018}, Hu and Yang determined the numbers of gaps and pure gaps at two places on a quotient of the Hermitian function field. Still in the case of two places, in \cite{CMT2024}, the authors provided a way to completely determine the set of pure gaps $G_0(\Q)$ at two places in an arbitrary function field and its cardinality. As an example, in the same work, they completely determine the set of pure gaps and its cardinality for two families of function fields: the $GK$ function field and Kummer extensions. For the case of several places $\Q=(Q_1, \ldots, Q_n)$, with $n > 2$, in \cite{BQZ2018} the authors gave a criterion to find pure gaps at many places and presented families of pure gaps for some function fields. In \cite{HY2018} and \cite{YH2017} Hu and Yang explicitly determined the Weierstrass semigroups and the set of pure gaps at many places on Kummer extensions, and in \cite{YH2018} they gave an arithmetic characterization of the set of pure gaps at several places on a quotient of the Hermitian function field. 

Generalized Weierstrass semigroups, denoted by $\widehat{H}(\Q)$, was a generalization of Weierstrass semigroup proposed by Delgado in \cite{D1990} working with algebraically closed fields. In \cite{BT2006}, Beelen and Tuta\c{s} studied such object over finite fields. On the other hand, important objects in the study of generalized Weierstrass semigroups $\widehat{H}(\Q)$ are the absolute and relative maximal elements (see Definition \ref{defi maximals}). Studies on $\widehat{H}(\Q)$ and its maximal elements can be found in \cite{MTT2019} and \cite{TT2019}.

In this work, using the notion of maximal elements in generalized Weierstrass semigroups and its relationship with pure gaps given by Ten\'orio and Tizziotti in \cite{TT2019}, we provide a way to completely determine the set of pure gaps $G_0(\Q)$ at several rational places in an arbitrary function field $F$ over a finite field and its cardinality. Our results generalize those presented in \cite{CMT2024}. As an example, we determine the cardinality and a simple explicit description of the set of pure gaps $G_0(\Q)$ at several rational places distinct to the infinity place on Kummer extensions, which is a different characterization from that presented by Hu and Yang in \cite{HY2018}. For this we completely determine the sets of absolute and relative maximal elements of the generalized Weierstrass semigroup $\widehat{H}(\Q)$, where each $Q_i$ is distinct to the infinity place. As an application in coding theory, we use our results to construct differential AG codes in many places over Kummer extensions (see Examples \ref{Example1} and \ref{Example2}) which have better relative parameters than the AG codes obtained in \cite[Examples 1 and 3]{HY2018}. In addition, these codes improve the parameters given in MinT’s Tables \cite{MinT}. 

Here is an outline of the paper. In Section $2$, we present some basic results and terminologies concerning Weierstrass and generalized Weierstrass semigroups, and AG codes. In Section $3$ we study the set of absolute and relative maximal elements. In Section $4$, we present a description of the pure gap set in terms of relative maximal elements in the Weierstrass semigroup and an explicit expression for its cardinality. Finally, in Section 5, we completely determine the set of absolute and relative maximal elements of the generalized Weierstrass semigroup and, using the results of the previous section, we determine the cardinality and a simple explicit description of the pure gap set at totally ramified places on Kummer extensions. Furthermore, we construct multi-point AG codes with good parameters.

\section{Preliminaries and Notation}

Throughout this article, we let $q$ be the power of a prime $p$, $\fq$ be the finite field with $q$ elements, and $K$ be the algebraic closure of $\fq$. For $a\in \R$, we denote by $\floor*{a}$ and $\ceil*{a}$ the floor and ceiling function of $a$, respectively. We also denote $\mathbb{N}_{0} = \mathbb{N} \cup \{0\}$, where $\mathbb{N}$ is the set of positive integers.

Given a function field of one variable $F/K$ of genus $g=g(F)$, we denote by $\mathcal P_F$ the set of places in $F$, by $\Div (F)$ the group of divisors in $F$, and for a function $z \in F$ we let $(z)$ and $(z)_\infty$ stand for the principal and pole divisor of $z$, respectively. 

For a divisor $D\in \Div(F)$, the Riemann-Roch space associated to the divisor $D$ is defined by
$$
\cL(D):=\{z\in F: (z)+D\geq 0\}\cup \{0\},
$$
and we denote by $\ell(D)$ its dimension as vector space over $K$.

\subsection{Generalized Weierstrass semigroup} Let $\Q=(Q_1, \dots, Q_n)$ be an $n$-tuple of distinct rational places in $F$. The Weierstrass semigroup $H(\Q)$ and the generalized Weierstrass semigroup $\widehat{H}(\Q)$ of $F$ at $\Q$ are defined, respectively, by the sets
$$
H(\mathbf{Q}) := \left\{(a_{1}, \ldots, a_{n}) \in \mathbb{N}_{0}^ {n} \mbox{ : } \exists \ f \in F \mbox{ with } (f)_{\infty} = \textstyle\sum_{i=1}^ {n} a_{i}Q_{i} \right\}
$$
and
$$
\widehat{H}(\Q):=\{(-v_{Q_1}(f),\dots ,-v_{Q_n}(f))\in \Z^n : f \in R_\Q\setminus\{0\}\},
$$
where $R_\Q$ denotes the ring of functions on $F$ that are regular outside the points in the set
$\{Q_1,\dots, Q_n\}$ and $v_{Q_i}$ is the discrete valuation associated to place $Q_i$. Provided that $q \geq n$, we have that the Weierstrass semigroup of $F$ at $\Q$ can be obtained by the relation $H(\Q)=\widehat{H}(\Q)\cap \N_0^n$, see \cite[Proposition 2]{BT2006}. The elements in the finite complement ${G(\mathbf{Q}):=\N^n_0\setminus H(\mathbf{Q})}$ are called \emph{gaps} of $F$ at $\mathbf{Q}$. A \emph{pure gap} of $F$ at $\Q$ is an $n$-tuple $\negalpha = (\alpha_1, \ldots, \alpha_n) \in G(\mathbf{Q})$ such that $\ell(D_\negalpha(\Q)) = \ell(D_\negalpha(\Q) - Q_j)$ for any $j=1,\ldots,n$, where $D_\negalpha(\Q) = \alpha_1 Q_1 + \cdots + \alpha_n Q_n$. The set of pure gaps of $F$ at $\mathbf{Q}$ will be denoted by $G_0(\mathbf{Q})$.

%\subsection{Maximals elements in generalized Weierstrass semigroup}
An important notion in the study of $\widehat{H}(\Q)$ due to Delgado \cite{D1990} is that of maximal elements (see Definition \ref{defi maximals}). As we will see in what follows, these elements will play an important role in the description of the elements of $G_0(\Q)$. 

Set $I:=\{1,\ldots,n\}$. For $i\in I$, a nonempty subset $J\subsetneq I$, and $\negalpha=(\alpha_1,\ldots,\alpha_n)\in \Z^n$, we shall denote
\begin{itemize}
\item [$\bullet$] $\nabla^n_i(\negalpha):=\{\negbeta\in \widehat{H}(\Q) : \be_i = \al_i \text{ and }\be_j \leq \al_j \text{ for }j\neq i\};$
\item [$\bullet$] $\overline{\nabla}_J (\negalpha):=\{\negbeta\in \Z^n : \be_j=\al_j \text{ for }j\in J \text{ and } \be_i < \al_i \text{ for }i\notin J\};$
\item [$\bullet$] $\nabla_J(\negalpha):=\overline{\nabla}_J(\negalpha)\cap \widehat{H}(\Q);$
\item [$\bullet$] $\overline{\nabla}(\negalpha):=\cup_{i=1}^{n}\overline{\nabla}_i(\negalpha)$, where $\overline{\nabla}_i(\negalpha):= \overline{\nabla}_{\{i\}}(\negalpha);$
\item [$\bullet$] $\nabla(\negalpha):=\overline{\nabla}(\negalpha)\cap \widehat{H}(\Q).$
%\item [$\triangleright$] $\neg1_J$ denotes the $n$-tuple whose the $j$-th coordinate $1$ is if $j\in J$ and $0$ otherwise,
%\item [$\triangleright$] $\neg1:=(1, 1, \dots, 1)$, and 
%\item [$\triangleright$] $D_\negalpha$ will denote the divisor $\al_1 Q_1+\dots+\al_n Q_n$ on $F$.
\end{itemize}

\begin{definition} \label{defi maximals}
An element $\negalpha\in \widehat{H}(\Q)$ is called maximal if $\nabla(\negalpha)=\emptyset$. If moreover $\nabla_J(\negalpha)=\emptyset$ for every $J\subsetneq I$ with $|J| \geq 2$, we say that $\negalpha$ is absolute maximal. If $\negalpha$ is maximal and $\nabla_J(\negalpha)\neq\emptyset$ for every $J\subsetneq I$ with $|J| \geq 2$, we say that $\negalpha\in \widehat{H}(\Q)$ is relative maximal. The sets of absolute and relative maximal elements in $\widehat{H}(\Q)$ will be denoted, respectively, by $\widehat{\Gamma}(\Q)$ and $\widehat{\Lambda}(\Q)$.
\end{definition}

Observe that the notions of absolute and relative maximality coincide when $n=2$.

For $\negalpha = (\alpha_1, \ldots, \alpha_n) \in \Z^n$, we will denote $D_{\negalpha}(\Q) := \alpha_1 Q_1 + \cdots + \alpha_n Q_n$. Furthermore, we denote by $\nege_j$ the $n$-tuple whose $j$-th coordinate is $1$ and the others are $0$, and by $\neg1$ the $n$-tuple whose coordinates are all $1$.

Below we present some results that provide important equivalences involving the concepts of absolute and relative maximal elements.

\begin{proposition}\cite[Proposition 3.2]{MTT2019} \label{absmax} 
Let $\negalpha\in \widehat{H}(\mathbf{Q})$ and assume that $q \geq n$. The following statements are equivalent:
\begin{enumerate}[\rm (i)]
\item $\negalpha\in \widehat{\Gamma}(\mathbf{Q})$;
\item $\nabla_i^n(\negalpha)=\{\negalpha\}$ for all $i\in I$;
\item  $\nabla_i^n(\negalpha)=\{\negalpha\}$ for some $i\in I$;
\item $\ell(D_{\negalpha}(\Q))=\ell(D_{\negalpha}(\Q) - \sum_{i=1}^n Q_i)+1$.
\end{enumerate}
\end{proposition}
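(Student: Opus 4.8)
The plan is to dispatch the purely combinatorial equivalence (i)$\Leftrightarrow$(ii) directly from the definitions, and then to treat the analytic content—the equivalences among (ii), (iii), (iv)—by introducing a single linear-algebraic gadget: the map recording the ``leading coefficients'' of functions in $\cL(D_\negalpha(\Q))$ at the places $Q_1,\dots,Q_n$. Throughout, $q\geq n$ is a standing hypothesis (under which $H(\Q)=\widehat H(\Q)\cap\N_0^n$), but the essential step is linear algebra over the algebraically closed field $K$.

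For (i)$\Leftrightarrow$(ii), I would start from the observation that each $\negbeta\in\nabla^n_i(\negalpha)$ is sorted by its support $J:=\{j:\be_j=\al_j\}$, which always contains $i$; grouping by this support gives the disjoint decomposition
\[
\nabla^n_i(\negalpha)=\{\negalpha\}\;\sqcup\!\!\bigsqcup_{\,i\in J\subsetneq I}\nabla_J(\negalpha).
\]
Hence $\nabla^n_i(\negalpha)=\{\negalpha\}$ is equivalent to $\nabla_J(\negalpha)=\emptyset$ for every proper $J\supseteq\{i\}$, and letting $i$ range over $I$ this says exactly that $\nabla_J(\negalpha)=\emptyset$ for every nonempty $J\subsetneq I$. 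Since $\nabla(\negalpha)=\bigcup_i\nabla_{\{i\}}(\negalpha)$, the latter is precisely the condition $\negalpha\in\widehat\Gamma(\Q)$, giving (i)$\Leftrightarrow$(ii); and (ii)$\Rightarrow$(iii) is immediate.

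For the rest, fix local uniformizers $t_j$ at $Q_j$ and consider the $K$-linear map $c=(c_1,\dots,c_n)\colon \cL(D_\negalpha(\Q))\to K^n$, where $c_j(f)$ is the coefficient of $t_j^{-\al_j}$ in the local expansion of $f$ at $Q_j$, so that $c_j(f)\neq 0$ exactly when $v_{Q_j}(f)=-\al_j$. Its kernel is $\bigcap_j\cL(D_\negalpha(\Q)-Q_j)=\cL\big(D_\negalpha(\Q)-\sum_{i=1}^nQ_i\big)$, so writing $W:=\mathrm{Im}(c)$ we get $\dim W=\ell(D_\negalpha(\Q))-\ell\big(D_\negalpha(\Q)-\sum_{i=1}^nQ_i\big)$; in particular (iv) is equivalent to $\dim W=1$. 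Because $\cL(D_\negalpha(\Q))\subseteq R_\Q$, the support of $c(f)$ equals $\{j:-v_{Q_j}(f)=\al_j\}$, and passing between functions and their valuation vectors shows that $\nabla^n_i(\negalpha)=\{\negalpha\}$ holds if and only if every $w\in W$ with $w_i\neq 0$ has all coordinates nonzero (call $W$ then \emph{$i$-full}). Since $\negalpha\in\widehat H(\Q)$, there is $f_0\in R_\Q$ with $-v_{Q_j}(f_0)=\al_j$ for all $j$, so $c(f_0)\in W$ has full support and $\dim W\geq 1$.

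The crux is to show that $i$-fullness for a single $i$ already forces $\dim W=1$, i.e.\ (iii)$\Rightarrow$(iv); after that, $\dim W=1$ trivially yields $i$-fullness for all $i$ (every nonzero $w\in W$ is a scalar multiple of the full-support vector $c(f_0)$), giving (iv)$\Rightarrow$(ii) and closing the cycle (i)$\Rightarrow$(ii)$\Rightarrow$(iii)$\Rightarrow$(iv)$\Rightarrow$(ii). I would argue the contrapositive: if $\dim W\geq 2$ then $W$ fails to be $i$-full for \emph{every} $i$. Fixing $i$, the $i$-th coordinate functional is nonzero on $W$ (it is nonzero on $c(f_0)$), so its kernel in $W$ has dimension $\dim W-1\geq 1$; pick a nonzero $w_0$ there and an index $k$ with $(w_0)_k\neq 0$ (necessarily $k\neq i$). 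Then for $\lambda:=-(w_0)_k/c_k(f_0)\neq 0$ the vector $w_0+\lambda\,c(f_0)\in W$ has nonzero $i$-th coordinate but vanishing $k$-th coordinate, witnessing the failure of $i$-fullness. The main obstacle is exactly this symmetric upgrade from ``some $i$'' to ``all $i$'': it is invisible on the combinatorial side and becomes transparent only through the linear structure of $W$, where a one-dimensional image generated by a full-support vector is what simultaneously realizes all the conditions (ii), (iii), and (iv).
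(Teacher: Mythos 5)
Your proof is correct. Note first that the paper does not prove this statement at all: it is quoted verbatim from \cite[Proposition 3.2]{MTT2019}, so there is no in-document argument to compare against, and your write-up stands as a self-contained substitute. The two halves both check out. The decomposition $\nabla^n_i(\negalpha)=\{\negalpha\}\sqcup\bigsqcup_{i\in J\subsetneq I}\nabla_J(\negalpha)$ is valid because the set $J=\{j:\be_j=\al_j\}$ is determined by $\negbeta$, and ranging over $i$ recovers exactly the condition ``$\nabla_J(\negalpha)=\emptyset$ for every nonempty $J\subsetneq I$'', which is the definition of absolute maximality (the $|J|=1$ cases giving maximality, the $|J|\geq 2$ cases the absolute condition). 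The leading-coefficient map $c$ is the right gadget: its kernel is $\cL(D_\negalpha(\Q)-\sum_i Q_i)$, the identification of $i$-fullness of $W=\mathrm{Im}(c)$ with $\nabla^n_i(\negalpha)=\{\negalpha\}$ uses only that $\cL(D_\negalpha(\Q))\subseteq R_\Q$ and that witnesses of elements of $\nabla^n_i(\negalpha)$ lie in $\cL(D_\negalpha(\Q))$, and the contrapositive of (iii)$\Rightarrow$(iv) is a clean two-dimensional perturbation argument. One point worth making explicit, which you use but could flag: the entire analytic half hinges on the existence of the full-support vector $c(f_0)$, which is supplied by the standing hypothesis $\negalpha\in\widehat H(\Q)$; with that in hand your argument never actually invokes $q\geq n$ (the scalar $\lambda$ exists over any field), whereas that hypothesis is genuinely needed elsewhere in this circle of ideas, e.g.\ for Proposition \ref{prop 2.1.3 da tese}(i), where one must combine $n$ separate witnesses into one. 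So your proof is, if anything, slightly more economical in its hypotheses than the cited source requires.
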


\begin{proposition}\cite[Proposition 2.2]{MTT2019} \label{prop 2.1.3 da tese}
Let $\negalpha \in \mathbb{Z}^n$ and assume that $q \geq n$. Then
\begin{enumerate}[\rm (i)]
\item $\negalpha \in \widehat{H} (\mathbf{Q})$ if and only if $\ell(D_{\negalpha}(\Q)) = \ell(D_{\negalpha}(\Q) - Q_i)+1$, for all $i \in I$;
\item $\nabla_{i}^{n}(\negalpha) = \emptyset$ if and only if $\ell(D_{\negalpha}(\Q))=\ell (D_{\negalpha}(\Q) - Q_i)$, for every $i\in I$.
\end{enumerate}
\end{proposition}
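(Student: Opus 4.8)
The plan is to translate every statement into one about the Riemann--Roch spaces $\cL(D_\negalpha(\Q))$ and then invoke a single elementary fact: for a degree-one place $P$ one always has $\ell(D)-\ell(D-P)\in\{0,1\}$, with the drop equal to $1$ precisely when some $f\in\cL(D)$ has pole order at $P$ equal to the coefficient of $P$ in $D$. To use this I would first set up the dictionary between valuations and functions. Since $D_\negalpha(\Q)$ is supported only on $Q_1,\dots,Q_n$, a function lies in $\cL(D_\negalpha(\Q))$ iff it belongs to $R_\Q$ and satisfies $v_{Q_j}(f)\ge-\alpha_j$ for all $j$. Moreover, because each $Q_i$ is rational its residue field is $\fq$, so fixing a local uniformizer $t_i$ at $Q_i$, the coefficient of $t_i^{-\alpha_i}$ in the local expansion defines an $\fq$-linear map $\rho_i\colon\cL(D_\negalpha(\Q))\to\fq$ with kernel $\cL(D_\negalpha(\Q)-Q_i)$, satisfying $\rho_i(f)\neq0 \iff v_{Q_i}(f)=-\alpha_i$.

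With this dictionary, part (ii) is a rewriting. By definition $\nabla_i^n(\negalpha)\neq\emptyset$ means there is $g\in R_\Q$ with $v_{Q_i}(g)=-\alpha_i$ and $v_{Q_j}(g)\ge-\alpha_j$ for $j\neq i$; this says exactly that $g\in\cL(D_\negalpha(\Q))$ with $\rho_i(g)\neq0$, that is $\cL(D_\negalpha(\Q))\neq\cL(D_\negalpha(\Q)-Q_i)$, which by the elementary fact means $\ell(D_\negalpha(\Q))=\ell(D_\negalpha(\Q)-Q_i)+1$. Negating establishes $\nabla_i^n(\negalpha)=\emptyset \iff \ell(D_\negalpha(\Q))=\ell(D_\negalpha(\Q)-Q_i)$ for each $i\in I$.

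For part (i), the forward implication is immediate: a function $f\in R_\Q$ realizing $(-v_{Q_1}(f),\dots,-v_{Q_n}(f))=\negalpha$ lies in $\cL(D_\negalpha(\Q))$ with $\rho_i(f)\neq0$ for every $i$, giving $\ell(D_\negalpha(\Q))=\ell(D_\negalpha(\Q)-Q_i)+1$ for all $i$. The substance is the converse, and I expect this to be the main obstacle. The hypothesis furnishes, for each $i$, a function $f_i\in\cL(D_\negalpha(\Q))$ with $\rho_i(f_i)\neq0$, and I must assemble from these a single $f$ with $\rho_i(f)\neq0$ simultaneously for all $i$; this is precisely where $q\ge n$ enters. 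I would seek $f=\sum_{j=1}^n c_j f_j$ with $c=(c_1,\dots,c_n)\in\fq^n$: the condition $\rho_i(f)\neq0$ fails only on the hyperplane $H_i=\{c:\sum_j c_j\rho_i(f_j)=0\}$, which is proper since the coefficient $\rho_i(f_i)$ of $c_i$ is nonzero. Each $H_i$ passes through the origin, so the clean count is projective: the $n$ hyperplanes cover at most $n\cdot\tfrac{q^{n-1}-1}{q-1}$ of the $\tfrac{q^{n}-1}{q-1}$ points of $\mathbb{P}^{n-1}(\fq)$, and the inequality $q^{n}-1>n(q^{n-1}-1)$, equivalently $q^{n-1}(q-n)>-(n-1)$, holds whenever $q\ge n$ and $n\ge2$. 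Hence a projective point, and thus a vector $c$, avoids all $H_i$, and the associated $f$ realizes $\negalpha$, so $\negalpha\in\widehat{H}(\Q)$. The delicate point to get right is exactly this projective count, since the naive affine union bound only yields the strict inequality $q>n$.
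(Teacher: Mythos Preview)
Your argument is correct. Note, however, that the paper does not supply its own proof of this proposition: it is quoted verbatim from \cite[Proposition~2.2]{MTT2019} (which in turn rests on \cite[Proposition~2]{BT2006}), so there is no in-paper proof to compare against. Your approach---reducing (ii) to the observation that a witness for $\nabla_i^n(\negalpha)\neq\emptyset$ is exactly an element of $\cL(D_\negalpha(\Q))\setminus\cL(D_\negalpha(\Q)-Q_i)$, and handling the converse of (i) by a linear-combination/hyperplane-avoidance argument over $\fq$---is precisely the standard one used in those references. Your projective count to squeeze out the case $q=n$ (rather than just $q>n$) is a nice touch and is indeed the right way to get the sharp bound; an alternative with the same effect is to note that at least two of the $H_i$ share the origin, so the affine union has size at most $nq^{n-1}-(n-1)<q^n$ when $q\ge n$.
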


The following concept, introduced by Duursma and Park \cite{DP2012}, is important for characterizing maximal elements.

\begin{definition}
Let $P_1$ and $P_2$ be distinct rational places on $F$. A divisor $A\in \Div(F)$ is called a discrepancy with respect to $P_1$ and $P_2$ if $\cL(A)\neq \cL(A-P_2)$ and $\cL(A-P_1)=\cL(A-P_1-P_2)$.
\end{definition}

The next technical lemma will be useful in the computations with discrepancies.

\begin{lemma}\cite[Noether’s Reduction Lemma]{F1969} \label{lemma noether}
Let $D\in \Div(F)$, $P$ be a place on $F$, and $W$ be a canonical divisor on $F$. If $\ell(D)>0$ and $\ell(W-D-P) \neq \ell(W-D)$, then $\ell(D+P)=\ell(D)$.
\end{lemma}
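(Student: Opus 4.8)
The plan is to derive the identity from the Riemann--Roch theorem together with the elementary observation that adjoining a single rational place raises the dimension of a Riemann--Roch space by at most one. Since the argument is short, there is no genuine obstacle to overcome; the one point that must be handled with care is reading the hypothesis $\ell(W-D-P)\neq\ell(W-D)$ correctly as a statement about a dimension jump and tracking the resulting sign.

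First I would record the standard inequality $\ell(A-P)\le \ell(A)\le \ell(A-P)+1$ for an arbitrary divisor $A$ and a rational place $P$. This follows by evaluating the leading Laurent coefficient at $P$: it defines a linear map $\cL(A)\to K$ whose kernel is exactly $\cL(A-P)$, so the two dimensions differ by $0$ or $1$. Next I would apply Riemann--Roch, $\ell(B)=\deg B+1-g+\ell(W-B)$, to $B=D$ and to $B=D+P$. Because $P$ is rational, $\deg(D+P)=\deg D+1$, and subtracting the two instances gives
$$\ell(D+P)-\ell(D)=1+\bigl(\ell(W-D-P)-\ell(W-D)\bigr).$$
Here the hypothesis enters: applying the inequality of the previous step to $A=W-D$ shows that $\ell(W-D-P)-\ell(W-D)\in\{-1,0\}$, and since it is assumed nonzero it must equal $-1$. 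Substituting yields $\ell(D+P)=\ell(D)$, as claimed.

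I would then note that the hypothesis $\ell(D)>0$ is not consumed by this computation; it belongs to the classical geometric formulation, where one asks the linear system $|D|$ to be nonempty, but the Riemann--Roch argument goes through without it. For a proof that exhibits the mechanism of the lemma directly, I would also sketch the residue-theoretic version, which is closest to Noether's original reasoning. Assuming for contradiction that $\ell(D+P)=\ell(D)+1$, one picks $f\in\cL(D+P)\setminus\cL(D)$, which forces $v_P(f)=-v_P(D)-1$ while $v_Q(f)\ge -v_Q(D)$ for $Q\neq P$; from $\ell(W-D)\neq\ell(W-D-P)$ one obtains a differential $\omega$ with $(\omega)\ge D$ and $v_P(\omega)=v_P(D)$. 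Then $f\omega$ is regular away from $P$ and has a simple pole at $P$, so its single nonzero residue contradicts the Residue Theorem. This makes transparent why the condition on $W-D-P$ is exactly the one needed, and it is the form I would present as the conceptual heart of the statement, with the Riemann--Roch computation above as the clean, self-contained verification.
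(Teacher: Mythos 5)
The paper does not prove this lemma; it is quoted verbatim from Fulton's book, so there is no internal argument to compare against. Your Riemann--Roch derivation is correct: from $\ell(D+P)-\ell(D)=1+\bigl(\ell(W-D-P)-\ell(W-D)\bigr)$ and the fact that the bracketed difference lies in $\{-1,0\}$, the hypothesis forces it to be $-1$, giving the claim. Your observation that $\ell(D)>0$ is not needed is also right; that hypothesis is an artifact of the source, where the lemma is established \emph{before} (and as a step toward) Riemann--Roch, so a proof in that setting cannot invoke it --- in the present paper, which assumes Riemann--Roch throughout, your shorter argument is legitimate and non-circular. The residue-theoretic sketch is likewise sound: for $f\in\cL(D+P)\setminus\cL(D)$ and a differential $\omega$ with $(\omega)\geq D$ and $v_P(\omega)=v_P(D)$, the product $f\omega$ has $v_P(f\omega)=-1$ and is regular elsewhere, so its unique nonzero residue contradicts the Residue Theorem.
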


The following results establishes equivalences using the notion of discrepancy.

\begin{proposition}\cite[Proposition 3]{TT2019FF} \label{prop equiv discrepancia}
Let $\negalpha= (\alpha_1, \ldots , \alpha_{n}) \in \mathbb{Z}^n$ and assume that $q\geq n$. The following statements are equivalent:
\begin{enumerate}[\rm (i)]
\item $\negalpha \in \widehat{\Gamma}(\mathbf{Q})$;
\item $D_{\negalpha}(\mathbf{Q})$ is a discrepancy with respect to any pair of distinct places in $\{Q_1,\ldots,Q_{n}\}$.
\end{enumerate}
\end{proposition}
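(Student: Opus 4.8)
The plan is to prove both implications by translating absolute maximality and the discrepancy condition into statements about dimensions of Riemann--Roch spaces, and then comparing them along the chains obtained by successively subtracting the places $Q_k$. Write $D:=D_{\negalpha}(\Q)$. The recurring elementary fact I will use is that subtracting a single rational place drops the dimension of a Riemann--Roch space by at most one, so that equal dimensions of two nested spaces force the spaces themselves to coincide. The two main tools are Proposition \ref{prop 2.1.3 da tese}(i), characterizing $\negalpha\in\widehat{H}(\Q)$ by the equalities $\ell(D)=\ell(D-Q_k)+1$ for all $k$, and Proposition \ref{absmax}, characterizing $\negalpha\in\widehat{\Gamma}(\Q)$ (for $\negalpha\in\widehat{H}(\Q)$) by the single equality $\ell(D)=\ell\bigl(D-\sum_{k=1}^n Q_k\bigr)+1$.

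For (i)$\Rightarrow$(ii), assume $\negalpha\in\widehat{\Gamma}(\Q)$. Since $\widehat{\Gamma}(\Q)\subseteq\widehat{H}(\Q)$, Proposition \ref{prop 2.1.3 da tese}(i) gives $\ell(D)=\ell(D-Q_j)+1$ for every $j$, which is exactly the first defining condition of a discrepancy with respect to any ordered pair $(Q_i,Q_j)$. For the second condition, Proposition \ref{absmax} yields $\ell\bigl(D-\sum_k Q_k\bigr)=\ell(D)-1$. Fixing $i\neq j$ and using the divisor inequalities $D\geq D-Q_i\geq D-Q_i-Q_j\geq D-\sum_k Q_k$, I obtain $\ell(D)-1=\ell(D-Q_i)\geq \ell(D-Q_i-Q_j)\geq \ell\bigl(D-\sum_k Q_k\bigr)=\ell(D)-1$, which forces $\ell(D-Q_i)=\ell(D-Q_i-Q_j)$. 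Hence $D$ is a discrepancy with respect to every ordered pair of distinct places.

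For (ii)$\Rightarrow$(i), the first discrepancy condition applied to each $j$ (choosing any $i\neq j$) gives $\ell(D)=\ell(D-Q_j)+1$ for all $j$, so $\negalpha\in\widehat{H}(\Q)$ by Proposition \ref{prop 2.1.3 da tese}(i). It remains to verify the absolute-maximality equality. Fixing the index $1$, the second discrepancy condition for each pair $(Q_1,Q_k)$ with $k\neq 1$ gives $\ell(D-Q_1)=\ell(D-Q_1-Q_k)$, whence $\cL(D-Q_1)=\cL(D-Q_1-Q_k)$ by nestedness. Therefore every $f\in\cL(D-Q_1)$ satisfies $(f)+D-Q_1\geq Q_k$ for each $k\neq 1$; because the places $Q_k$ are pairwise distinct, these local conditions combine into $(f)+D-Q_1\geq \sum_{k\neq 1}Q_k$, i.e. $f\in\cL\bigl(D-\sum_{k=1}^n Q_k\bigr)$. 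This proves $\cL(D-Q_1)=\cL\bigl(D-\sum_k Q_k\bigr)$, and together with $\ell(D)=\ell(D-Q_1)+1$ it gives $\ell(D)=\ell\bigl(D-\sum_k Q_k\bigr)+1$, so $\negalpha\in\widehat{\Gamma}(\Q)$ by Proposition \ref{absmax}.

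The main obstacle is the step in (ii)$\Rightarrow$(i) where the vanishing conditions at the individual places $Q_k$ must be assembled into a single global lower bound on $(f)+D-Q_1$; this is precisely where the distinctness of the places is essential, and it is what upgrades the pairwise discrepancy hypotheses into the ``all places at once'' condition characterizing absolute maximality. Everything else is routine dimension bookkeeping, and notably the argument does not require invoking Lemma \ref{lemma noether} directly.
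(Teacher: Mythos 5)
Your proof is correct. Note first that the paper does not supply its own argument for this proposition --- it is quoted verbatim from \cite[Proposition 3]{TT2019FF} --- so there is no internal proof to compare against; what you have written is a legitimate self-contained derivation from the two other quoted results, Proposition \ref{prop 2.1.3 da tese}(i) and the equivalence (i)$\Leftrightarrow$(iv) of Proposition \ref{absmax}. Both directions check out: the sandwich $\ell(D)-1=\ell(D-Q_i)\geq\ell(D-Q_i-Q_j)\geq\ell\bigl(D-\sum_k Q_k\bigr)=\ell(D)-1$ correctly yields the second discrepancy condition, and in the converse the passage from $\cL(D-Q_1)=\cL(D-Q_1-Q_k)$ for all $k\neq 1$ to $\cL(D-Q_1)=\cL\bigl(D-\sum_k Q_k\bigr)$ is exactly the point where distinctness of the places is needed, since the divisor inequalities at different places are independent local conditions. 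One small remark: the definition of discrepancy is asymmetric in the two places, so ``any pair'' must be read as ``any ordered pair''; your proof uses that reading in both directions, and it is consistent with how the paper itself uses the notion (compare the proof of Proposition \ref{discrepancia m upla}, which verifies $\cL(A)\neq\cL(A-P_j)$ for every $j$ and $\cL(A-P_i)=\cL(A-P_i-P_j)$ for every ordered pair). Your observation that Lemma \ref{lemma noether} is not needed is also accurate --- that lemma is only required when one must \emph{produce} the equality $\cL(A-P_1)=\cL(A-P_1-P_2)$ from a function in a complementary space, as in the paper's later computations, not when it is already given by hypothesis.
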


\begin{proposition}\cite[Proposition 2.8]{TT2019} \label{relmax}
Let $\negalpha\in \mathbb{Z}^n$ and assume that $q \geq n$. The following statements are equivalent:
\begin{enumerate}[\rm (i)]
\item $\negalpha\in \widehat{\Lambda}(\Q)$;
%\item {\color{red} $\ell(\negalpha-\negei)=\ell(\negalpha-\negei-\negej)+1$ for every $i,j\in \{1,\ldots,m\}$ with $i\neq j$;}
\item $\nabla(\negalpha)=\emptyset$ and $\ell(D_{\negalpha}(\mathbf{Q}))=\ell(D_{\negalpha-\textbf{1}}(\mathbf{Q}))+n-1$;
%\item for $i,j\in I$ with $j\neq i$, $D_{\negalpha-\textbf{1}}+Q_i+Q_j$ is a discrepancy w.r.t. $Q_i$ and $Q_j$;
\item there exists $i\in I$ such that $D_{\negalpha-\textbf{1}}(\mathbf{Q})+Q_i+Q_j$ is a discrepancy w.r.t. $Q_i$ and $Q_j$, for any $j\neq i$;
%\item there exists $i\in I$ such that $\nabla_i(\negalpha)=\emptyset$ and $\nabla_{\{i,j\}}(\negalpha)\neq \emptyset$ for every $j\neq i$.
\end{enumerate}
\end{proposition}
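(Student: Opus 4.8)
The plan is to reduce all three conditions to statements about the increments of a single dimension function and then settle the equivalences by elementary counting plus one submodularity input. Write $I=\{1,\ldots,n\}$ and, for $S\subseteq I$, set
$$\phi(S):=\ell\Big(D_{\negalpha}(\Q)-\textstyle\sum_{i\notin S}Q_i\Big),$$
so that $\phi(\emptyset)=\ell(D_{\negalpha-\mathbf 1}(\Q))$ and $\phi(I)=\ell(D_{\negalpha}(\Q))$, and recall $\phi(S\cup\{k\})-\phi(S)\in\{0,1\}$ for $k\notin S$. First I would assemble a dictionary. By Proposition \ref{prop 2.1.3 da tese}(i), $\negalpha\in\widehat H(\Q)$ iff $\phi(I)=\phi(I\setminus\{k\})+1$ for all $k$; directly from the definitions, $\nabla(\negalpha)=\emptyset$ iff $\phi(\{k\})=\phi(\emptyset)$ for all $k$ (a function in $\cL(D_{\negalpha-\mathbf 1}(\Q)+Q_k)$ with pole order exactly $\alpha_k$ at $Q_k$ is precisely an element of $\overline\nabla_k(\negalpha)\cap\widehat H(\Q)$). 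The key and least routine entry is the translation of $\nabla_J(\negalpha)$: using the leading–coefficient functionals $\lambda_i$ on $\cL(D_{\negalpha}(\Q))$ (so $\ker\lambda_i=\cL(D_\negalpha(\Q)-Q_i)$), one gets that $\nabla_J(\negalpha)\neq\emptyset$ iff $\cL\big(D_{\negalpha-\mathbf1}(\Q)+\sum_{j\in J}Q_j\big)$ contains a function attaining pole order $\alpha_j$ at every $Q_j$, $j\in J$, simultaneously, which holds iff $\phi(J)=\phi(J\setminus\{j\})+1$ for every $j\in J$. Here $q\ge n$ is essential: when all these codimensions equal $1$, one must exhibit a single function avoiding the $|J|\le n-1$ proper subspaces $\cL(D_{\negalpha-\mathbf1}(\Q)+\sum_{j'\in J\setminus\{j\}}Q_{j'})$ at once, which is possible since a union of at most $n-1$ proper subspaces cannot cover the ambient space when $q\ge n$. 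Finally, unwinding the definition of discrepancy shows that $D_{\negalpha-\mathbf1}(\Q)+Q_i+Q_j$ is a discrepancy with respect to $Q_i,Q_j$ iff $\phi(\{i,j\})=\phi(\{i\})+1$ and $\phi(\{j\})=\phi(\emptyset)$.

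With this dictionary, $(i)\Leftrightarrow(ii)$ is pure counting. For $(i)\Rightarrow(ii)$ I would evaluate $\phi(I)-\phi(\emptyset)$ along one maximal chain $\emptyset=S_0\subset S_1\subset\cdots\subset S_n=I$: the first step contributes $0$ (maximality), the last contributes $1$ ($\negalpha\in\widehat H(\Q)$), and each of the $n-2$ intermediate steps contributes $1$ because $S_k$ (with $2\le|S_k|\le n-1$) satisfies $\nabla_{S_k}(\negalpha)\neq\emptyset$; summing gives $\phi(I)=\phi(\emptyset)+(n-1)$. For $(ii)\Rightarrow(i)$, maximality gives $\nabla(\negalpha)=\emptyset$; I would first deduce $\negalpha\in\widehat H(\Q)$ (if some step failed to increase, then $\phi(I\setminus\{k\})=\phi(\emptyset)+(n-1)$ would force every increment from $\emptyset$ to $I\setminus\{k\}$ to be $1$, contradicting $\phi(\{k'\})=\phi(\emptyset)$). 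Then, for $1\le|A|\le n-1$, the bounds $\phi(A)\le\phi(\{k\})+(|A|-1)=\phi(\emptyset)+|A|-1$ (any $k\in A$) and $\phi(A)\ge\phi(I\setminus\{k\})-(n-1-|A|)=\phi(\emptyset)+|A|-1$ (any $k\notin A$, using $\phi(I\setminus\{k\})=\phi(\emptyset)+(n-2)$) yield $\phi(A)=\phi(\emptyset)+|A|-1$, whence $\phi(J)=\phi(J\setminus\{j\})+1$ and $\nabla_J(\negalpha)\neq\emptyset$ for all $J$ with $2\le|J|\le n-1$.

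For $(ii)\Leftrightarrow(iii)$ I would reuse the formula $\phi(A)=\phi(\emptyset)+|A|-1$. Under $(ii)$ it gives, for any fixed $i$ and all $j\neq i$, both $\phi(\{i,j\})=\phi(\emptyset)+1=\phi(\{i\})+1$ and $\phi(\{j\})=\phi(\emptyset)$, i.e.\ the two discrepancy conditions, so $(iii)$ holds. Conversely, assume $(iii)$ for some $i$. The relations $\phi(\{i,j\})=\phi(\{i\})+1\le\phi(\{j\})+1=\phi(\emptyset)+1$ together with $\phi(\{i\})\ge\phi(\emptyset)$ force $\phi(\{i\})=\phi(\emptyset)$, so $\phi(\{k\})=\phi(\emptyset)$ for all $k$ and $\nabla(\negalpha)=\emptyset$. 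It remains to upgrade the pairwise equalities $\phi(\{i,j\})=\phi(\emptyset)+1$ to $\phi(I)=\phi(\emptyset)+(n-1)$; this is where I would invoke submodularity of $\phi$ (the increment $\ell(E+Q_k)-\ell(E)$ is nondecreasing as $E$ grows by distinct rational places, which follows from the same leading–coefficient description). Adding the $Q_j$ to $\{i\}$ one at a time, each increment is bounded below by $\phi(\{i,j\})-\phi(\{i\})=1$ and above by $1$, hence equals $1$, and $\phi(I)=\phi(\emptyset)+(n-1)$.

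I expect the main obstacle to be the $\nabla_J$ entry of the dictionary, namely translating the simultaneous pole-order condition into the codimension-one equalities $\phi(J)=\phi(J\setminus\{j\})+1$. This is exactly where the passage from two places to several places is genuinely harder: one must guarantee a single function realizing the maximal pole order at all places of $J$ at once, which forces the vector-space covering argument and the hypothesis $q\ge n$. The submodularity bootstrap in $(iii)\Rightarrow(ii)$ is the only other nonformal ingredient; everything else is increment counting along chains.
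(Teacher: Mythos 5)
This proposition is imported by the paper as a citation to \cite[Proposition 2.8]{TT2019}; the paper itself contains no proof of it, so there is no in-paper argument to compare yours against. Judged on its own, your proof is correct and self-contained. The dictionary is the right framework and all four entries check out: the translation of $\nabla(\negalpha)=\emptyset$ into $\phi(\{k\})=\phi(\emptyset)$ needs no covering argument (one place at a time), the translation of $\nabla_J(\negalpha)\neq\emptyset$ into the codimension-one equalities is exactly where $q\geq n$ enters via the fact that a vector space over $\fq$ is not a union of at most $q$ proper subspaces, and the discrepancy condition for $D_{\negalpha-\mathbf 1}(\Q)+Q_i+Q_j$ unwinds to precisely $\phi(\{i,j\})=\phi(\{i\})+1$ and $\phi(\{j\})=\phi(\emptyset)$. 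The chain-counting for (i)$\Leftrightarrow$(ii) is clean, and you correctly notice that (ii)$\Rightarrow$(i) must re-derive $\negalpha\in\widehat H(\Q)$ (which is built into the definition of maximality) before the sandwich $\phi(A)=\phi(\emptyset)+|A|-1$ can be run. The submodularity input for (iii)$\Rightarrow$(ii) is stated in the correct form: for divisors $A\leq B$ with $Q\notin\supp(B-A)$, $\ell(A+Q)-\ell(A)\leq\ell(B+Q)-\ell(B)$, and in your cube the place being added is never in the support of the difference, so the lemma applies. Two cosmetic points you may want to make explicit: in (ii)$\Rightarrow$(iii) the case $n=2$ has $\{i,j\}=I$, which falls outside the range $1\leq|A|\leq n-1$ of the derived formula but is covered directly by the hypothesis $\phi(I)=\phi(\emptyset)+1$; and the paper's ambient field is stated as the algebraic closure $K$ while the hypothesis $q\geq n$ only has content over $\fq$ itself --- your covering argument is the honest justification in the finite-field setting of \cite{BT2006,TT2019}, which is the setting actually intended.
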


For $\negbeta^1,\dots, \negbeta^s\in \Z^n$, the least upper bound of $\negbeta^1,\dots, \negbeta^s$ is defined as
$$
\lub(\negbeta^1,\dots , \negbeta^s):=(\max\{\be_1^1, \dots, \be_1^s\},\dots , \max\{\be_n^1, \dots, \be_n^s\})\in \Z^n.
$$
%It is known that the set $ \widehat{\Gamma}(\mathbf{Q})$ determines $ \widehat{H}(\mathbf{Q})$ in terms of least upper bounds, see \cite[Theorem 3.4]{MTT2019}. In this way, $ \widehat{\Gamma}(\mathbf{Q})$ can be seen as a \textit{generating set} of $ \widehat{H}(\mathbf{Q})$ in the sense of \cite{G2004}. We observe that, unlike the case of generating set for Weierstrass semigroups given in \cite{G2004}, the set $ \widehat{\Gamma}(\mathbf{Q})$ is not finite. Nevertheless, it is finitely determined as follows.\\

%\begin{theorem}\cite[Theorem 3.4]{MTT2019}
%Assume $q\geq n$. The generalized Weierstrass semigroup of $F$ at $\Q$ can be written as
%$$
%\widehat{H}(\Q)=\{\lub(\negbeta^1, \negbeta^2, \dots, \negbeta^n): \negbeta^l\in \widehat{\Gamma}(\Q)\text{ for }l\in I\}.
%$$
%\end{theorem}

Let $\pi_i$ be the smallest positive integer such that $\pi_i Q_i- \pi_iQ_{i-1}$ is a principal divisor on $F$ for $i=2, \dots, n$. This positive integer will be called the \textit{period} associated to $Q_{i-1}$ and $Q_i$. For $2\leq i \leq n$, let $\negeta^i\in \Z^n$ be the $n$-tuple whose $j$-th coordinate is given by
$$
\eta^i_j=\left\{\begin{array}{ll}
-\pi_i, & \text{if }j=i-1,\\
\pi_i, & \text{if }j=i,\\
0, & \text{otherwise}.
\end{array}\right.
$$
Moreover, define the sets
\begin{equation} \label{C}
\cC(\Q):=\{(\al_1, \al_2, \dots, \al_n)\in \Z^n : 0\leq \al_i < \pi_i \text{ for }i=2, \dots, n\}
\end{equation}
and 
\begin{equation} \label{Theta}
\Theta(\Q):=\Z\negeta^2+\Z\negeta^3+\cdots+\Z\negeta^n.
\end{equation}
The following theorem characterizes the sets of absolute and relative maximal elements of the generalized Weierstrass semigroup $\widehat{H}(\Q)$.

\begin{theorem}\cite[Theorem 3.7]{MTT2019}\label{theo_relativemaximal}
Assume that $q\geq n$. Then
$$
\widehat{\Gamma}(\Q)=(\cC(\Q)\cap \widehat{\Gamma}(\Q))+\Theta(\Q)
$$
and
$$
\widehat{\Lambda}(\Q)=(\cC(\Q)\cap \widehat{\Lambda}(\Q))+\Theta(\Q).
$$
\end{theorem}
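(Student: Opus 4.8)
The plan is to reduce everything to a single structural fact: the lattice $\Theta(\Q)$ acts on $\Z^n$ by translations that preserve the entire combinatorial structure attached to $\widehat{H}(\Q)$, and $\cC(\Q)$ is a fundamental domain for this action. I would work directly from Definition \ref{defi maximals} rather than through the Riemann--Roch characterizations. First I would establish that $\widehat{H}(\Q)$ is invariant under translation by $\Theta(\Q)$, i.e.\ $\widehat{H}(\Q)+\negtheta=\widehat{H}(\Q)$ for every $\negtheta\in\Theta(\Q)$. It suffices to check this for the generators $\negeta^i$. By the definition of the period $\pi_i$ there is a function $f_i\in F$ with $(f_i)=\pi_i Q_i-\pi_i Q_{i-1}$; since this is the full divisor of $f_i$, both $f_i$ and $f_i^{-1}$ lie in $R_\Q$, and the valuation vector $(-v_{Q_1}(f_i),\dots,-v_{Q_n}(f_i))$ equals $-\negeta^i$. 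Hence for any $g\in R_\Q\setminus\{0\}$ with valuation vector $\negbeta$, the products $gf_i^{\pm1}\in R_\Q\setminus\{0\}$ realize the valuation vectors $\negbeta\mp\negeta^i$, so $\pm\negeta^i$ translate $\widehat{H}(\Q)$ into itself and the claim follows. (This step needs no hypothesis on $q$.)

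Next I would show that the operators $\nabla_J$ are equivariant under these translations: for every $\negalpha\in\Z^n$, every $\negtheta\in\Theta(\Q)$, and every nonempty $J\subseteq I$ one has $\nabla_J(\negalpha+\negtheta)=\nabla_J(\negalpha)+\negtheta$. This is immediate, since $\overline{\nabla}_J(\negalpha+\negtheta)=\overline{\nabla}_J(\negalpha)+\negtheta$ follows at once from the defining equalities and strict inequalities, and intersecting with the now translation-invariant set $\widehat{H}(\Q)$ preserves the identity. Consequently each of the conditions $\negalpha\in\widehat{H}(\Q)$, $\nabla_J(\negalpha)=\emptyset$, and $\nabla_J(\negalpha)\neq\emptyset$ is unchanged when $\negalpha$ is replaced by $\negalpha+\negtheta$. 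Reading off Definition \ref{defi maximals}, absolute maximality and relative maximality are each characterized entirely by a collection of such conditions ranging over the proper nonempty subsets $J\subsetneq I$ together with membership in $\widehat{H}(\Q)$; hence both $\widehat{\Gamma}(\Q)$ and $\widehat{\Lambda}(\Q)$ are invariant under translation by $\Theta(\Q)$.

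Finally I would prove that $\cC(\Q)$ is a complete and irredundant set of representatives for $\Z^n/\Theta(\Q)$: every $\negalpha\in\Z^n$ decomposes uniquely as $\negalpha=\neggamma+\negtheta$ with $\neggamma\in\cC(\Q)$ and $\negtheta\in\Theta(\Q)$. For existence I would normalize the coordinates $\alpha_n,\alpha_{n-1},\dots,\alpha_2$ in this order, subtracting suitable multiples of $\negeta^n,\negeta^{n-1},\dots,\negeta^2$; since $\negeta^i$ alters only coordinates $i-1$ and $i$, each reduction leaves the already-normalized higher coordinates untouched and drives $\alpha_i$ into $[0,\pi_i)$. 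For uniqueness I would note that, after projecting away the first coordinate, the images of $\negeta^2,\dots,\negeta^n$ form a triangular matrix with nonzero diagonal $\pi_2,\dots,\pi_n$: comparing two representatives forces the difference of the two lattice vectors to have top projected coordinate both a multiple of $\pi_n$ and lying in $(-\pi_n,\pi_n)$, hence zero, and one descends to conclude all $\Theta$-coordinates vanish. With this in hand the theorem follows formally: given $\negalpha\in\widehat{\Gamma}(\Q)$, write $\negalpha=\neggamma+\negtheta$; the invariance of the previous paragraph gives $\neggamma=\negalpha-\negtheta\in\widehat{\Gamma}(\Q)$, so $\neggamma\in\cC(\Q)\cap\widehat{\Gamma}(\Q)$ and $\negalpha\in(\cC(\Q)\cap\widehat{\Gamma}(\Q))+\Theta(\Q)$, while the reverse inclusion is immediate from invariance. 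The identical argument applies verbatim to $\widehat{\Lambda}(\Q)$.

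I expect the main obstacle to be the fundamental-domain claim, specifically its uniqueness half, as it is the only place requiring genuine bookkeeping with the triangular structure of the $\negeta^i$ and the ranges $0\leq\alpha_i<\pi_i$. The two invariance statements, by contrast, are essentially formal once the multiplicative description of $\widehat{H}(\Q)$ via the functions $f_i^{\pm1}$ is in place.
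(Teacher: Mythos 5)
The paper does not actually prove this statement—it is imported verbatim from \cite[Theorem 3.7]{MTT2019}—so there is no in-paper argument to compare against; judged on its own, your proof is correct and complete, and it follows the natural (and, in substance, the cited source's) route. The three ingredients you isolate all hold exactly as claimed: $\widehat{H}(\Q)+\negeta^i=\widehat{H}(\Q)$ via multiplication by the functions $f_i^{\pm 1}$ realizing the principal divisors $\pi_iQ_i-\pi_iQ_{i-1}$, the translation equivariance $\nabla_J(\negalpha+\negtheta)=\nabla_J(\negalpha)+\negtheta$ (which makes Definition \ref{defi maximals} manifestly $\Theta(\Q)$-invariant), and the fact that $\cC(\Q)$ is a strict fundamental domain for $\Z^n/\Theta(\Q)$, with existence by reducing coordinates $n,n-1,\dots,2$ in that order and uniqueness by the triangularity of the $\negeta^i$ on the last $n-1$ coordinates. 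Your parenthetical remark is also accurate: the hypothesis $q\geq n$ is never used in this purely combinatorial argument, since the definitions of absolute and relative maximality refer only to $\widehat{H}(\Q)$ and the operators $\nabla_J$.
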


%\begin{lemma}\cite[Theorem 13.1]{LW2001}\label{lemma1}
%Let $k$ and $n$ be non-negative integers such that $n\geq 1$. Then the number of solutions of the equation
%$$
%k_1+\dots+k_n=k
%$$
%in non-negative integers is $\binom{k+n-1}{n-1}$.
%\end{lemma}

\subsection{AG codes} In \cite{S2009}, Goppa's construction of linear codes over a function field $F/\fq$ of genus $g$ is described as follows. Let $P_1,\ldots, P_N$ be pairwise distinct rational places in $F$ and $D:=P_1+\cdots +P_N$.  Consider  other  divisor  $G$ of $F$ such that $\supp(D)\;\cap\; \supp(G)=\emptyset$. Associated to the divisors $D$ and $G$ we have the differential algebraic geometry code (AG code) $C_{\Omega}(D,G)$ defined as 
$$
C_{\Omega}(D,G)=\{(\res_{P_1}(\omega),\dots, \res_{P_N}(\omega)): \omega\in \Omega(G-D)\}\subseteq \mathbb{F}_q^N,
$$
where $\res_{P_i}(\omega)$ is the residue of the Weil differential $\omega$ at $P_i$ for $i=1,\ldots,N$.
The positive integer $N$ is the length of the code $C_{\Omega}(D,G)$, its dimension as $\mathbb{F}_q$-vector space is denoted by $k_{\Omega}$, and its minimal distance by $d_{\Omega}$. The triple $[N, k_\Omega, d_\Omega]$ represent the parameters of $C_{\Omega}(D,G)$. These parameters are related by the so-called Singleton bound: $k_\Omega+d_\Omega \leq N+1$. 

\begin{proposition}\cite[Theorem 2.2.7]{S2009}\label{goppabound}
Given the AG code $C_{\Omega}(D, G)$ with parameters $[N, k_\Omega, d_\Omega]$, we have that if $2g-2<\deg(G)<N$, then
$$k_\Omega=N-\deg(G)-1+g\quad \text{and}\quad d_\Omega\geq \deg(G)-(2g-2).
$$ 
\end{proposition}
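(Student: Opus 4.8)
The plan is to prove the two assertions separately, using the Riemann--Roch theorem together with the standard duality between Weil differentials and the code $C_{\Omega}(D,G)$. Throughout I write $W$ for a canonical divisor on $F$ and $i(A) := \ell(W-A) = \dim \Omega(A)$ for the index of speciality of a divisor $A$, so that Riemann--Roch reads $\ell(A) = \deg(A) + 1 - g + i(A)$.

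\textbf{Dimension.} First I would realize $k_{\Omega}$ as the dimension of the image of the residue evaluation map $\omega \mapsto (\res_{P_1}(\omega), \ldots, \res_{P_N}(\omega))$ restricted to $\Omega(G-D)$. Its kernel consists precisely of those $\omega \in \Omega(G-D)$ whose residues at all of $P_1, \ldots, P_N$ vanish; since $\supp(D) \cap \supp(G) = \emptyset$, the condition $(\omega) \geq G-D$ allows at most a simple pole of $\omega$ at each $P_i$, and the residue there vanishes exactly when $\omega$ has no pole, so the kernel is $\Omega(G)$. Hence $k_{\Omega} = i(G-D) - i(G)$. The hypothesis $\deg(G) > 2g-2$ gives $\deg(W-G) < 0$, so $i(G) = \ell(W-G) = 0$. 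To evaluate $i(G-D)$ I would apply Riemann--Roch to $G-D$: because $\deg(G-D) = \deg(G) - N < 0$ we have $\ell(G-D) = 0$, whence $i(G-D) = -\deg(G-D) - 1 + g = N - \deg(G) - 1 + g$. Combining the two computations yields $k_{\Omega} = N - \deg(G) - 1 + g$.

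\textbf{Minimum distance.} Let $c \neq 0$ be a codeword of minimal weight $d_{\Omega}$, say $c = (\res_{P_1}(\omega), \ldots, \res_{P_N}(\omega))$ with $\omega \in \Omega(G-D)$, and let $P_{i_1}, \ldots, P_{i_{d_{\Omega}}}$ be the places at which the residue is nonzero. Put $D' := P_{i_1} + \cdots + P_{i_{d_{\Omega}}}$. At each of the remaining $N - d_{\Omega}$ places the residue vanishes, so $\omega$ has no pole there, which upgrades the local conditions exactly to $(\omega) \geq G - D'$; that is, $\omega \in \Omega(G - D')$. Since $\omega \neq 0$, we get $\Omega(G - D') \neq 0$, so $W' := (\omega)$ is a canonical divisor with $W' \geq G - D'$, and therefore $\deg(G - D') \leq \deg(W') = 2g-2$. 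This gives $\deg(G) - d_{\Omega} \leq 2g-2$, i.e. $d_{\Omega} \geq \deg(G) - (2g-2)$.

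\textbf{Main obstacle.} The Riemann--Roch bookkeeping is routine; the delicate point is the differential-theoretic input. One must justify carefully that the kernel of the residue map on $\Omega(G-D)$ is \emph{exactly} $\Omega(G)$, and that a codeword supported on $d_{\Omega}$ places arises from a differential lying in $\Omega(G - D')$. Both facts rest on the residue theorem and on the local description of $\Omega(A)$ as the Weil differentials $\omega$ with $(\omega) \geq A$, together with the standard consequence that a nonzero differential in $\Omega(A)$ forces $\deg(A) \leq 2g-2$. Thus the real work lies in setting up this duality correctly, rather than in any hard estimate.
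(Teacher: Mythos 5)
Your argument is correct: the paper offers no proof of this proposition, citing it directly from Stichtenoth \cite[Theorem 2.2.7]{S2009}, and your computation of $k_\Omega$ via the residue map with kernel $\Omega(G)$ together with the degree bound $\deg(G-D')\leq 2g-2$ for the minimum distance is precisely the standard proof given there. The only cosmetic caveat is that the distance bound is vacuous in the degenerate case $k_\Omega=0$, which your hypotheses already essentially exclude.
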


The \textit{information rate} $ R = \dfrac{k_\Omega}{N}$ and the \textit{relative minimum distance} $\delta = \dfrac{d_\Omega}{N}$ allow us to compare codes with different parameters. These parameters satisfy the inequality 
$$
R + \delta \geq 1 - \dfrac{g-1}{N}.
$$ 
It is known from coding theory that the closer $R + \delta$ is to $1$, the greater the code's capacity to transmit words and correct errors.

Now, we present a result that can be used to improve the lower bound for the minimum distance of $C_{\Omega}(D, G)$.

\begin{theorem}\cite[Theorem 3.4]{CT2005}\label{TorresCarvalho}
Let $P_1,\dots, P_N, Q_1,\dots, Q_n$ be pairwise distinct $\mathbb{F}_q$-rational places on the function field $F/\fq$ of genus $g$. Let $(\alpha_1,\dots, \alpha_n),(\beta_1,\dots, \beta_n)$ in $\mathbb{N}^n$ be such that $\alpha_i\leq \beta_i$ for $i=1,\dots, n$. Suppose each $n$-tuple $(\gamma_1,\dots, \gamma_n)$ with $\alpha_i\leq \gamma_i\leq \beta_i$ for $i=1,\dots, n$ is a pure gap at $Q_1,\dots, Q_n$. Consider the divisors  $D=P_1+\cdots+P_N$ and $G=\sum_{i=1}^n(\alpha_i+\beta_i-1)Q_i$. Then the minimum distance $d_\Omega$ of the code $C_\Omega(D,G)$ satisfy
$$ 
d_\Omega\geq \deg(G)-(2g-2)+n+\sum_{i=1}^{n}(\be_i-\al_i).
$$
\end{theorem}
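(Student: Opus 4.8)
The plan is to sharpen the Goppa bound of Proposition~\ref{goppabound} by analysing a minimum-weight codeword. Suppose $c\neq 0$ is a codeword of $C_\Omega(D,G)$ of weight $w=d_\Omega$, so that $c=(\res_{P_1}(\omega),\dots,\res_{P_N}(\omega))$ for some nonzero $\omega\in\Omega(G-D)$. Put $T=\{i:\res_{P_i}(\omega)\neq 0\}$ and $E=\sum_{i\in T}P_i$, so that $\deg E=w$. For $i\notin T$ the vanishing of the residue together with $(\omega)\geq G-D$ (coefficient $-1$ at $P_i$) forces $v_{P_i}(\omega)\geq 0$, whence $(\omega)\geq G-E$; in particular $\omega$ is regular off $T$ with simple poles exactly on $T$ and $v_{Q_i}(\omega)\geq\alpha_i+\beta_i-1$ for every $i$. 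Writing $W:=(\omega)$ for the resulting canonical divisor, the divisor $A:=W-G+E$ is effective and $\ell(A)\geq 1$. Since $\deg A=(2g-2)-\deg G+w$, the asserted inequality is exactly equivalent to the degree estimate $\deg A\geq n+\sum_{i=1}^{n}(\beta_i-\alpha_i)$. Thus everything reduces to showing that this effective divisor cannot be too small, and this is where the pure-gap hypothesis enters.

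First I would convert the hypothesis into dimension data. Write $D_\gamma=\sum_{i=1}^{n}\gamma_i Q_i$, and set $D_\alpha=\sum\alpha_iQ_i$, $D_\beta=\sum\beta_iQ_i$. By definition each $\gamma$ in the box is a pure gap, i.e. $\ell(D_\gamma)=\ell(D_\gamma-Q_j)$ for all $j$. Telescoping along a monotone lattice path from $\beta$ down to $\alpha$ (each step $\gamma\mapsto\gamma-e_j$ preserves the dimension because the upper endpoint $\gamma$ lies in the box and is a pure gap) yields $\ell(D_\beta)=\ell(D_\alpha)$; comparing the two Riemann--Roch equalities then gives $\dim\Omega(D_\alpha)=\dim\Omega(D_\beta)+\sum_{i}(\beta_i-\alpha_i)$. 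The remaining $n$ units come from the pure gap at the extreme tuple $\alpha$: the $n$ equalities $\ell(D_\alpha)=\ell(D_\alpha-Q_j)$, $j=1,\dots,n$, translate into $n$ independent jumps $\dim\Omega(D_\alpha-Q_j)=\dim\Omega(D_\alpha)+1$. Together these record a total of $n+\sum_i(\beta_i-\alpha_i)$ dimension increments among differential spaces supported at the places $Q_1,\dots,Q_n$.

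Finally one must couple this count to the minimum-weight differential $\omega$. The hard part will be exactly this coupling: the pure-gap data live entirely on $Q_1,\dots,Q_n$, whereas the weight $w$ and the divisor $E$ live on the disjoint places $P_1,\dots,P_N$, so the delicate point is to transfer the $n+\sum_i(\beta_i-\alpha_i)$ increments recorded at the $Q_i$ into a genuine lower bound for $\deg E$, equivalently for $\deg A$. I would attempt this either by an induction that enlarges the box one unit at a time in a single coordinate --- at each stage comparing $C_\Omega(D,G)$ with the subcode obtained by replacing $G$ by $G+Q_j$ and reducing the gain to the Goppa bound --- or by a direct contradiction: assuming $w\leq\deg G-(2g-2)+n+\sum_i(\beta_i-\alpha_i)-1$ makes $\deg A\leq n+\sum_i(\beta_i-\alpha_i)-1$, and I would show that such a small effective $A$, paired against functions drawn from the successive Riemann--Roch spaces $\cL(D_\gamma)$ via the residue theorem applied to $f\omega$, produces a function realising one of the forbidden pole orders and hence violating an equality $\ell(D_\gamma)=\ell(D_\gamma-Q_j)$ for some $\gamma$ in the box.
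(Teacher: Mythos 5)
This statement is quoted by the paper from \cite[Theorem 3.4]{CT2005} and is not proved in the paper itself, so there is no internal proof to compare against; I am assessing your argument on its own terms. Your setup is correct and standard: a minimum-weight codeword arises from some nonzero $\omega\in\Omega(G-D)$, the vanishing of the residues off $T$ forces $(\omega)\geq G-E$ with $\deg E=d_\Omega$, and the asserted bound is indeed equivalent to $\deg\bigl((\omega)-G+E\bigr)\geq n+\sum_{i}(\beta_i-\alpha_i)$. The telescoping of the pure-gap hypothesis along a monotone lattice path to obtain $\ell(D_\beta)=\ell(D_\alpha)$ is also correct.

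The proof is nonetheless incomplete, and the gap is exactly where you flag it: the ``coupling'' step is never carried out, and neither of the two strategies you sketch is executed or specified precisely enough to be checked. The obstruction is genuine, not cosmetic. The pure-gap data are statements about the spaces $\cL(D_\gamma)$, supported only at $Q_1,\dots,Q_n$, whereas the effective divisor $A=(\omega)-G+E$ involves $E$ and the zero divisor of the particular differential $\omega$; the dimension increments you record at the $Q_i$ place no constraint whatsoever on $\omega$ or on $\deg E$, so as written your argument yields only the unimproved Goppa bound. One sufficient form of the missing step --- and the substance of the theorem --- is the implication that every nonzero $\omega\in\Omega(G-E)$ in fact lies in $\Omega\bigl(G+\sum_{j}(\beta_j-\alpha_j+1)Q_j-E\bigr)=\Omega(2D_{\negbeta}(\Q)-E)$, i.e.\ that the pure-gap hypothesis forces $v_{Q_j}(\omega)\geq 2\beta_j$ rather than merely $\alpha_j+\beta_j-1$; the ordinary degree argument applied to $2D_{\negbeta}(\Q)-E$ then gives precisely $d_\Omega\geq \deg(G)-(2g-2)+n+\sum_i(\beta_i-\alpha_i)$. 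Your appeal to ``the residue theorem applied to $f\omega$'' gestures at the relevant duality but does not produce the function with a forbidden pole order that a contradiction would require, so the argument cannot be accepted as a proof in its current form.
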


\section{On maximals elements with nonnegative coordinates}

In this section we study the set of absolute and relative maximal elements on the Weierstrass semigroup $H(\Q)$ when the periods $\pi_i$ associated to the places $Q_{i-1}$ and $Q_i$ are all the same, for each $i=2,\ldots,n$. That is, there is a positive integer $\pi$ such that $\pi=\pi_{2}=\pi_3=\dots=\pi_n$. 

Set $\Gamma(\Q):=\widehat{\Gamma}(\Q)\cap \N_0^n$ and $\Lambda(\Q):=\widehat{\Lambda}(\Q)\cap \N_0^n$. All the results in this section are true for both sets $\Gamma(\Q)$ and $\Lambda(\Q)$. Thus, through this section we will use the notation $\Upsilon$, which can be replaced by $\Gamma$ or $\Lambda$.

For $k_1, \dots, k_n$ nonnegative integers define 
$$
\Upsilon_{k_1, \dots, k_n}:=\Upsilon(\Q)\cap \prod_{i=1}^{n}[k_i\pi, (k_i+1)\pi)
$$
and
$$
\w_{k_2, \dots, k_n}:=\left(-\pi\sum_{i=2}^{n}k_i, k_2\pi, \dots, k_n\pi\right).
$$
Therefore,
\begin{equation}\label{Decomposition_Upsilon}
\Upsilon(\Q)=\bigcup_{0\leq k_1, \dots, k_n}\Upsilon_{k_1, \dots, k_n}.
\end{equation}

\begin{remark} \label{empty}
Note that, if $(k_1, \ldots, k_n) \neq (j_1,\ldots , j_n)$, then $\Upsilon_{k_1, \ldots, k_n} \cap \Upsilon_{j_1, \ldots, j_n} = \emptyset$. So, the above union is disjoint. Furthermore, is well-known from Weierstrass semigroup's theory (see e.g. \cite[Lemma 2.2]{CT2005} and Riemann-Roch's Theorem in \cite{S2009}) that $a_1+ \cdots + a_n\geq 2g$ implies $(a_1, \ldots, a_n) \in H(\Q)$. So, we have that if $k_1+\cdots + k_n\geq \ceil*{\frac{2g-1}{\pi}}$, then $\Upsilon_{k_1, \ldots, k_n}=\emptyset$.
\end{remark}

In the follow we show that $\Upsilon(\Q)$ is completely determined by the sets of the form $\Upsilon_{k, 0, \dots, 0}$ with $k\geq 0$.
\begin{proposition}\label{Gammasymmetry}
Let $k_1, \dots, k_n\geq 0$. Then
$$
\Upsilon_{k_1, \dots, k_n}=\Upsilon_{k_1+\dots +k_n, 0, \dots, 0}+\w_{k_2, \dots, k_n}.
$$
In particular,
$$
|\Upsilon(\Q)|=\sum_{0\leq k}^{\ceil*{\frac{2g-1}{\pi}} -1}\binom{k+n-1}{ n-1}|\Upsilon_{k, 0, \dots, 0}|.
$$
\end{proposition}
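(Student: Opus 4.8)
The plan is to exploit the translation structure of $\widehat{\Upsilon}(\Q)$ provided by Theorem \ref{theo_relativemaximal}, where $\widehat{\Upsilon}$ denotes $\widehat{\Gamma}$ or $\widehat{\Lambda}$ and $\Upsilon(\Q)=\widehat{\Upsilon}(\Q)\cap\N_0^n$. The first step is to observe that each vector $\w_{k_2,\dots,k_n}$ lies in the lattice $\Theta(\Q)$. Indeed, since all periods coincide, $\negeta^i$ has $-\pi$ in coordinate $i-1$ and $\pi$ in coordinate $i$, and a direct telescoping computation shows
$$
\w_{k_2,\dots,k_n}=\sum_{i=2}^{n}(k_i+k_{i+1}+\cdots+k_n)\,\negeta^i\in\Theta(\Q).
$$
Because $\Theta(\Q)$ is a subgroup of $\Z^n$, the structure result $\widehat{\Upsilon}(\Q)=(\cC(\Q)\cap\widehat{\Upsilon}(\Q))+\Theta(\Q)$ yields $\widehat{\Upsilon}(\Q)+\theta=\widehat{\Upsilon}(\Q)$ for every $\theta\in\Theta(\Q)$; in particular $\widehat{\Upsilon}(\Q)$ is invariant under translation by $\pm\w_{k_2,\dots,k_n}$.

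Next I would prove the set identity by a double inclusion, tracking coordinates box by box. Write $K=k_1+\cdots+k_n$. Translation by $-\w_{k_2,\dots,k_n}$ raises the first coordinate by $\pi\sum_{i\ge 2}k_i$ and lowers coordinate $j\ge 2$ by $k_j\pi$. Hence if $\negalpha\in\Upsilon_{k_1,\dots,k_n}$, so that $\alpha_i\in[k_i\pi,(k_i+1)\pi)$ for all $i$, then $\negalpha-\w_{k_2,\dots,k_n}$ has first coordinate in $[K\pi,(K+1)\pi)$ and remaining coordinates in $[0,\pi)$; moreover all coordinates remain nonnegative, since the first one only increases while $\alpha_j\ge k_j\pi$ prevents the others from dropping below $0$. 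Combined with the invariance above, this places $\negalpha-\w_{k_2,\dots,k_n}$ in $\Upsilon_{K,0,\dots,0}$, giving one inclusion. The reverse inclusion follows from the identical computation applied to $+\w_{k_2,\dots,k_n}$, where now the lower bound $\beta_1\ge K\pi=(k_1+\sum_{i\ge2}k_i)\pi$ is exactly what keeps the first coordinate nonnegative after the shift.

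For the cardinality formula, the decomposition \eqref{Decomposition_Upsilon} is a disjoint union by Remark \ref{empty}, so $|\Upsilon(\Q)|=\sum_{k_1,\dots,k_n\ge0}|\Upsilon_{k_1,\dots,k_n}|$. Since translation by $\w_{k_2,\dots,k_n}$ is a bijection, the identity just proved gives $|\Upsilon_{k_1,\dots,k_n}|=|\Upsilon_{k_1+\cdots+k_n,0,\dots,0}|$. Grouping the terms according to the value $k=k_1+\cdots+k_n$ and counting, by stars and bars, the $\binom{k+n-1}{n-1}$ tuples $(k_1,\dots,k_n)\in\N_0^n$ with $k_1+\cdots+k_n=k$, I obtain $|\Upsilon(\Q)|=\sum_{k\ge0}\binom{k+n-1}{n-1}|\Upsilon_{k,0,\dots,0}|$. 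Finally, Remark \ref{empty} forces $\Upsilon_{k,0,\dots,0}=\emptyset$ whenever $k\ge\ceil*{\frac{2g-1}{\pi}}$, truncating the sum at $k=\ceil*{\frac{2g-1}{\pi}}-1$ and yielding the stated expression.

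I expect the main obstacle to be the bookkeeping in the middle step: one must verify simultaneously that translation by $\w_{k_2,\dots,k_n}$ sends the correct product of half-open intervals onto the correct one \emph{and} that nonnegativity is preserved in both directions. This succeeds precisely because the boxes $\prod_i[k_i\pi,(k_i+1)\pi)$ are tiled so that the negative first coordinate of $\w_{k_2,\dots,k_n}$ is compensated by the lower bound $k_1\pi$ on the first coordinate; once this alignment is checked, membership in $\widehat{\Upsilon}(\Q)$ is automatic from the lattice invariance, and the remainder is a routine counting argument.
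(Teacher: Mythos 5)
Your proof is correct and follows essentially the same route as the paper's: both hinge on the telescoping identity $\w_{k_2,\dots,k_n}=\sum_{i=2}^{n}(k_i+\cdots+k_n)\negeta^i\in\Theta(\Q)$, the structure theorem $\widehat{\Upsilon}(\Q)=(\cC(\Q)\cap\widehat{\Upsilon}(\Q))+\Theta(\Q)$, the same interval bookkeeping, and the same stars-and-bars count truncated by Remark \ref{empty}. The only (harmless) difference is in the reverse inclusion, where you translate by $-\w_{k_2,\dots,k_n}$ and invoke lattice invariance directly, whereas the paper decomposes $\x=\y+\z$ with $\y\in\cC(\Q)\cap\widehat{\Upsilon}(\Q)$, $\z\in\Theta(\Q)$ and solves for the coefficients $t_i$.
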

\begin{proof}
Let $\x=((k_1+k_2+\cdots+k_n)\pi+b_1, b_2, \dots, b_n)\in \Upsilon_{k_1+ k_2+ \dots+ k_n, 0, \dots, 0}$, where $0\leq b_i<\pi$ for each $i\in \{ 1,\ldots,n \}$. Since
$$
\w_{k_2, \dots, k_n}=(k_2+\cdots+k_n)\negeta^2+(k_3+\cdots+k_n)\negeta^3+\cdots+(k_{n-1}+k_n)\negeta^{n-1}+k_n\negeta^n\in \Theta(\Q),
$$
we have, from Theorem \ref{theo_relativemaximal},
$$
\x+\w_{k_2, \dots, k_n}=(k_1\pi+b_1, k_2\pi+b_2, \dots, k_n\pi+b_n)\in \Upsilon_{k_1, \dots, k_n}.
$$
Conversely, let $\x=(k_1\pi+b_1, k_2\pi+b_2, \dots, k_n\pi+b_n)\in \Upsilon_{k_1, \dots, k_n}$, where $0\leq b_i<\pi$ for each $i\in \{1,\ldots,n\}$. From Theorem \ref{theo_relativemaximal}, there exist $\y=(\al_1, \al_2, \dots, \al_n)\in\widehat{\Upsilon}(\Q)$ with $\al_1\in \Z$ and $0\leq \al_i< \pi$ for each $i\in \{2,\ldots,n\}$, and $\z=t_2\negeta^2+t_3\negeta^3+\cdots+t_n\negeta^n$ with $t_i\in \Z$ for each $i\in \{2,\ldots,n\}$, such that $\x=\y+\z$. This implies that
$$
k_1\pi+b_1=\al_1-t_2\pi, \quad k_n\pi+b_n=\al_n+t_n\pi,\quad\text{and } 
$$ 
$$
k_i\pi+b_i=\al_i+(t_i-t_{i+1})\pi \text{ for }i=2, \dots, n-1.
$$
Hence, we obtain $t_i=k_i+k_{i+1}+\dots+k_n$ for $i=2, \dots, n$, $\al_i=\b_i$ for $i=2, \dots, n$, and $\al_1=(k_1+k_2+\cdots+k_n)\pi+b_1$, that is, 
$$
\y=((k_1+k_2+\cdots+k_n)\pi+b_1, b_2, \dots, b_n)\in \Upsilon_{k_1+\cdots+k_n, 0, \dots, 0}\quad \text{and}\quad \z=\w_{k_2, \dots, k_n}.
$$ 

To finish the proof, from (\ref{Decomposition_Upsilon}) we have 
$$
|\Upsilon(\Q)|=\sum_{0\leq k_1, \dots, k_n}|\Upsilon_{k_1, \dots, k_n}|=\sum_{0\leq k_1, \dots, k_n}|\Upsilon_{k_1+\dots +k_n, 0, \dots, 0}|=\sum_{\substack{0\leq k, k_1, \dots, k_n\\ k_1+\dots +k_n=k}}|\Upsilon_{k, 0, \dots, 0}|.
$$
From Remark \ref{empty} and using that  the number of solutions of the equation $k_1+\dots+k_n=k$
in nonnegative integers is $\binom{k+n-1}{n-1}$ (see \cite[Theorem 13.1]{LW2001}), we obtain
$$
|\Upsilon(\Q)|=\sum_{0\leq k}^{\ceil*{\frac{2g-1}{\pi}} -1}\binom{k+n-1}{n-1}|\Upsilon_{k, 0, \dots, 0}|.
$$
\end{proof}

Let $S_{n}$ be the set of permutations of the set  $\{1,\ldots, n\}$. For $\x=(x_1, x_2, \dots, x_n)\in \N_0^n$ and $\si\in S_n$, define $\si(\x):=(x_{\si(1)}, x_{\si(2)}, \dots, x_{\si(n)})$. As a consequence of the previous proposition we show that, under certain conditions, the set $\Upsilon(\Q)$ is invariant under permutations, that is, $\si(\Upsilon(\Q))=\Upsilon(\Q)$ for all $\si\in S_n$.

\begin{corollary}\label{permutationLambda}
Suppose that $x_1\equiv x_2 \equiv \cdots \equiv x_n \pmod{\pi}$ for each $\x=(x_1, x_2, \dots, x_n)\in \Upsilon(\Q)$. Then
$$
\negbeta\in \Upsilon(\Q)\quad \Rightarrow \quad \si(\negbeta)\in \Upsilon(\Q)\text{ for each }\si\in S_n.
$$
\end{corollary}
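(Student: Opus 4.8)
The plan is to exploit the structural decomposition of $\Upsilon(\Q)$ established in Proposition~\ref{Gammasymmetry}, which expresses every block $\Upsilon_{k_1,\dots,k_n}$ as a translate of the single ``base'' block $\Upsilon_{k_1+\cdots+k_n,0,\dots,0}$ by the vector $\w_{k_2,\dots,k_n}$. Since the total $k_1+\cdots+k_n$ is invariant under permuting the indices, the same base block governs every rearrangement of a given multi-index, so proving permutation invariance should reduce to checking that $\negbeta$ and $\si(\negbeta)$ produce the \emph{same} element of this common base block after removing the appropriate $\w$-translation.

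First I would fix $\negbeta\in\Upsilon(\Q)$ and write $\negbeta\in\Upsilon_{k_1,\dots,k_n}$, so that $\negbeta=(k_1\pi+b_1,\dots,k_n\pi+b_n)$ with $0\leq b_i<\pi$. Here the congruence hypothesis enters decisively: since all coordinates of $\negbeta$ are congruent modulo $\pi$, the residues must coincide, $b_1=\cdots=b_n=:b$, and hence $\negbeta=(k_1\pi+b,\dots,k_n\pi+b)$. Setting $K:=k_1+\cdots+k_n$, a direct computation using the definition of $\w_{k_2,\dots,k_n}$ gives
$$
\negbeta-\w_{k_2,\dots,k_n}=(K\pi+b,\,b,\dots,b),
$$
and by Proposition~\ref{Gammasymmetry} this vector lies in $\Upsilon_{K,0,\dots,0}$.

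Next I would apply an arbitrary $\si\in S_n$ and repeat the computation for $\si(\negbeta)=(k_{\si(1)}\pi+b,\dots,k_{\si(n)}\pi+b)$. Because $\sum_{i=1}^n k_{\si(i)}=K$, subtracting the translation associated with the permuted multi-index yields
$$
\si(\negbeta)-\w_{k_{\si(2)},\dots,k_{\si(n)}}=(K\pi+b,\,b,\dots,b),
$$
which is \emph{exactly} the same element of $\Upsilon_{K,0,\dots,0}$ obtained from $\negbeta$. Invoking Proposition~\ref{Gammasymmetry} in the reverse direction then places $\si(\negbeta)$ in $\Upsilon_{K,0,\dots,0}+\w_{k_{\si(2)},\dots,k_{\si(n)}}=\Upsilon_{k_{\si(1)},\dots,k_{\si(n)}}\subseteq\Upsilon(\Q)$, as desired.

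The argument is essentially bookkeeping once the right vantage point is chosen, so the only real obstacle is conceptual rather than computational: one must recognize that the hypothesis $x_1\equiv\cdots\equiv x_n\pmod{\pi}$ is precisely what forces the fractional parts $b_i$ to agree, and that this common residue $b$ is what makes $\negbeta$ and $\si(\negbeta)$ collapse to one and the same base element. Care should also be taken to keep the indices of $\w$ aligned with the permuted multi-index; the invariance of $K$ under $\si$ is what guarantees the base block is unchanged.
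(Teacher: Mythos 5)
Your proof is correct and follows essentially the same route as the paper's: both use the congruence hypothesis to force all residues to equal a common $b$, reduce $\negbeta$ and $\si(\negbeta)$ to the same base element $(K\pi+b,b,\dots,b)$ of $\Upsilon_{K,0,\dots,0}$ via Proposition~\ref{Gammasymmetry}, and then translate back by $\w_{k_{\si(2)},\dots,k_{\si(n)}}$. No gaps.
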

\begin{proof}
Suppose that $\negbeta\in \Upsilon_{k_1, \dots, k_n}$ for some $k_1, \dots, k_n \in \N_0$ and let $0\leq i < \pi$ be such that
$$
\negbeta=(k_1\pi+i, k_2\pi+i, \dots, k_n\pi+i).
$$
For each $\si\in S_{n}$ we have that
\begin{align*}
\si(\negbeta)&=(k_{\si(1)}\pi+i, k_{\si(2)}\pi+i, \dots, k_{\si(n)}\pi+i)\\
&=((k_{\si(1)}+k_{\si(2)}+\cdots+k_{\si(n)})\pi+i, i, \dots, i)+(-\pi\textstyle\sum_{i=2}^{n}k_{\si(i)}, k_{\si(2)}\pi, \dots, k_{\si(n)}\pi)\\
&=((k_1+k_2+\cdots+k_n)\pi+i, i, \dots, i)+\w_{\si(2), \si(3), \dots, \si(n)}.
\end{align*}
From Proposition \ref{Gammasymmetry} we have
$$
\negalpha:=((k_1+k_2+\cdots+k_n)\pi+i, i, \dots, i)\in \Upsilon_{k_1+\dots+k_n, 0, \dots, 0}=\Upsilon_{k_{\si(1)}+\dots+k_{\si(n)}, 0, \dots, 0},
$$
since $\negbeta \in \Upsilon_{k_1, \dots, k_n}$. Therefore, again from Proposition \ref{Gammasymmetry}, we conclude that 
$$
\si(\negbeta)=\negalpha+\w_{\si(2), \si(3), \dots, \si(n)}\in \Upsilon_{k_{\si(1)}, \dots, k_{\si(n)}}.
$$
\end{proof}

\section{On the Pure Gap Set}

In this section we study the pure gap set $G_0(\Q)$ of $F$ at $\Q$. Using the results from the previous section, we will present a complete description of the pure gap set in terms of the elements in the set $\Lambda^*(\Q):=\widehat{\Lambda}(\Q)\cap \N^n$ and an explicit expression for its cardinality.

For this, we start by defining the greatest lower bound of a set of elements in $\Z^n$. Given $\negbeta^1,\dots, \negbeta^s\in \Z^n$, we define their greatest lower bound by
$$
\glb(\negbeta^1,\dots , \negbeta^s):=(\min\{\be_1^1, \dots, \be_1^s\},\dots , \min\{\be_n^1, \dots, \be_n^s\})\in \Z^n.
$$

In the next result, we have a description of the pure gap set in terms of relative maximal elements.

\begin{theorem}\cite[Theorem 3.2]{TT2019}\label{puregapscharacterization}
Assume $q\geq n$. Then
$$
G_0(\Q)=\bigcup_{(\negbeta^1, \dots, \negbeta^n)\in \Lambda(\Q)^n}^{}\left(\bigcap_{i=1}^{n}\overline{\nabla}_i(\negbeta^i)\right).
$$
\end{theorem}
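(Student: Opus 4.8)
The plan is to prove the two inclusions after first translating the statement into the combinatorial language of the sets $\nabla_i^n$. Writing $\negalpha=(\al_1,\dots,\al_n)$, I would begin by recording that, for $\negalpha\in\N_0^n$, being a pure gap is equivalent to $\nabla_i^n(\negalpha)=\emptyset$ for every $i\in I$: by definition $\negalpha\in G_0(\Q)$ means $\ell(D_\negalpha(\Q))=\ell(D_\negalpha(\Q)-Q_i)$ for all $i$, which is exactly condition (ii) of Proposition \ref{prop 2.1.3 da tese}; moreover this system of equalities already forces $\negalpha\notin\widehat{H}(\Q)$ by Proposition \ref{prop 2.1.3 da tese}(i) (membership would require a $+1$ at every $i$), hence $\negalpha\notin H(\Q)$ since $H(\Q)\subseteq\widehat{H}(\Q)$, so the requirement $\negalpha\in G(\Q)$ is automatic. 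I would also note that the right-hand side lies automatically in $\N_0^n$: if $\negalpha\in\overline{\nabla}_i(\negbeta^i)$ with $\negbeta^i\in\Lambda(\Q)\subseteq\N_0^n$, then $\al_i=\be^i_i\ge 0$, and running this over all $i$ gives $\negalpha\ge\mathbf{0}$. Thus the two sides are comparable, and it suffices to match them.

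For the inclusion $\supseteq$ (the easy one), suppose $\negalpha\in\bigcap_{i=1}^n\overline{\nabla}_i(\negbeta^i)$ with each $\negbeta^i\in\Lambda(\Q)$, and fix $j\in I$. If $\nabla_j^n(\negalpha)\ne\emptyset$, choose $\neggamma\in\widehat{H}(\Q)$ with $\ga_j=\al_j$ and $\ga_k\le\al_k$ for $k\ne j$. Since $\negalpha\in\overline{\nabla}_j(\negbeta^j)$ we have $\al_j=\be^j_j$ and $\al_k<\be^j_k$ for $k\ne j$, whence $\ga_j=\be^j_j$ and $\ga_k\le\al_k<\be^j_k$ for $k\ne j$; that is, $\neggamma\in\overline{\nabla}_j(\negbeta^j)\cap\widehat{H}(\Q)=\nabla_j(\negbeta^j)\subseteq\nabla(\negbeta^j)$, contradicting the maximality of the relative maximal element $\negbeta^j$. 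Hence $\nabla_j^n(\negalpha)=\emptyset$ for all $j$, and by the reduction above $\negalpha\in G_0(\Q)$.

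The inclusion $\subseteq$ is the heart of the matter. Because the conditions indexed by different $i$ involve different elements $\negbeta^i$, it is enough to prove the per-index existence statement: if $\negalpha\in G_0(\Q)$ and $i\in I$ is fixed, then there is $\negbeta^i\in\Lambda(\Q)$ with $\be^i_i=\al_i$ and $\be^i_j>\al_j$ for all $j\ne i$. I would construct $\negbeta^i$ in two stages. First I locate an element of $\widehat{H}(\Q)$ lying over $\negalpha$ in the required sense: the projection of $\widehat{H}(\Q)$ onto the $i$-th coordinate is surjective, so the fiber $\{\,\neggamma\in\widehat{H}(\Q):\ga_i=\al_i\,\}$ is nonempty; and for each $j\ne i$ the cofiniteness of the one-point semigroup at $Q_j$ supplies a function regular away from $Q_j$ (hence with $i$-th coordinate $\le 0\le\al_i$) with pole order at $Q_j$ exceeding $\al_j$. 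Taking the $\lub$ of a fiber element with these functions, using that $\widehat{H}(\Q)$ is closed under $\lub$ when $q\ge n$, produces $\negga^{0}\in\widehat{H}(\Q)$ with $\ga^{0}_i=\al_i$ and $\ga^{0}_j>\al_j$ for all $j\ne i$. Second, staying inside the hyperplane $\ga_i=\al_i$, I would adjust the remaining coordinates down to a relative maximal element: concretely, I seek values $\be^i_j>\al_j$ so that $D_{\negbeta^i-\mathbf{1}}(\Q)+Q_i+Q_j$ is a discrepancy with respect to $Q_i$ and $Q_j$ for every $j\ne i$, and then invoke Proposition \ref{relmax}(iii) with the distinguished index $i$ to conclude $\negbeta^i\in\widehat{\Lambda}(\Q)$. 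The equalities $\ell(D_\negalpha(\Q))=\ell(D_\negalpha(\Q)-Q_i)$ furnished by the pure-gap hypothesis, propagated through Noether's Reduction Lemma \ref{lemma noether} and Proposition \ref{prop 2.1.3 da tese}, are what certify these discrepancy conditions while keeping the $i$-th coordinate pinned at $\al_i$.

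The main obstacle is precisely this second stage: producing a \emph{genuinely maximal} element whose $i$-th coordinate is pinned to $\al_i$, rather than merely a coordinatewise-minimal element above $\negalpha$. The subtlety is that maximality in the sense $\nabla(\negbeta^i)=\emptyset$ is not the same as minimizing the other coordinates; a naive minimization can stall at a point defeated through some $\nabla(\negbeta^i)$ via an index other than $i$, so one must argue that the pure-gap dimension equalities force the relevant Riemann--Roch dimensions to align, placing a relative (not absolute) maximal element exactly over $\negalpha$ in the $i$-th direction. This is where the bulk of the dimension counting, through Lemma \ref{lemma noether} and Propositions \ref{prop 2.1.3 da tese} and \ref{relmax}, would be spent, with the genus bound of Remark \ref{empty} ensuring that the descent terminates.
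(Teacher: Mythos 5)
First, a point of reference: the paper does not prove this statement at all — it is quoted verbatim from \cite[Theorem 3.2]{TT2019} — so there is no internal proof to compare yours against. Judged on its own terms, your reduction of ``pure gap'' to ``$\nabla_i^n(\negalpha)=\emptyset$ for every $i\in I$'' via Proposition \ref{prop 2.1.3 da tese} is correct (including the observation that the gap condition is then automatic), and your proof of the inclusion $\supseteq$ is complete and right: any $\neggamma\in\nabla_j^n(\negalpha)$ would lie in $\nabla_j(\negbeta^j)\subseteq\nabla(\negbeta^j)$, contradicting the maximality of $\negbeta^j$.

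The inclusion $\subseteq$, however, is not proved, and you say so yourself. What is needed is that for each $i$ there exists $\negbeta^i\in\widehat{\Lambda}(\Q)$ with $\be^i_i=\al_i$ and $\be^i_j>\al_j$ for all $j\neq i$. Your first stage (producing some $\neggamma^0\in\widehat{H}(\Q)$ in that region via surjectivity of the coordinate projection and closure of $\widehat{H}(\Q)$ under $\lub$ for $q\geq n$) is fine, but the second stage — descending from $\neggamma^0$ to an element that is genuinely relative maximal while keeping the $i$-th coordinate pinned at $\al_i$ and the others strictly above $\al_j$ — is exactly the content of the cited theorem, and you only describe what you \emph{would} do. Two things are left unverified: (a) that a suitably minimal element of $\{\neggamma\in\widehat{H}(\Q):\ga_i=\al_i \text{ and } \ga_j>\al_j \text{ for } j\neq i\}$ is maximal at all, i.e.\ that no witness of $\nabla(\neggamma)\neq\emptyset$ escapes this region through an index other than $i$ — this is where the hypotheses $\nabla_j^n(\negalpha)=\emptyset$ for the \emph{other} indices $j$ must enter, and your sketch never uses them; and (b) that the resulting maximal element is relative rather than absolute maximal: for $n\geq 3$ these notions differ, the union in the statement runs only over $\Lambda(\Q)$, and an absolute maximal element sitting over $\negalpha$ would not place $\negalpha$ in the right-hand side. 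The discrepancy computations via Lemma \ref{lemma noether} and Proposition \ref{relmax}(iii) that you defer are precisely the proof; as written, the hard direction remains an outline rather than an argument.
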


As a direct consequence, we obtain the following description of the pure gap set.

\begin{corollary}\label{coro_G0(Q)}
Assume $q\geq n$. Then
\begin{equation*}
G_0(\Q)=\left\{\glb(\negbeta^1, \dots, \negbeta^n): \negbeta^l\in \Lambda(\Q)\text{ and } \beta_l^l<\min\{\beta_l^1, \dots, \widehat{\beta_l^l}, \dots, \beta_l^n\}\text{ for }l\in I\right\}.
\end{equation*}
\end{corollary}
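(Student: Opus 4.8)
The plan is to derive Corollary~\ref{coro_G0(Q)} directly from the characterization in Theorem~\ref{puregapscharacterization} by making the intersection $\bigcap_{i=1}^{n}\overline{\nabla}_i(\negbeta^i)$ explicit. Recall from the definitions in Section~2 that $\overline{\nabla}_i(\negbeta^i)=\{\negalpha\in\Z^n:\alpha_i=\beta_i^i\text{ and }\alpha_j<\beta_j^i\text{ for }j\neq i\}$. So an element $\negalpha$ lies in the intersection precisely when, for every index $i\in I$, we have $\alpha_i=\beta_i^i$ and $\alpha_j<\beta_j^i$ for all $j\neq i$. The first part says the $i$-th coordinate of $\negalpha$ is pinned to the $i$-th coordinate of the $i$-th tuple $\negbeta^i$; in other words $\alpha_i=\beta_i^i$ for each $i$. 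The key observation I want to establish is that under the stated hypothesis on the tuples, this common value equals $\glb(\negbeta^1,\dots,\negbeta^n)$ coordinatewise.

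\emph{First} I would fix an $n$-tuple $(\negbeta^1,\dots,\negbeta^n)\in\Lambda(\Q)^n$ and analyze when the intersection $\bigcap_{i=1}^n\overline{\nabla}_i(\negbeta^i)$ is nonempty, and what its unique element is when it is. If $\negalpha$ lies in the intersection, then $\alpha_l=\beta_l^l$ (from the $i=l$ condition) and simultaneously $\alpha_l<\beta_l^i$ for every $i\neq l$ (from the $i\neq l$ conditions). Substituting $\alpha_l=\beta_l^l$ gives $\beta_l^l<\beta_l^i$ for all $i\neq l$, i.e.\ $\beta_l^l<\min\{\beta_l^1,\dots,\widehat{\beta_l^l},\dots,\beta_l^n\}$, which is exactly the hypothesis in the corollary. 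Conversely, when this strict-minimum condition holds for every $l\in I$, the point $\negalpha$ with $\alpha_l:=\beta_l^l$ satisfies all the defining inequalities of every $\overline{\nabla}_i(\negbeta^i)$, so it lies in the intersection and is clearly the only such point. Moreover, since $\beta_l^l$ is strictly the smallest among $\beta_l^1,\dots,\beta_l^n$, we have $\beta_l^l=\min\{\beta_l^1,\dots,\beta_l^n\}$, so $\alpha_l=\beta_l^l$ is exactly the $l$-th coordinate of $\glb(\negbeta^1,\dots,\negbeta^n)$. Hence the intersection equals the singleton $\{\glb(\negbeta^1,\dots,\negbeta^n)\}$ under the hypothesis.

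\emph{Then} I would assemble the two directions into the set equality. For the inclusion $\supseteq$: given tuples $\negbeta^l\in\Lambda(\Q)$ satisfying the strict-minimum condition, the computation above shows $\glb(\negbeta^1,\dots,\negbeta^n)\in\bigcap_{i=1}^n\overline{\nabla}_i(\negbeta^i)$, which is one of the sets in the union of Theorem~\ref{puregapscharacterization}, so $\glb(\negbeta^1,\dots,\negbeta^n)\in G_0(\Q)$. For $\subseteq$: any $\negalpha\in G_0(\Q)$ lies in some $\bigcap_{i=1}^n\overline{\nabla}_i(\negbeta^i)$ with each $\negbeta^i\in\Lambda(\Q)$; the nonemptiness forces the strict-minimum condition to hold (by the argument above), and then $\negalpha=\glb(\negbeta^1,\dots,\negbeta^n)$. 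This shows every element of $G_0(\Q)$ has the required form, completing the equality.

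\emph{The main obstacle}, if any, is purely bookkeeping rather than conceptual: one must be careful that the condition ``$\beta_l^l<\min\{\dots\}$'' delivers both the pinning $\alpha_l=\beta_l^l$ and the coincidence $\beta_l^l=\min\{\beta_l^1,\dots,\beta_l^n\}$ simultaneously, so that the pinned value is genuinely the $\glb$. The strictness of the inequality is what guarantees the intersection is a single point and that this point realizes the minimum in each coordinate; a non-strict inequality would break the compatibility between the $\alpha_i=\beta_i^i$ constraints coming from different indices. Once this is tracked carefully, the corollary follows immediately from Theorem~\ref{puregapscharacterization}, so I expect the proof to be short.
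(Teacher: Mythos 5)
Your proof is correct and takes essentially the same route as the paper: both start from Theorem~\ref{puregapscharacterization} and reduce everything to showing that $\bigcap_{i=1}^{n}\overline{\nabla}_i(\negbeta^{i})$ is the singleton $\{\glb(\negbeta^1,\dots,\negbeta^n)\}$ when $\beta_l^l<\min\{\beta_l^1,\dots,\widehat{\beta_l^l},\dots,\beta_l^n\}$ for all $l\in I$ and empty otherwise. The only difference is that the paper cites this intersection computation from an external remark, whereas you verify it directly from the definition of $\overline{\nabla}_i$ --- a correct and equally short verification.
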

\begin{proof}
Let $\negbeta \in \N^n$. From Theorem \ref{puregapscharacterization}, we have that $\negbeta \in G_0(\Q)$ if and only if there exist $\negbeta^1, \dots, \negbeta^n\in \Lambda(\Q)$ such that $\negbeta\in \bigcap_{i=1}^{n}\overline{\nabla}_i(\negbeta^i)$. However, from \cite[Remark 3.3]{TT2019} we have that
\begin{equation*}\label{Nabla}
\bigcap_{i=1}^{n}\overline{\nabla}_i(\negbeta^{i})=\left\{
\begin{array}{ll}
\{\glb(\negbeta^1, \dots, \negbeta^n)\}, & \text{if }\beta_l^l<\min\{\beta_l^1, \dots, \widehat{\beta_l^l},\dots, \be_l^n\}\text{ for }l\in I,\\
\emptyset, & \text{otherwise}.
\end{array} \right. 
\end{equation*}
This implies that $\negbeta\in G_0(\Q)$ if and only if there exist $\negbeta^1, \dots, \negbeta^n\in \Lambda(\Q)$ such that $\beta_l^l<\min\{\beta_l^1, \dots, \widehat{\beta_l^l},\dots, \be_l^n\}$ for $l\in I$ and $\negbeta=\glb(\negbeta^1, \dots, \negbeta^n)$.
\end{proof}

\begin{remark}
Let $\negbeta\in G_0(\Q)\subseteq \N^n$. Then there exist $\negbeta^1, \dots, \negbeta^n\in \Lambda(\Q)$ such that $\beta_l^l<\min\{\beta_l^1, \dots, \widehat{\beta_l^l}, \dots, \beta_l^n\}$ for $l\in I = \{1,\ldots,n\}$ and $\negbeta=\glb(\negbeta^1, \dots, \negbeta^n)=(\be_1^1, \dots, \be_n^n)\in \N^n$. Now, we claim that $\negbeta^l\in \Lambda^*(\Q)$ for each $l\in I$. Indeed, suppose for a moment that $\be_j^i=0$ for some $i, j\in I$. If $i=j$ then $\negbeta\notin \N^n$, a contradiction. If $i\neq j$ then
$$
\be_j^j<\min\{\be_j^1, \dots, \widehat{\be_j^j}, \dots, \be_j^n\}=0,
$$
a contradiction. Therefore $\be_j^i\geq 1$ for all $i, j\in I$. This implies that we can rewrite the pure gap set $G_0(\Q)$ as
%\begin{equation}\label{puregapset}
$$
G_0(\Q)=\left\{\glb(\negbeta^1, \dots, \negbeta^n): \negbeta^l\in \Lambda^*(\Q)\text{ and } \beta_l^l<\min\{\beta_l^1, \dots, \widehat{\beta_l^l}, \dots, \beta_l^n\}\text{ for }l\in I\right\}.
$$
%\end{equation}
%where $\Lambda^*(\Q)=\widehat{\Lambda}(\Q)\cap \N^n$.
\end{remark}

For $k_1, \dots, k_n$ nonnegative integers define the set  
$$
G_{k_1, \dots, k_n}:=G_0(\Q)\cap \prod_{i=1}^{n}[k_i\pi, (k_i+1)\pi).
$$
Therefore,
\begin{equation}\label{PureGapsDecomposition}
G_0(\Q)=\bigcup_{0\leq k_1, \dots, k_n}G_{k_1, k_2, \dots, k_n}.
\end{equation}

From Corollary \ref{coro_G0(Q)} we have $\negbeta\in G_{k_1, \dots, k_n}$ if and only if there exist elements $\negbeta^1, \dots, \negbeta^n\in \Lambda(\Q)$ such that 
\begin{itemize}
\item [$\bullet$] $\beta_l^l<\min\{\beta_l^1, \dots, \widehat{\beta_l^l}, \dots, \beta_l^n\}$ for $l\in I$,
\item [$\bullet$] $\negbeta= \glb(\negbeta^1, \dots, \negbeta^n)=(\be_1^1, \be_2^2, \dots, \be_n^n)$, and
\item [$\bullet$] $\be_l^l\in [k_l\pi, (k_l+1)\pi)$ for $l\in I$.
\end{itemize}
Suppose that $\negbeta^l\in \Lambda_{s_1^l, s_2^l, \dots, s_n^l}$ for each $l\in I$. Then $\be_j^l\in [s_j^l\pi, (s_j^l+1)\pi)$ for $j\in I$. Since $\be_l^l\in [k_l\pi, (k_l+1)\pi)$, we obtain $k_l=s_l^l$. Furthermore, since $\be_j^j< \be_j^l$ for $j\in I\setminus \{l\}$, we deduce that $s_j^l\geq s_j^j=k_j$. Thus we conclude that $\negbeta^l\in \Lambda_{s_1^l, \dots, s_n^l}$, where $k_j\leq s_j^l$ for $j\in I$ and $k_l=s_l^l$. Therefore, we can rewrite the set $G_{k_1,\dots, k_n}$ as
\begin{equation}\label{Gk1...km}
G_{k_1,\dots, k_n}=\left\{\glb(\negbeta^1, \dots, \negbeta^n): 
\begin{array}{l}
\negbeta^l\in \Lambda_{s^l_1, \dots, s^l_n}, \text{ where }k_j \leq  s_j^l \text{ for }j\in I \text{ and } k_l=s_l^l, \text{ and} \\
\beta_l^l<\min\{\beta_l^1, \dots, \widehat{\beta_l^l}, \dots, \beta_l^n\}\text{ for }l\in I
\end{array}\right\}.
\end{equation}

The following result shows that to determine $G_0(\Q)$ it is enough to determine the sets of the form  $G_{k, 0, \dots, 0}$ with $k\geq 0$.

\begin{proposition}\label{Puregapsymmetry}
Let $k_1,\dots, k_n \geq 0$. Then
$$
G_{k_1,\dots, k_n}=G_{k_1+\cdots+k_n, 0, \dots, 0}+\w_{k_2, \dots, k_n}. 
$$
In particular, 
\begin{equation}\label{CardinalityPuregaps}
|G_0(\Q)|= \sum_{0\leq k}\binom{k+n-1}{n-1}|G_{k, 0, \dots, 0}|.
\end{equation}
\end{proposition}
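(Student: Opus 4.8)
The plan is to follow the same strategy as in the proof of Proposition \ref{Gammasymmetry}, replacing the appeal to Theorem \ref{theo_relativemaximal} by the explicit description of $G_{k_1,\dots,k_n}$ recorded in (\ref{Gk1...km}). Writing $k:=k_1+\cdots+k_n$ and $\w:=\w_{k_2,\dots,k_n}$, I would show that the translation map $\x\mapsto \x+\w$ restricts to a bijection from $G_{k,0,\dots,0}$ onto $G_{k_1,\dots,k_n}$. Since a translation by a fixed vector is injective, this single statement yields both the set identity and the equality of cardinalities $|G_{k_1,\dots,k_n}|=|G_{k,0,\dots,0}|$.

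For the inclusion $G_{k_1,\dots,k_n}\subseteq G_{k,0,\dots,0}+\w$, I would start from $\negbeta\in G_{k_1,\dots,k_n}$ and use (\ref{Gk1...km}) to write $\negbeta=\glb(\negbeta^1,\dots,\negbeta^n)$ with witnesses $\negbeta^l\in \Lambda_{s_1^l,\dots,s_n^l}$ satisfying $s_l^l=k_l$, $s_j^l\geq k_j$, and $\beta_l^l<\min\{\beta_l^1,\dots,\widehat{\beta_l^l},\dots,\beta_l^n\}$ for $l\in I$. Setting $\neggamma^l:=\negbeta^l-\w$, the task is to verify that $(\neggamma^1,\dots,\neggamma^n)$ is an admissible family of witnesses for $\negbeta-\w\in G_{k,0,\dots,0}$. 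Three checks are needed. First, since $\w\in\Theta(\Q)$ and $\widehat{\Lambda}(\Q)$ is $\Theta(\Q)$-invariant by Theorem \ref{theo_relativemaximal}, each $\neggamma^l$ lies in $\widehat{\Lambda}(\Q)$; tracking how subtracting $\w$ shifts the block indices gives $\neggamma^l\in \Lambda_{t_1^l,\dots,t_n^l}$ with $t_1^l=s_1^l+(k-k_1)$ and $t_j^l=s_j^l-k_j$ for $j\geq 2$, and the constraints $s_l^l=k_l$, $s_j^l\geq k_j$ translate precisely into $t_1^1=k$, $t_l^l=0$ for $l\geq 2$, $t_1^l\geq k$, and $t_j^l\geq 0$ for $j\geq 2$. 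In particular every $t_j^l\geq 0$, so $\neggamma^l\in\N_0^n$ and hence $\neggamma^l\in\Lambda(\Q)$. Second, because $\glb$ commutes with translation by the fixed vector $\w$, one gets $\glb(\neggamma^1,\dots,\neggamma^n)=\negbeta-\w$. Third, subtracting the same $\w$ from every witness shifts all the $l$-th coordinates by the same amount, so the strict inequalities $\beta_l^l<\min\{\beta_l^1,\dots,\widehat{\beta_l^l},\dots,\beta_l^n\}$ are preserved verbatim. By (\ref{Gk1...km}) this places $\negbeta-\w$ in $G_{k,0,\dots,0}$. The reverse inclusion is the identical computation run with $\neggamma^l+\w$, and the two maps are mutually inverse.

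The cardinality formula then follows exactly as in Proposition \ref{Gammasymmetry}: the decomposition (\ref{PureGapsDecomposition}) is disjoint (by the same interval argument as in Remark \ref{empty}), the bijection just established gives $|G_{k_1,\dots,k_n}|=|G_{k_1+\cdots+k_n,0,\dots,0}|$, and regrouping the resulting sum by the value $k=k_1+\cdots+k_n$ introduces the factor $\binom{k+n-1}{n-1}$ counting the nonnegative solutions of $k_1+\cdots+k_n=k$; the sum is finite because $G_0(\Q)\subseteq G(\Q)$ is finite. I expect the only delicate point to be the bookkeeping in the second paragraph: one translates every witness by the single vector $\w_{k_2,\dots,k_n}$ rather than by the block-specific translation attached to that witness, so one must confirm that this uniform shift simultaneously respects the \emph{asymmetric} conditions $s_l^l=k_l$ (equality on the diagonal) and $s_j^l\geq k_j$ (inequality off it) and keeps each shifted witness inside $\N_0^n$. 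Everything else is formal, resting only on the linearity of $\glb$ and the $\Theta(\Q)$-invariance of $\widehat{\Lambda}(\Q)$.
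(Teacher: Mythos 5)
Your proposal is correct and follows essentially the same route as the paper's own proof: both establish that translation by $\w_{k_2,\dots,k_n}$ carries the witnesses of (\ref{Gk1...km}) to witnesses, via Proposition \ref{Gammasymmetry} (equivalently the $\Theta(\Q)$-invariance from Theorem \ref{theo_relativemaximal}), the compatibility of $\glb$ with a uniform translation, and the preservation of the strict inequalities, and then both derive the cardinality formula from the disjointness of (\ref{PureGapsDecomposition}) and the count of nonnegative solutions of $k_1+\cdots+k_n=k$. Your block-index bookkeeping $t_1^l=s_1^l+(k-k_1)$, $t_j^l=s_j^l-k_j$ matches the paper's table of $t_i^l$ exactly, just run in the opposite direction (subtracting $\w$ rather than adding it, the paper leaving that direction as ``analogous'').
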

\begin{proof}
Let $\negbeta\in G_{k_1+\cdots+k_n, 0, \dots, 0}$. From the description given in (\ref{Gk1...km}), there exist elements $\negbeta^1, \dots, \negbeta^n\in \Lambda(\Q)$ satisfying the following conditions:
\begin{itemize}
\item [$\bullet$] $\negbeta^1\in \Lambda_{k_1+\cdots+k_n, s_2^1, \dots, s_n^1}$, where $s_i^1\geq 0$ for $i\in I\setminus \{1\}$;
\item [$\bullet$] $\negbeta^l\in \Lambda_{s_1^l, \dots, s_n^l}$ for $l\in I\setminus \{1\}$, where $s_1^l\geq k_1+\cdots+k_n$, $s_l^l=0$, and $s_i^l\geq 0$ for $i\in I\setminus\{1, l\}$;
\item [$\bullet$] $\beta_l^l<\min\{\beta_l^1, \dots, \widehat{\beta_l^l}, \dots, \beta_l^n\}$ for $l\in I$; and
\item [$\bullet$] $\negbeta=\glb(\negbeta^1,\dots, \negbeta^n)$.
\end{itemize}
These conditions and Proposition \ref{Gammasymmetry} guarantee that  
$$
\negalpha^1:=\negbeta^1+\w_{k_2, \dots, k_n}\in \Lambda_{k_1, s_2^1+k_2, \dots, s_n^1+k_n}
$$
and, for $l\in I\setminus \{1\}$,
$$
\negalpha^l:=\negbeta^l+\w_{k_2, \dots, k_n}\in \Lambda_{t_1^l, t_2^l, \dots, t_n^l},
$$
where 
$$
t_i^l=\left\{
\begin{array}{ll}
s_1^l, & \text{if }i=1,\\
k_l, & \text{if }i=l,\\
s_i^l+k_i, & \text{if } i\in I\setminus\{1, l\}.
\end{array}\right.
$$
This implies that $\glb(\negalpha^1, \dots, \negalpha^n)\in \prod_{i=1}^{n}[k_i\pi, (k_i+1)\pi)$. Furthermore, since
$$
\al_l^l<\min\{\al_l^1, \dots, \widehat{\al_l^l}, \dots, \al_l^n\}\quad \Leftrightarrow\quad \be_l^l<\min\{\be_l^1, \dots, \widehat{\be_l^l}, \dots, \be_l^n\},
$$
we obtain that $\glb(\negalpha^1, \dots, \negalpha^n)\in G_0(\Q)$ and therefore $\glb(\negalpha^1, \dots, \negalpha^n)\in G_{k_1, \dots, k_n}$. Now, note that
\begin{align*}
\negbeta+\w_{k_2, \dots, k_n}&=\glb(\negbeta^1,\dots, \negbeta^n)+\w_{k_2, \dots, k_n}\\
&=\glb(\negbeta^1+\w_{k_2, \dots, k_n},\dots, \negbeta^n+\w_{k_1, \dots, k_n})\\
&=\glb(\negalpha^1,\dots, \negalpha^n).
\end{align*}
It follows that $G_{k_1+\cdots+k_n, 0, \dots, 0}+\w_{k_2, \dots, k_n}\subseteq G_{k_1, \dots, k_n}$. The other inclusion is proved analogously. 

To finish the proof, from (\ref{PureGapsDecomposition}) we have
$$
|G_0(\Q)|=\sum_{0\leq k_1, \dots, k_n}|G_{k_1, \dots, k_n}|=\sum_{0\leq k_1, \dots, k_n}|G_{k_1+\dots +k_n, 0, \dots, 0}|=\sum_{\substack{0\leq k, k_1, \dots, k_n\\ k_1+\dots +k_n=k}}|G_{k, 0, \dots, 0}|.
$$
So, we can conclude that
$$
|G_0(\Q)|=\sum_{0\leq k}\binom{k+n-1}{n-1}|G_{k, 0, \dots, 0}|.
$$
\end{proof}

In the next section we will present an application over Kummer extensions of the results presented in this section. A particular case can be seen in Example \ref{ex numerico}.

Next, we show that the properties of set $\Lambda(\Q)$ described in Corollary \ref{permutationLambda} carry over to pure gap set $G_0(\Q)$.

\begin{proposition}\label{Gapspermutation}
Suppose that $x_1\equiv x_2 \equiv \cdots \equiv x_n \pmod{\pi}$ for each $\x=(x_1, x_2, \dots, x_n)\in \Lambda(\Q)$. Then
$$
\negbeta\in G_0(\Q)\quad \Rightarrow \quad\si(\negbeta)\in G_0(\Q)\text{ for each }\si\in S_n.
$$
\end{proposition}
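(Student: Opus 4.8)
The plan is to combine the explicit description of $G_0(\Q)$ from Corollary \ref{coro_G0(Q)} with the permutation invariance of $\Lambda(\Q)$ established in Corollary \ref{permutationLambda}. The hypothesis of the proposition is exactly the hypothesis of Corollary \ref{permutationLambda} applied to $\Upsilon=\Lambda$, so throughout I may use that $\si(\negbeta')\in\Lambda(\Q)$ whenever $\negbeta'\in\Lambda(\Q)$ and $\si\in S_n$.

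Now let $\negbeta\in G_0(\Q)$ and fix $\si\in S_n$. By Corollary \ref{coro_G0(Q)} there are witnesses $\negbeta^1,\dots,\negbeta^n\in\Lambda(\Q)$ with $\be_l^l<\min\{\be_l^1,\dots,\widehat{\be_l^l},\dots,\be_l^n\}$ for each $l\in I$ and $\negbeta=\glb(\negbeta^1,\dots,\negbeta^n)$; note the minimality condition forces $\be_j=\be_j^j$ for every $j$. I would then define a new family by relabeling the witnesses and permuting coordinates simultaneously,
$$
\neggamma^l:=\si(\negbeta^{\si(l)}),\qquad l\in I,
$$
so that the $j$-th coordinate of $\neggamma^l$ is $\ga_j^l=\be_{\si(j)}^{\si(l)}$. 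Each $\neggamma^l$ lies in $\Lambda(\Q)$ by the first paragraph.

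It then remains to check that $\neggamma^1,\dots,\neggamma^n$ witness $\si(\negbeta)\in G_0(\Q)$ through Corollary \ref{coro_G0(Q)}. For the minimality condition, $\ga_l^l=\be_{\si(l)}^{\si(l)}$ while $\ga_l^i=\be_{\si(l)}^{\si(i)}$, and as $i$ runs over $I\setminus\{l\}$ the index $\si(i)$ runs over $I\setminus\{\si(l)\}$; hence $\ga_l^l<\min_{i\neq l}\ga_l^i$ is precisely the original condition at the index $\si(l)$, which holds. For the greatest lower bound, the $j$-th coordinate of $\glb(\neggamma^1,\dots,\neggamma^n)$ equals $\min_l\be_{\si(j)}^{\si(l)}$, and since $l\mapsto\si(l)$ is a bijection of $I$ this is $\min_m\be_{\si(j)}^m=\be_{\si(j)}$, exactly the $j$-th coordinate of $\si(\negbeta)$. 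Therefore $\glb(\neggamma^1,\dots,\neggamma^n)=\si(\negbeta)$, and Corollary \ref{coro_G0(Q)} yields $\si(\negbeta)\in G_0(\Q)$.

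I expect the only genuine subtlety to be selecting the correct transformation of the witnessing family: one must permute the coordinates (by $\si$) and simultaneously relabel the $n$ witnesses (again by $\si$), as in $\neggamma^l=\si(\negbeta^{\si(l)})$. A coordinate permutation on its own would not realign the diagonal minimality inequalities, and it is precisely this double use of $\si$ that makes both the $\glb$ and the strict inequalities transform into the corresponding statements for $\si(\negbeta)$. Everything else is bookkeeping resting on Corollary \ref{permutationLambda}.
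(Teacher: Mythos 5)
Your proof is correct and follows essentially the same route as the paper's: apply Corollary \ref{permutationLambda} to permute the witnesses from Corollary \ref{coro_G0(Q)} and check that the diagonal inequalities and the $\glb$ transform as required. Your explicit double use of $\si$ (relabeling the witnesses as $\neggamma^l=\si(\negbeta^{\si(l)})$) just makes precise the re-indexing that the paper leaves implicit when it writes $\si(\negbeta)=\glb(\si(\negbeta^1),\dots,\si(\negbeta^n))$.
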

\begin{proof}
Let $\negbeta\in G_0(\Q)$. Then there exist elements $\negbeta^1, \dots, \negbeta^n\in \Lambda(\Q)$ such that $\beta_l^l<\min\{\beta_l^1, \dots, \widehat{\beta_l^l}, \dots, \beta_l^n\}$ for $l\in I = \{1,\ldots,n\}$ and $\negbeta=\glb(\negbeta^1, \dots, \negbeta^n)=(\be_1^1, \be_2^2, \dots, \beta_n^n)$. From Corollary \ref{permutationLambda}, for each $\si\in S_n$ we have that $\si(\negbeta^l)\in \Lambda(\Q)$. Furthermore,
$$
\beta_l^l<\min\{\beta_l^1, \dots, \widehat{\beta_l^l}, \dots, \beta_l^n\} \text{ for }l\in I\, \Leftrightarrow \, \beta_{\si(l)}^{\si(l)}<\min\{\beta_{\si(l)}^{1}, \dots, \widehat{\beta_{\si(l)}^{\si(l)}}, \dots, \beta_{\si(l)}^{n}\}\text{ for }l\in I.
$$
This implies that $\si(\negbeta)=(\be_{\si(1)}^{\si(1)}, \be_{\si(2)}^{\si(2)}, \dots, \beta_{\si(n)}^{\si(n)})=\glb(\si(\negbeta^1), \si(\negbeta^2), \dots, \si(\negbeta^n))\in G_0(\Q)$.
\end{proof}

\begin{corollary}\label{corollary-permutation-Gk0..0}
Suppose that $x_1\equiv x_2 \equiv \cdots \equiv x_n \pmod{\pi}$ for each $\x=(x_1, x_2, \dots, x_n)\in \Lambda(\Q)$. Then
$$
(k\pi+i_1, i_2, \dots, i_n)\in G_{k, 0, \dots, 0}\quad \Rightarrow \quad (k\pi+i_{\si(1)}, i_{\si(2)}, \dots, i_{\si(n)})\in G_{k, 0, \dots, 0}
$$
for each $\si\in S_n$.
\end{corollary}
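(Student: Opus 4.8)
The plan is to combine the two symmetries already established for $G_0(\Q)$: the permutation invariance of the whole pure gap set from Proposition \ref{Gapspermutation}, together with the translation identity $G_{k_1, \dots, k_n} = G_{k_1 + \cdots + k_n, 0, \dots, 0} + \w_{k_2, \dots, k_n}$ from Proposition \ref{Puregapsymmetry}. The subtlety is that the corollary asks for stability of $G_{k, 0, \dots, 0}$ under permuting only the residues $(i_1, \dots, i_n)$ while keeping the box shape $(k, 0, \dots, 0)$ fixed, whereas the naive permutation $\si$ moves the factor $k\pi$ off the first coordinate; one extra translation is then needed to push it back.

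First I would write the given element as $\negbeta = (k\pi + i_1, i_2, \dots, i_n)$ with $0 \le i_j < \pi$, this being the general form of an element of $G_{k, 0, \dots, 0}$. Since the hypothesis of the corollary is precisely the hypothesis of Proposition \ref{Gapspermutation}, that proposition gives $\si(\negbeta) = (\be_{\si(1)}, \dots, \be_{\si(n)}) \in G_0(\Q)$. I would then locate the box containing $\si(\negbeta)$: writing $m := \si^{-1}(1)$, the $m$-th coordinate of $\si(\negbeta)$ lies in $[k\pi, (k+1)\pi)$ and every other coordinate lies in $[0, \pi)$, so $\si(\negbeta) \in G_{k_1', \dots, k_n'}$ with $k_m' = k$ and $k_l' = 0$ for $l \neq m$.

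Applying Proposition \ref{Puregapsymmetry} to these $k_l'$ (note that $\sum_i k_i' = k$) yields $G_{k_1', \dots, k_n'} = G_{k, 0, \dots, 0} + \w_{k_2', \dots, k_n'}$, hence $\si(\negbeta) - \w_{k_2', \dots, k_n'} \in G_{k, 0, \dots, 0}$. The last step, where the only genuine computation occurs, is to identify this element with the target: since $\w_{k_2', \dots, k_n'} = (-k\pi, 0, \dots, 0, k\pi, 0, \dots, 0)$ with $k\pi$ in position $m$, subtracting it from $\si(\negbeta)$ returns the factor $k\pi$ from coordinate $m$ to coordinate $1$ and leaves the residues permuted, producing exactly $(k\pi + i_{\si(1)}, i_{\si(2)}, \dots, i_{\si(n)})$. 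The main obstacle is the index bookkeeping in this verification: one must check coordinate by coordinate, separating $l = 1$, $l = m$, and the remaining indices, that subtracting $k\pi$ in position $m$ turns the entry $\be_1 = k\pi + i_1$ back into the residue $i_1 = i_{\si(m)}$, while adding $k\pi$ in position $1$ turns the entry $\be_{\si(1)} = i_{\si(1)}$ into $k\pi + i_{\si(1)}$, and also to confirm that the degenerate case $m = 1$ (i.e.\ $\si(1) = 1$, for which $\w_{k_2', \dots, k_n'} = (0, \dots, 0)$ and $\si(\negbeta)$ already lies in $G_{k, 0, \dots, 0}$) is covered by the same formula.
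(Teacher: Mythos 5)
Your proposal is correct and follows essentially the same route as the paper's own proof: apply Proposition \ref{Gapspermutation} to place $\si(\negbeta)$ in $G_0(\Q)$, identify the box $G_{0,\dots,0,k,0,\dots,0}$ with $k$ in position $\si^{-1}(1)$, and translate back to $G_{k,0,\dots,0}$ by the vector $\w$ via Proposition \ref{Puregapsymmetry}. The coordinate bookkeeping and the degenerate case $\si(1)=1$ are handled correctly.
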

\begin{proof}
Let $(k\pi+i_1, i_2, \dots, i_n)$ be an element in $G_{k, 0, \dots, 0}$, $\sigma$ be an element in $S_n$, and $s\in I$ be such that $\sigma(s)=1$. From Proposition \ref{Gapspermutation} we obtain 
$$
\sigma((k\pi+i_1, i_2, \dots, i_n))=(i_{\sigma(1)}, i_{\sigma(2)}, \dots, i_{\sigma(s-1)}, km+i_1, i_{\sigma(s+1)}, \dots, i_{\sigma(n)})\in G_{0, 0, \dots, k, \dots, 0}.
$$
Now, define $\w=(-k\pi, 0, \dots, \underbrace{k\pi}_{s\text{-th entry}}, \dots, 0)$. From Proposition \ref{Puregapsymmetry} we deduce that
$$
(km+i_{\sigma(1)}, i_{\sigma(2)}, \dots, i_{\sigma(n)})=\sigma((k\pi+i_1, i_2, \dots, i_n))-\w\in G_{k, 0, \dots, 0}.
$$
\end{proof}

\section{Set of pure gaps at several rational places on Kummer Extensions}

In this section we apply the results in the previous section to provide an explicit description of the set of pure gaps $G_0(\Q)$ at several rational places distinct to the infinity place on Kummer extensions and its cardinality. For this we completely determine the sets of absolute and relative maximal elements of the generalized Weierstrass semigroup $\widehat{H}(\Q)$, where each $Q_i$ is distinct to the infinity place. We conclude this section with some applications in coding theory by presenting differential AG codes in many places over Kummer extensions. In particular, we present some AG codes which have better relative parameters than the AG codes obtained in \cite[Examples 1 and 3]{HY2018} and improve the parameters given in MinT’s Tables \cite{MinT}.

\subsection{Kummer extensions} Consider the Kummer extension defined by the affine equation
\begin{equation}\label{KummerEquation}
\cX:\quad y^m=\prod_{j=1}^{r}(x-\al_j)^\la,
\end{equation}
where $\al_1, \dots, \al_r$ are pairwise distinct elements in $K$, where $K$ is the algebraic closure of a finite field $\fq$, $\lambda \in \N$, and $m, r\geq 2$ are integers such that $\char(K)\nmid m$ and $\gcd(m, \lambda r)=1$. Let $F=K(\mathcal{X})$ be its function field. Then the genus of $F$ is given by $g=\frac{1}{2}(m-1)(r-1)$ and $W=(2g-2)P_\infty=(rm-r-m-1)P_\infty$ is a canonical divisor of $F$, where $P_\infty$ denote the single place at infinity of $F$. For $2\leq n \leq r$, let $P_1, P_2, \dots, P_n$ be the totally ramified places of the extension $F/K(x)$  corresponding to $\al_1, \al_2, \dots, \al_n$, respectively. From \cite[Proposition 4.9]{CMT2024}, the period associated to the places $P_i$ and $ P_j$ is $m$ for any $1\leq i, j\leq n$ such that $i\neq j$. In this subsection, using the results of the previous sections, we determine the cardinality and a simple explicit description of the pure gap set $G_0(\P)$ over $F$, where $\P=(P_1, \dots, P_n)$, which is different from the description given by Hu and Yang in \cite[Theorem 3]{HY2018}.

The following result describes some divisors of a Kummer extension.

\begin{proposition}\cite[Proposition 5]{YH2017}\label{functions Yang Hu}
Let $F/K(x)$ be the Kummer extension given in (\ref{KummerEquation}). Then we have the following divisors:
\begin{enumerate}[\rm (i)]
\item $(x-\alpha_j) = mP_j - mP_{\infty}$, for $1 \leq j \leq r$;
%(2) (y) = \lambda P_1 + \cdots + \lambda P_r - r \lambda P_{\infty};
\item $(z) = P_1 + \cdots + P_r - rP_{\infty}$, where $z := y^Af(x)^B$, and $A, B$ are integers such that $A\lambda + Bm = 1$.
\end{enumerate}
\end{proposition}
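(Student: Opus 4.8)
The plan is to read off each principal divisor directly from the ramification data of the cover $F/K(x)$ recorded in the setup, together with the defining equation (\ref{KummerEquation}); throughout, $f(x)$ abbreviates $\prod_{j=1}^{r}(x-\al_j)$, so that the Kummer equation reads $y^m=f(x)^\la$.

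For part (i) I would argue relative to the rational subfield $K(x)$. There the function $x-\al_j$ has a simple zero at the place $x=\al_j$ and a simple pole at the infinite place, and no other zeros or poles; hence in $F$ the divisor $(x-\al_j)$ can only be supported on places lying over $x=\al_j$ or over infinity. By hypothesis the place $x=\al_j$ is totally ramified in the degree-$m$ extension $F/K(x)$, so $P_j$ is its unique extension with ramification index $m$, and likewise $P_\infty$ is the single place over infinity, again with index $m$. Multiplying the two $K(x)$-valuations by the ramification index $m$ (the standard behaviour of valuations under a totally ramified place) gives $v_{P_j}(x-\al_j)=m$ and $v_{P_\infty}(x-\al_j)=-m$, so $(x-\al_j)=mP_j-mP_\infty$. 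The degrees of the zero and pole parts both equal $m$, confirming that no place has been omitted.

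For part (ii) I would first sum the divisors from (i) to obtain
$$
(f(x))=\sum_{j=1}^{r}(x-\al_j)=m\sum_{j=1}^{r}P_j-rmP_\infty .
$$
Taking divisors in $y^m=f(x)^\la$ yields the equality $m\,(y)=\la\,(f(x))$, whence $(y)=\la\sum_{j=1}^{r}P_j-\la r P_\infty$. Then, using linearity of $h\mapsto(h)$ and the choice $A\la+Bm=1$,
$$
(z)=A\,(y)+B\,(f(x))=(A\la+Bm)\Big(\sum_{j=1}^{r}P_j-rP_\infty\Big)=\sum_{j=1}^{r}P_j-rP_\infty ,
$$
which is the asserted identity.

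The calculation is essentially mechanical once the ramification picture is fixed, so the only delicate point is ruling out hidden contributions from other places. For $x-\al_j$ this is handled by the two total-ramification statements above. For $y$ it follows because the equality $m\,(y)=\la\,(f(x))$ forces $(y)$ to be supported on $\{P_1,\dots,P_r,P_\infty\}$, after which the coefficients are determined simply by dividing the explicit expression for $\la\,(f(x))$ by $m$. I therefore expect no genuine obstacle: the content of the proposition is precisely that the Kummer structure makes $x-\al_j$, $f(x)$, and $z$ behave exactly as the totally ramified geometry dictates.
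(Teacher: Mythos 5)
Your argument is correct: the paper states this result purely as a citation of \cite[Proposition 5]{YH2017} and gives no proof of its own, and your computation --- pulling back the $K(x)$-valuations of $x-\al_j$ through the totally ramified places $P_j$ and $P_\infty$ with ramification index $m$, deducing $(y)=\la\sum_{j=1}^{r}P_j-\la rP_\infty$ from $m(y)=\la(f(x))$, and then combining via $A\la+Bm=1$ --- is the standard derivation one would expect in the cited source. The only point worth making explicit is that total ramification over every $x=\al_j$ and over $x=\infty$ is a consequence of $\gcd(m,\la)=1$ and $\gcd(m,\la r)=1$ (standard Kummer ramification), which you invoke as given from the setup rather than deriving; this is consistent with how the paper itself treats the ramification structure, so it is not a gap.
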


In the next results we have information about the Weierstrass semigroup at one and several places on Kummer extensions.

\begin{theorem}\cite[Theorem III.2]{CMQ2016} \label{HP1}
Let $F/K(x)$ be the Kummer extension given in (\ref{KummerEquation}). Suppose that $g\geq 1$ and that $P$ is a totally ramified place in the extension $F/K(x)$. Then
\[H(P)=
\begin{cases}
\mathbb{N}_0 \setminus \left\{1+i+mj \,:\, 0 \leq i \leq m-2-\floor*{\frac{m}{r}}, \, 0\leq j \leq r-2- \floor*{ \frac{r(i+1)}{m}}\right \}, &\text{ if } P\ne P_\infty, \\
\langle m,r\rangle, &\text{ if } P=P_\infty.
\end{cases}\]
\end{theorem}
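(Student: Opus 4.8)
The plan is to handle the two cases separately. For $P=P_\infty$ the result is immediate: by Proposition~\ref{functions Yang Hu} we have $(x-\al_1)_\infty=mP_\infty$ and $(z)_\infty=rP_\infty$, so $m,r\in H(P_\infty)$ and thus $\langle m,r\rangle\subseteq H(P_\infty)$. Since $\gcd(m,\lambda r)=1$ forces $\gcd(m,r)=1$, the numerical semigroup $\langle m,r\rangle$ has exactly $(m-1)(r-1)/2=g$ gaps; as $H(P_\infty)$ also has exactly $g$ gaps by the Weierstrass gap theorem, the inclusion is an equality.

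For a finite ramified place I would first reduce to the case $\lambda=1$, which also explains why the stated gap set does not depend on $\lambda$. Writing $f(x)=\prod_{j=1}^{r}(x-\al_j)$ and choosing $\lambda',\mu\in\Z$ with $\lambda\lambda'-m\mu=1$ (possible since $\gcd(m,\lambda)=1$), the function $w:=y^{\lambda'}f(x)^{-\mu}$ satisfies $w^m=f(x)$ and $F=K(x,w)$, the isomorphism fixing $K(x)$ and hence each $P_j$ and $P_\infty$. Thus $H(P_j)$ may be computed in the model $w^m=f(x)$, so I assume $\lambda=1$ and take $P=P_1$, the argument being identical for every $P_j$ by the symmetry of the $\al_j$. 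In this model $v_{P_1}(y)=1$, $v_{P_\infty}(y)=-r$, and each $P_j,P_\infty$ is totally ramified of index $m$.

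The technical core is a valuation formula. Every $h\in F$ has a unique expansion $h=\sum_{t=0}^{m-1}u_t(x)y^t$ with $u_t\in K(x)$. Because $\gcd(r,m)=1$, at each of $P_1$ and $P_\infty$ the summands $u_ty^t$ have pairwise distinct valuations modulo $m$, so no cancellation occurs and $v_{P}(h)=\min_t v_{P}(u_ty^t)$. Requiring $h$ to be regular off $P_1$ translates termwise into: $u_t$ has poles only at $\al_1$ (regularity at the $P_j$, $j\neq1$), together with $\deg u_t\leq-\ceil*{\frac{tr}{m}}$ (regularity at $P_\infty$, where $v_{P_\infty}(u_ty^t)=-m\deg u_t-tr$). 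Writing $u_t=Q_t(x)/(x-\al_1)^{c_t}$ with $Q_t\in K[x]$, the degree bound yields $c_t\geq\deg Q_t+\ceil*{\frac{tr}{m}}$, hence $v_{\al_1}(u_t)=\ord_{\al_1}(Q_t)-c_t\leq-\ceil*{\frac{tr}{m}}$, using $\ord_{\al_1}(Q_t)\leq\deg Q_t$.

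From here the two inclusions follow. For the non-gaps, the functions $y^{a}/(x-\al_1)^{c}$ with $0\leq a\leq m-1$ and $c\geq\ceil*{\frac{ar}{m}}$ are regular off $P_1$ and have pole order exactly $mc-a$ there, so every integer $\equiv-a\Mod{m}$ that is at least $m\ceil*{\frac{ar}{m}}-a$ is a non-gap. Conversely, if $h$ has a pole at $P_1$ of order $\equiv-a\Mod{m}$, the minimizing index must be $t=a$, and the bound $v_{\al_1}(u_a)\leq-\ceil*{\frac{ar}{m}}$ gives $-v_{P_1}(h)=-mv_{\al_1}(u_a)-a\geq m\ceil*{\frac{ar}{m}}-a$. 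Thus in each residue class the gaps are exactly $\{mc-a:1\leq c\leq\ceil*{\frac{ar}{m}}-1\}$; writing $a=m-(i+1)$, $c=j+1$ and using $\ceil*{-t}=-\floor*{t}$ turns these into the integers $1+i+mj$ with $0\leq j\leq r-2-\floor*{\frac{r(i+1)}{m}}$ and $0\leq i\leq m-2-\floor*{\frac{m}{r}}$, which is exactly the asserted set; as a check $\sum_{a=1}^{m-1}\big(\ceil*{\frac{ar}{m}}-1\big)=\sum_{a=1}^{m-1}\floor*{\frac{ar}{m}}=(m-1)(r-1)/2=g$. The main obstacle is precisely the no-cancellation valuation formula together with the degree bound at $P_\infty$ of the third paragraph: these force the minimal pole order in each residue class, and without them one obtains only the easy inclusion that the listed integers are non-gaps.
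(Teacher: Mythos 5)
This statement is quoted from \cite{CMQ2016} and the paper gives no proof of it, so there is no internal argument to compare yours against; I can only assess your proposal on its own terms. The overall structure is sound and all the computations check out: the $P_\infty$ case by comparing the $g$ gaps of $\langle m,r\rangle$ with the Weierstrass gap theorem, the reduction to $\la=1$ via $w=y^{\la'}f(x)^{-\mu}$ (indeed $w^m=f$ and $y=w^{\la}$, so the function field and its places are unchanged), the identification of the least pole number $m\ceil*{\frac{ar}{m}}-a$ in each residue class $-a \bmod m$, and the reindexing $a=m-(i+1)$, $c=j+1$ turning the gaps $mc-a$ into $1+i+mj$ with exactly the stated ranges; the sanity check $\sum_{a=1}^{m-1}\bigl(\ceil*{\frac{ar}{m}}-1\bigr)=\sum_{a=1}^{m-1}\floor*{\frac{ar}{m}}=g$ is also correct.

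The one step whose justification does not cover what it needs to cover is the claim that $h$ regular off $P_1$ forces each coefficient $u_t$ to have poles only at $\al_1$. Your ``distinct valuations mod $m$, hence no cancellation'' argument is valid precisely at the totally ramified places $P_1,\ldots,P_r,P_\infty$; it says nothing at a place of $K(x)$ centred at $\beta\notin\{\al_1,\ldots,\al_r\}$, where the fibre is unramified, $y$ is a unit, and every summand $u_ty^t$ has the same valuation $v_\beta(u_t)$, so cancellation is not excluded by that reasoning. This is not cosmetic for your argument: the inequality $v_{\al_1}(u_a)\leq \deg u_a$ is false for a general rational function (take $u=x-\al_2$) and only becomes valid once you know the denominator of $u_a$ is a power of $x-\al_1$. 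The standard repair is Galois averaging: since $\char K\nmid m$, the automorphisms $y\mapsto \zeta^k y$ ($\zeta$ a primitive $m$-th root of unity) permute the places over $\beta$, and $u_ty^t=\frac{1}{m}\sum_{k=0}^{m-1}\zeta^{-kt}h_k$ with $h_k$ the conjugates of $h$, so each $u_ty^t$ is regular over $\beta$ whenever $h$ is; equivalently, $1,y,\ldots,y^{m-1}$ is a local integral basis at every finite place. With that sentence added, the proof is complete.
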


\begin{theorem}\cite[Theorem 10]{YH2017}\label{absmaxKummer}
Let $F/K(x)$ be the Kummer extension given in (\ref{KummerEquation}) and assume that $g \geq 1$. Let $\Gamma^*(\P):=\widehat{\Gamma}(\P)\cap \N^n$. Then, for $2 \leq n \leq r - \floor*{\frac{r}{m}}$,
\begin{align*}
\Gamma^*(\P)=\Bigg\{&(k_1m+i, k_2m+i, \dots, k_nm+i)\in \N^n : 1\leq i \leq m-1-\floor*{\frac{m}{r}},\\
&\quad k_1, k_2, \dots, k_n\geq 0, \, k_1+\dots+k_n=r-n-\floor*{\frac{ri}{m}}\Bigg\},
\end{align*}
and for $r-\floor*{\frac{r}{m}}<n\leq r$, 
$$
\Gamma^*(\P)=\emptyset.
$$
\end{theorem}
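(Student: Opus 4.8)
The plan is to characterize $\widehat{\Gamma}(\P)$ via discrepancies and then intersect with $\N^n$. By Proposition \ref{prop equiv discrepancia}, an element $\negalpha=(\alpha_1,\dots,\alpha_n)\in\Z^n$ lies in $\widehat{\Gamma}(\P)$ if and only if $D_\negalpha(\P)=\sum_{l=1}^n\alpha_l P_l$ is a discrepancy with respect to every pair $(P_i,P_j)$ of distinct places among $P_1,\dots,P_n$, a condition on the dimensions $\ell(\,\cdot\,)$. So the first task is to make these dimensions explicit, and here the Kummer structure is decisive. Since each $P_l$ and $P_\infty$ is totally ramified over $K(x)$, it is fixed by $\gal(F/K(x))=\langle\sigma\rangle$ with $\sigma(y)=\zeta_m y$; hence every divisor $D=\sum_l a_l P_l$ is $\sigma$-invariant and $\cL(D)$ splits into eigenspaces $\cL(D)=\bigoplus_{b=0}^{m-1}V_b$, where $V_b=\cL(D)\cap y^b K(x)$. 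Writing a member of $V_b$ as $y^b g_b(x)$ and using $(x-\alpha_l)=mP_l-mP_\infty$ together with $v_{P_l}(y)=\lambda$ and $v_{P_\infty}(y)=-\lambda r$, the membership condition $y^b g_b\in\cL(D)$ becomes a list of pole bounds for the rational function $g_b$ at $\alpha_1,\dots,\alpha_r$ and at $\infty$. This identifies $V_b$ with the Riemann--Roch space on the projective line $\mathbb{P}^1$ of an explicit divisor $E_b$, giving $\dim V_b=\max\{0,\deg E_b+1\}$ and hence a closed formula for $\ell(\sum_l a_l P_l)$.

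The conceptually central step extracts a congruence rigidity. Because $\gcd(m,\lambda r)=1$ forces $\gcd(m,\lambda)=1$, for each $j$ there is a unique $b_j$ with $b_j\lambda\equiv-\alpha_j\pmod m$, and the pole bound of $g_b$ at $\alpha_j$, namely $\floor*{(\alpha_j+b\lambda)/m}$, drops by one under $\alpha_j\mapsto\alpha_j-1$ precisely when $b=b_j$; thus removing $P_j$ lowers only the component $V_{b_j}$. If two places had $\alpha_i\not\equiv\alpha_j\pmod m$, then $b_i\neq b_j$, and the two discrepancy requirements $\cL(D_\negalpha)\neq\cL(D_\negalpha-P_j)$ and $\cL(D_\negalpha-P_i)=\cL(D_\negalpha-P_i-P_j)$ would impose $\deg E_{b_j}\geq0$ and $\deg E_{b_j}<0$ simultaneously, a contradiction. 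Hence being a discrepancy with respect to every pair forces all coordinates to share one residue $i$ modulo $m$. Writing $\alpha_l=k_l m+i$, only the common component $b$ with $b\lambda\equiv-i$ is ever affected, and the remaining pair conditions collapse (for $i\neq j$ the points $\alpha_i,\alpha_j$ are distinct on $\mathbb{P}^1$) to the single equality $\deg E_b=0$.

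The last step is the arithmetic evaluation of $\deg E_b$. Setting $i+b\lambda=mt$ and using $\floor*{(\alpha_l+b\lambda)/m}=k_l+t$, $\floor*{b\lambda/m}=t-1$, and $\ceil*{b\lambda r/m}=tr-\floor*{ri/m}$, the $t$-terms cancel and one obtains $\deg E_b=\sum_{l=1}^n k_l-r+n+\floor*{ri/m}$. Imposing $\deg E_b=0$ yields exactly $k_1+\dots+k_n=r-n-\floor*{ri/m}$, while $\negalpha\in\N^n$ forces $i\geq1$ and each $k_l\geq0$. Non-negativity of the right-hand side then produces the range of $i$: if $i\geq m-\floor*{m/r}$ then $ri\geq m(r-1)$, so $\floor*{ri/m}\geq r-1$ and $r-n-\floor*{ri/m}\leq1-n<0$, excluding such $i$ and leaving $1\leq i\leq m-1-\floor*{m/r}$; similarly, if $n>r-\floor*{r/m}$ then even $i=1$ gives $r-n-\floor*{r/m}<0$, so no solution exists and $\Gamma^*(\P)=\emptyset$.

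I expect the main obstacle to be the first step: verifying the eigenspace decomposition rigorously and, above all, the careful bookkeeping that pins down $E_b$ correctly, tracking the admissible pole orders of $g_b$ at the ramified places $\alpha_{n+1},\dots,\alpha_r$, at the unramified finite places, and at $\infty$. Once $\dim V_b=\max\{0,\deg E_b+1\}$ is secured, the congruence rigidity of the second step is short, and the floor-function identity of the third step is routine; the consistency of this derivation with Theorem \ref{HP1} (whose single-place gaps carry the same residue range $1\leq i\leq m-1-\floor*{m/r}$) provides a useful sanity check.
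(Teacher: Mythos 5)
This statement is not proved in the paper at all: it is imported verbatim from Hu and Yang \cite[Theorem 10]{YH2017}, so there is no internal proof to measure your attempt against. Judged on its own, your outline is correct and follows what is essentially the standard route for Kummer extensions. The eigenspace decomposition $\cL(D)=\bigoplus_{b=0}^{m-1}V_b$ with $V_b=\cL(D)\cap y^bK(x)$ is legitimate here because $D_{\negalpha}(\P)$ is supported on totally ramified places (hence is Galois-invariant) and the characteristic of $K$ does not divide $m$; the identification of $V_b$ with a Riemann--Roch space $\cL(E_b)$ on the projective line, with coefficient $\floor*{(a_l+b\lambda)/m}$ at $\alpha_l$ for $l\leq n$, coefficient $\floor*{b\lambda/m}$ at $\alpha_j$ for $j>n$, and coefficient $-\ceil*{b\lambda r/m}$ at infinity (where $a_l$ is the $l$-th coordinate of $\negalpha$), is exactly the bookkeeping you anticipate. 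Your congruence rigidity then correctly reduces the discrepancy criterion of Proposition \ref{prop equiv discrepancia} to the single equation $\deg E_b=0$, and the evaluation $\deg E_b=\sum_{l}k_l-r+n+\floor*{ri/m}$ yields $k_1+\cdots+k_n=r-n-\floor*{ri/m}$ together with the stated ranges of $i$ and $n$.

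The one genuine (though easily repaired) imprecision is the claim that ``$\negalpha\in\N^n$ forces $i\geq 1$''. It does not: a tuple such as $(m,\ldots,m)$ lies in $\N^n$ with common residue $0$ modulo $m$, and your identities $\floor*{b\lambda/m}=t-1$ and the cancellation in Step 3 presuppose $1\leq i\leq m-1$. The residue-$0$ class has to be excluded by a separate one-line computation: there the relevant index is $b=0$, the divisor $E_0$ has degree $\sum_l k_l$ with every $k_l\geq 1$, so $\deg E_0\geq n>0$ and the condition $\deg E_0=0$ is unsatisfiable. With that case added, and with the admittedly nontrivial verification of the pole bounds defining $E_b$ carried out as you describe, the argument is complete.
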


\subsection{Maximal elements in $\widehat{H}(\P)$}

Our aim in this section is to compute explicitly the absolute and relative maximal elements in the generalized Weierstrass semigroup $\widehat{H}(\P)$. We observe that
\begin{equation} \label{Cm}
\cC (\P)=\{\negbeta\in \mathbb{Z}^{n} \ : \ 0\leq \beta_i< m \ \mbox{for } i=2,\ldots,n\},
\end{equation}
and $\Theta(\P)$ is generated by the $n$-tuples
\begin{equation}\label{eta_i}
\negeta^i=(0,\ldots,0,-m,\underbrace{m}_{i\text{-th entry}},0,\ldots,0)\in \mathbb{Z}^{n} \mbox{ for } i=2,\ldots,n.
\end{equation}

\begin{proposition} \label{discrepancia m upla}
Let $i \in \{1,\ldots, m-1\}$ and $2 \leq n \leq r - \floor*{\frac{r}{m}}$. Then the divisor $ A= (km+i)P_1 + \sum_{\ell=2}^{n}iP_{\ell}$, where $k=r-n-\floor*{\frac{ri}{m}}$, is a discrepancy to every pair of distinct places in $\{ P_1, \ldots,P_{n}\}$.
\end{proposition}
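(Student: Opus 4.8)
The plan is to prove the equivalent statement that $\negalpha:=(km+i,i,\dots,i)$ lies in $\widehat{\Gamma}(\P)$: by Proposition \ref{prop equiv discrepancia} this is exactly the assertion that $A=D_{\negalpha}(\P)$ is a discrepancy with respect to every pair of distinct places in $\{P_1,\dots,P_n\}$. To reach $\widehat{\Gamma}(\P)$ I would use the dimension criteria already available. Proposition \ref{prop 2.1.3 da tese}(i) reduces $\negalpha\in\widehat H(\P)$ to the identities $\ell(D_{\negalpha}(\P))=\ell(D_{\negalpha}(\P)-P_j)+1$ for $j=1,\dots,n$, and Proposition \ref{absmax}(iv) reduces $\negalpha\in\widehat{\Gamma}(\P)$ to the single identity $\ell(D_{\negalpha}(\P))=\ell(D_{\negalpha}(\P)-\sum_{j=1}^n P_j)+1$. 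Thus the whole statement collapses to a few Riemann--Roch dimension computations on $F$.

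The engine is the graded decomposition of Riemann--Roch spaces on the Kummer extension. Since $P_1,\dots,P_r$ and $P_\infty$ are totally ramified of index $m$ in $F/K(x)$ and $\gcd(m,\lambda r)=1$, for $h\in K(x)$ and $0\le t\le m-1$ the valuations $v_{P_j}(h(x)y^t)=m\,\ord_{\alpha_j}(h)+t\lambda$ and $v_{P_\infty}(h(x)y^t)=-m\deg h-t\lambda r$ lie in pairwise distinct residue classes modulo $m$ as $t$ ranges over $\{0,\dots,m-1\}$ (using $v_{P_j}(y)=\lambda$, $v_{P_j}(x-\alpha_j)=m$, $v_{P_\infty}(y)=-\lambda r$, $v_{P_\infty}(x)=-m$ from Proposition \ref{functions Yang Hu}). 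Hence, for any divisor $D=\sum_{j=1}^r a_jP_j+a_\infty P_\infty$, the space $\cL(D)$ splits as the direct sum of its graded pieces $\{h(x)y^t\in\cL(D)\}$, and counting admissible $h\in K(x)$ gives
$$
\ell(D)=\sum_{t=0}^{m-1}\max\Big\{0,\ 1+\floor*{\tfrac{a_\infty-t\lambda r}{m}}+\sum_{j=1}^{r}\floor*{\tfrac{a_j+t\lambda}{m}}\Big\}.
$$
The delicate point, and the only one, is that the inner floors must \emph{not} be truncated at $0$: a negative coefficient $a_j$ forces $h$ to vanish at $\alpha_j$ and genuinely subtracts from the degree budget of the piece. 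This is precisely where the range $i>m-1-\floor*{m/r}$ (for which $km+i<0$) differs from the one already covered by Theorem \ref{absmaxKummer}.

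Specializing to $D=A$, i.e. $a_1=km+i$, $a_2=\dots=a_n=i$, $a_{n+1}=\dots=a_r=0$ and $a_\infty=0$, I would let $t^{*}\in\{0,\dots,m-1\}$ be the unique residue with $t^{*}\lambda\equiv-i\pmod m$ (unique since $\gcd(\lambda,m)=1$) and write $t^{*}\lambda+i=mc$ with $c\ge 1$. Because $a_j\equiv i\pmod m$ for every $j\in\{1,\dots,n\}$, the congruence $m\mid(a_j+t\lambda)$ holds only at $t=t^{*}$, so removing any of $P_1,\dots,P_n$ affects only the $t^{*}$-graded piece. The heart of the argument is to evaluate the argument of the maximum for that piece: splitting the sum over $j=1$, over $j=2,\dots,n$, and over $j=n+1,\dots,r$, and using $\floor*{(mc-i)/m}=c-1$ together with $\floor*{-t^{*}\lambda r/m}=-cr+\floor*{ri/m}$, the dependence on $c$ telescopes and I obtain
$$
1+\floor*{\tfrac{-t^{*}\lambda r}{m}}+\sum_{j=1}^{r}\floor*{\tfrac{a_j+t^{*}\lambda}{m}}=1+\Big(\floor*{\tfrac{ri}{m}}+k-r+n\Big)=1,
$$
the last equality being the definition $k=r-n-\floor*{ri/m}$. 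Thus the $t^{*}$-piece of $\cL(A)$ is exactly one-dimensional, while every other piece is unaffected by the removals.

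Subtracting a single $P_j$ with $1\le j\le n$ lowers the argument of the maximum for the $t^{*}$-piece from $1$ to $0$ and changes nothing else, so $\ell(A)=\ell(A-P_j)+1$ for each such $j$; by Proposition \ref{prop 2.1.3 da tese}(i) this gives $\negalpha\in\widehat H(\P)$. Subtracting $P_1+\dots+P_n$ lowers that same argument by $n$, turning the piece into $\max\{0,1-n\}=0$ (as $n\ge 2$), so $\ell(A)=\ell(A-\sum_{j=1}^n P_j)+1$; by Proposition \ref{absmax}(iv) this yields $\negalpha\in\widehat{\Gamma}(\P)$, and Proposition \ref{prop equiv discrepancia} then gives the claimed discrepancy. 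I expect the main obstacle to be not any single deduction but the correct set-up of the graded dimension formula, in particular retaining the possibly negative contribution of the coefficient $km+i$ at $P_1$; once that is in place, the evaluation at $t^{*}$ is the short telescoping computation above, uniform across the whole range $1\le i\le m-1$.
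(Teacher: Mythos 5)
Your argument is correct in substance but takes a genuinely different route from the paper's. The paper never computes $\ell(A)$: it checks the two defining conditions of a discrepancy pair by pair, exhibiting the explicit function $f_1=\prod_{t=n+1}^{r}(x-\alpha_t)\,(x-\alpha_1)^{-k}z^{-i}\in\cL(A)\setminus\cL(A-P_j)$ and then explicit elements of $\cL(W-A+P_1+P_j)\setminus\cL(W-A+P_1)$ and of $\cL(W-A+P_s+P_j)\setminus\cL(W-A+P_s)$, which via Noether's Reduction Lemma (Lemma \ref{lemma noether}) yield $\cL(A-P_1)=\cL(A-P_1-P_j)$ and $\cL(A-P_s)=\cL(A-P_s-P_j)$. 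You instead reduce the whole statement to $\negalpha=(km+i,i,\ldots,i)\in\widehat{\Gamma}(\P)$ via Propositions \ref{prop equiv discrepancia}, \ref{prop 2.1.3 da tese}(i) and \ref{absmax}(iv), and read off every needed dimension from a graded formula for $\ell(D)$. Your evaluation of the $t^{*}$-piece is right: the coefficient of $c$ telescopes to zero and the constant term is $1+\floor*{ri/m}+k-r+n=1$ by the definition of $k$, and since $a_j+t\lambda\equiv i+t\lambda\pmod m$ vanishes only at $t=t^{*}$, removing any $P_j$ (or all of $P_1+\cdots+P_n$) perturbs only that one piece. The paper's method has the advantage of producing the explicit functions that are recycled in Lemma \ref{nabla vazio} and Theorem \ref{relativemaxKummer}; yours treats the full range $1\leq i\leq m-1$, including the coordinates with $km+i<0$, in a single uniform computation.

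The one step that needs shoring up is the decomposition $\ell(D)=\sum_{t=0}^{m-1}\max\bigl\{0,\ 1+\floor*{(a_\infty-t\lambda r)/m}+\sum_{j=1}^{r}\floor*{(a_j+t\lambda)/m}\bigr\}$. The ``distinct residue classes modulo $m$'' observation shows that for $f=\sum_t h_t y^t\in\cL(D)$ each summand meets the required pole bounds at $P_1,\ldots,P_r,P_\infty$, but it says nothing about the places of $F$ lying over the unramified places of $K(x)$, where the valuations of the summands all lie in the same class and cancellation could in principle conceal a pole of some $h_t$. The direct-sum statement really uses the Galois invariance of $D$: applying the generator $\sigma$ of $\gal(F/K(x))$, with $\sigma(y)=\zeta y$, and inverting the resulting Vandermonde matrix (invertible because $\char(K)\nmid m$) shows that each $h_ty^t$ lies in $\cL(D)$; equivalently, cite Maharaj's decomposition theorem for Kummer covers. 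Once that reference or argument is inserted, your proof goes through as written.
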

\begin{proof}
First, note that $\cL(A) \neq \cL(A - P_j)$ each for $j \in \{ 1,\ldots,n\}$, since
$$
f_1= \dfrac{\prod_{t =n+1}^{r}(x-\alpha_t)}{(x-\alpha_1)^k z^{i}} \in \cL(A) \setminus \cL(A-P_j).
$$
Next, consider the canonical divisor $W = (rm-r-m-1)P_{\infty}$. For $j \in \{ 2,\ldots,n\}$, 
$$
f_2= (x - \alpha_1)^k z^{i-1} \prod_{t=2 \atop t\neq j}^{n}(x - \alpha_t) \in \cL(W-A+P_1+P_j) \setminus \cL(W-A+P_1).$$
So, by Lemma \ref{lemma noether}, we conclude that $\cL(A-P_1) = \cL(A-P_1-P_j)$. 

Finally, for $j,s \in \{ 2,\ldots,n\}$ with $j \neq s$, we have that
$$
f_3 = (x - \alpha_1)^{k+1} z^{i-1} \prod_{t=2 \atop t\neq j,s}^{n}(x - \alpha_t) \in \cL(W-A+P_s+P_j) \setminus \cL(W-A+P_s).$$
So, again by Lemma \ref{lemma noether}, we conclude that $\cL(A-P_s) = \cL(A-P_s-P_j)$. 

\end{proof}

\begin{lemma} \label{nabla vazio}
Let $i \in \{1, \ldots , m-1\}$ and $k = r-n-\floor*{\frac{ri}{m}}$. Then
$$
\nabla_{1}^{n}((km+i,m+ i - 1,\ldots,m+i-1,i-1))= \emptyset.
$$
\end{lemma}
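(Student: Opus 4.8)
The plan is to translate the combinatorial claim into an equality of Riemann--Roch dimensions and then dispatch it with Noether's Reduction Lemma. Write $\negalpha = (km+i, m+i-1, \ldots, m+i-1, i-1)$, so that $D_\negalpha(\P) = (km+i)P_1 + \sum_{\ell=2}^{n-1}(m+i-1)P_\ell + (i-1)P_n$. By Proposition \ref{prop 2.1.3 da tese}(ii) applied at the index $1$, the vanishing $\nabla_1^n(\negalpha) = \emptyset$ is equivalent to
$$
\ell(D_\negalpha(\P)) = \ell(D_\negalpha(\P) - P_1),
$$
so it suffices to prove this equality. I would obtain it from Lemma \ref{lemma noether} applied with $D = D_\negalpha(\P) - P_1$ and $P = P_1$; since then $D + P = D_\negalpha(\P)$, the conclusion $\ell(D + P) = \ell(D)$ is precisely what is wanted.

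Two hypotheses of Lemma \ref{lemma noether} then have to be verified. The first, $\ell(D_\negalpha(\P) - P_1) > 0$, is immediate: in the range under consideration $k = r - n - \floor*{\frac{ri}{m}} \geq 0$ and $i \geq 1$, so $D_\negalpha(\P) - P_1$ is effective and the constant function lies in $\cL(D_\negalpha(\P) - P_1)$. The second, with $W = (rm - r - m - 1)P_\infty$, is the inequality
$$
\ell(W - D_\negalpha(\P)) \neq \ell(W - D_\negalpha(\P) + P_1),
$$
and reduces to producing a function in $\cL(W - D_\negalpha(\P) + P_1) \setminus \cL(W - D_\negalpha(\P))$.

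My candidate is the function $f_2 = (x - \alpha_1)^k z^{i-1}\prod_{t=2}^{n-1}(x - \alpha_t)$, which is exactly the function $f_2$ of Proposition \ref{discrepancia m upla} in the case $j = n$ (for $n = 2$ the product is empty). From the divisors in Proposition \ref{functions Yang Hu} I would compute $v_{P_1}(f_2) = km + i - 1$, $v_{P_\ell}(f_2) = m + i - 1$ for $2 \leq \ell \leq n-1$, $v_{P_n}(f_2) = i - 1$, and $v_{P_t}(f_2) = i - 1 \geq 0$ for $n < t \leq r$. A coefficient-by-coefficient check then shows that $(f_2) + W - D_\negalpha(\P) + P_1$ is effective at every finite place, while at $P_1$ the coefficient of $(f_2) + W - D_\negalpha(\P)$ equals $-1$; thus $f_2 \in \cL(W - D_\negalpha(\P) + P_1)$ but $f_2 \notin \cL(W - D_\negalpha(\P))$.

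The one substantive computation — and the step I expect to be the main obstacle — is the contribution at $P_\infty$. There $v_{P_\infty}(f_2) = -km - (i-1)r - (n-2)m$, and effectivity of $(f_2) + W - D_\negalpha(\P) + P_1$ at $P_\infty$ demands $v_{P_\infty}(f_2) + (rm - r - m - 1) \geq 0$. Substituting $k = r - n - \floor*{\frac{ri}{m}}$ collapses this to $ri - m\floor*{\frac{ri}{m}} \leq m - 1$, that is, to the always-valid statement that the residue of $ri$ modulo $m$ is at most $m - 1$. With both hypotheses of Lemma \ref{lemma noether} in hand, the equality $\ell(D_\negalpha(\P)) = \ell(D_\negalpha(\P) - P_1)$ follows, and therefore $\nabla_1^n(\negalpha) = \emptyset$.
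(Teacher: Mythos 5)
Your argument is essentially the paper's argument: you use the same auxiliary function $(x-\alpha_1)^k z^{i-1}\prod_{t=2}^{n-1}(x-\alpha_t)$ to establish $\ell(W - D_\negalpha(\P)+P_1)\neq\ell(W-D_\negalpha(\P))$, followed by the same appeal to Lemma \ref{lemma noether} and Proposition \ref{prop 2.1.3 da tese}(ii); your valuation bookkeeping, including the reduction of the condition at $P_\infty$ to $ri - m\floor*{\frac{ri}{m}}\leq m-1$, is correct and is in fact more detailed than what the paper records.

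The one genuine problem is your verification of the hypothesis $\ell(D_\negalpha(\P)-P_1)>0$ of Noether's Reduction Lemma: the claim that $k = r-n-\floor*{\frac{ri}{m}}\geq 0$ throughout the range $1\leq i\leq m-1$ is false. For $i=m-1$ one gets $k=\ceil*{\frac{r}{m}}-n$, which is negative whenever $n>\ceil*{\frac{r}{m}}$ (already for $m=5$, $r=9$, $n=3$, $i=4$ one has $k=-1$); worse, in the only place the lemma is invoked (the case $s=n-1$ in the proof of Theorem \ref{gamma hat}, where $m>r$ and $i\geq m-\floor*{\frac{m}{r}}$) one has $\floor*{\frac{ri}{m}}=r-1$ and hence $k=1-n<0$. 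Then the coefficient $km+i-1$ of $P_1$ in $D_\negalpha(\P)-P_1$ is negative, the divisor is not effective, and the constant function is of no use. The dimension can in fact vanish: for $y^{q+1}=x^q+x$ with $n=2$ and $i=q$ one gets $D_\negalpha(\P)-P_1=-2P_1+(q-1)P_2$, whose Riemann--Roch space is trivial because $1,\ldots,q-1$ are all gaps at $P_2$; there Noether's Reduction Lemma is inapplicable and the desired equality $\ell(D_\negalpha(\P))=\ell(D_\negalpha(\P)-P_1)$ must instead be obtained by observing that both sides are zero. To be fair, the paper's own proof never addresses this hypothesis at all, so the gap is inherited rather than invented; but your explicit justification of it is incorrect, and the case $k<0$ (which is the case that actually matters downstream) needs a separate treatment.
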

\begin{proof}
Let $\negbeta = (km+i, m+ i - 1, \dots, m+i-1, i-1)$. Note that, 
$$
\displaystyle f= (x-\alpha_1)^k z^{i-1} \prod_{j=2}^{n-1} (x - \alpha_j) \in \cL(W - D_{\negbeta}(\P) + P_1) \setminus \cL(W-D_{\negbeta}(\P)).
$$
So, by Lemma \ref{lemma noether}, we have $\ell(D_{\negbeta}(\P)) = \ell(D_{\negbeta}(\P) - P_1)$, and the result follows from Proposition \ref{prop 2.1.3 da tese}.

\end{proof}

In the next result we will compute $\widehat{\Gamma}(\mathbf{P})\cap \mathcal{C}(\P)$, which, according to Theorem \ref{theo_relativemaximal}, determines entirely $\widehat{\Gamma}(\mathbf{P})$.

\begin{theorem} \label{gamma hat} Let $2 \leq n \leq r - \floor*{\frac{r}{m}}$ and let $P_1,\ldots,P_{n}$ be as above. Let
$$\widehat{T}_n=\left\{(km+i,i,\ldots,i)\in \mathbb{Z}^n \ : \ i=1,\ldots,m-1, \mbox{ with } k=r-n-\floor*{\frac{ri}{m}} \right\}\cup \{{\bf 0}\}.$$
Then  $$\widehat{\Gamma}(\P)\cap \mathcal{C}(\P)=\widehat{T}_n.$$
\end{theorem}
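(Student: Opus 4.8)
The plan is to establish the two inclusions $\widehat{T}_n\subseteq\widehat{\Gamma}(\P)\cap\cC(\P)$ and $\widehat{\Gamma}(\P)\cap\cC(\P)\subseteq\widehat{T}_n$ separately. For $\widehat{T}_n\subseteq\widehat{\Gamma}(\P)\cap\cC(\P)$, given a nonzero generator $(km+i,i,\dots,i)$ with $1\le i\le m-1$ and $k=r-n-\floor*{\frac{ri}{m}}$, its divisor $D_{(km+i,i,\dots,i)}(\P)=(km+i)P_1+\sum_{\ell=2}^{n}iP_\ell$ is exactly the divisor $A$ of Proposition \ref{discrepancia m upla}, hence a discrepancy with respect to every pair of distinct places of $\{P_1,\dots,P_n\}$; Proposition \ref{prop equiv discrepancia} then yields $(km+i,i,\dots,i)\in\widehat{\Gamma}(\P)$, and $0\le i<m$ gives membership in $\cC(\P)$. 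For $\mathbf 0$, the constant functions give $\mathbf 0\in\widehat H(\P)$ with $\ell(D_{\mathbf 0}(\P))=1=\ell(D_{\mathbf 0}(\P)-\sum_{\ell=1}^{n}P_\ell)+1$, so $\mathbf 0\in\widehat{\Gamma}(\P)\cap\cC(\P)$ by Proposition \ref{absmax}.

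For the reverse inclusion, take $\beta=(\beta_1,\dots,\beta_n)\in\widehat{\Gamma}(\P)\cap\cC(\P)$, so $0\le\beta_j<m$ for $j\ge 2$ while $\beta_1\in\Z$; assume $\beta\ne\mathbf 0$. I would first treat $\beta_1\ge 0$. If some coordinate of $\beta$ were $0$, say in position $j$, then $\beta\in\N_0^n$ and the constant function puts $\mathbf 0\in\nabla_j^n(\beta)$; absolute maximality forces $\nabla_j^n(\beta)=\{\beta\}$ (Proposition \ref{absmax}), whence $\beta=\mathbf 0$, a contradiction. Thus all coordinates are positive and $\beta\in\widehat{\Gamma}(\P)\cap\N^n=\Gamma^*(\P)$; Theorem \ref{absmaxKummer} gives $\beta=(k_1m+i,\dots,k_nm+i)$ with $\sum_{\ell}k_\ell=r-n-\floor*{\frac{ri}{m}}$, and the constraint $\beta_j<m$ for $j\ge 2$ forces $k_2=\cdots=k_n=0$. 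Hence $\beta=(km+i,i,\dots,i)\in\widehat{T}_n$.

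The decisive case is $\beta_1\le-1$. Here $\beta$ cannot be moved into $\N^n$ by any translation in $\Theta(\P)$: every $\theta\in\Theta(\P)\subseteq m\Z^n$ preserves coordinatewise residues mod $m$ and satisfies $\sum_\ell\theta_\ell=0$, so $\beta+\theta$ keeps both the residues and the coordinate-sum of $\beta$, and when $\beta_1<0$ this sum can fall below the minimal attainable sum of a positive tuple with those residues. Consequently Theorem \ref{absmaxKummer} is unavailable and one must argue directly on the Kummer extension $\cX$. The plan is to use $\nabla_\ell^n(\beta)=\{\beta\}$ for all $\ell$ (Proposition \ref{absmax}) and to construct, from the functions $x-\alpha_j$ and $z$ as in Proposition \ref{discrepancia m upla}, explicit elements $f\in R_\P$ showing that if the coordinates $\beta_2,\dots,\beta_n$ were not all equal, or if $\beta_1\not\equiv\beta_2\pmod m$, then some $\nabla_\ell^n(\beta)$ would properly contain $\beta$, contradicting maximality; this reduces $\beta$ to the shape $(\beta_1,i,\dots,i)$ with $\beta_1\equiv i\pmod m$. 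The exact value $\beta_1=km+i$ is then pinned down by Lemma \ref{nabla vazio}: via Proposition \ref{prop 2.1.3 da tese} it gives $\nabla_1^n\big((km+i,m+i-1,\dots,m+i-1,i-1)\big)=\emptyset$, i.e. $\ell(D_\gamma(\P))=\ell(D_\gamma(\P)-P_1)$ for that $\gamma$, which caps how far the first coordinate of a maximal element of residue $i$ can extend and excludes every alternative value of $\beta_1$.

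I expect the main obstacle to be precisely this case $\beta_1<0$, which is invisible to the nonnegative description $\Gamma^*(\P)$ of Theorem \ref{absmaxKummer}. The most delicate point is proving that $\beta_2,\dots,\beta_n$ must coincide below the positive orthant—the analogue, for negative first coordinate, of the permutation symmetry available for $\widehat{\Gamma}(\P)\cap\N_0^n$—and that the first coordinate is forced to equal $km+i$; both rely on the Riemann--Roch and discrepancy computations on $\cX$ encapsulated in Proposition \ref{discrepancia m upla} and Lemma \ref{nabla vazio}.
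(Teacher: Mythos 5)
Your forward inclusion is exactly the paper's (Propositions \ref{discrepancia m upla} and \ref{prop equiv discrepancia}), and your treatment of the case $\beta_1\ge 0$ is correct and even a little slicker than the paper's: observing that a zero coordinate puts $\mathbf 0\in\nabla_j^n(\negbeta)$ and hence forces $\negbeta=\mathbf 0$ replaces the paper's separate $\lambda_1=0$ argument, and the rest follows from Theorem \ref{absmaxKummer} as you say. You have also correctly identified that the whole difficulty lives in the case $\beta_1<0$, where $\negbeta$ has no $\Theta(\P)$-translate in $\N_0^n$ and Theorem \ref{absmaxKummer} is blind.

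But for that decisive case you give only a plan, and the plan as stated does not go through. Saying you will ``construct explicit elements $f\in R_\P$ showing that some $\nabla_\ell^n(\negbeta)$ properly contains $\negbeta$'' whenever $\beta_2,\dots,\beta_n$ are not all equal is essentially circular: to exhibit such an $f$ you must know which pole-order tuples are realized in $\widehat H(\P)$ below $\negbeta$, which is the content of the theorem being proved. Likewise, Lemma \ref{nabla vazio} by itself does not ``exclude every alternative value of $\beta_1$''; it only kills candidates dominated by one specific tuple of residue class $i$. The paper closes the case $\beta_1<0$ with machinery absent from your proposal: the degree bound $\lambda_1\ge -\sum_{j\ge 2}\lambda_j\ge-(n-1)(m-1)$ (from $\ell(D_{\neglambda}(\P))\ge 1$), the parameter $s=\min\{t:\lambda_1+tm\ge 0\}$, and then a split into $1\le s\le n-2$ --- handled by an \emph{induction on $n$}, translating by $\Theta(\P)$, taking $\lub$ with $\mathbf 0$ (using that $\widehat H(\P)$ is lub-closed) to drop to $\widehat H(P_1,\dots,P_{n-s})$ and invoking the inductive hypothesis --- and $s=n-1$, where the explicit description of $H(P_1)\cap\{0,1,\dots,m\}$ from Theorem \ref{HP1} pins down $\lambda_1=\alpha_1^{i,n}$ before Lemma \ref{nabla vazio} supplies the comparability that forces $\neglambda=\negalpha^{i,n}$. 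Without the induction, the lub trick, and the one-place semigroup input, your argument for $\beta_1<0$ is a statement of intent rather than a proof, so the reverse inclusion is not established.
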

\begin{proof} 
From Propositions \ref{discrepancia m upla} and \ref{prop equiv discrepancia} it follows that $\widehat{T}_n \subseteq \widehat{\Gamma}(\P)\cap \mathcal{C}(\P)$. 

To prove the reverse inclusion we will use induction on $n$. 
Define
$${\negalpha}^{i,n}:=(km+i,i,\ldots,i)\in \mathbb{Z}^n\;, \mbox{ where } k=r-n-\floor*{\frac{ri}{m}}.$$
Thus $\widehat{T}_n=\{\negalpha^{i,n}:i=1,\ldots,m-1\}\cup\{\bf 0\}$. 

For $n=2$, let $(\mu,i) \in \widehat{\Gamma}(P_1, P_2)\cap \mathcal{C}(P_1, P_2)$. Thus, $0 < i < m$ and from Propositions \ref{discrepancia m upla} and \ref{prop equiv discrepancia}, we have $\negbeta =((r-2-\floor*{\frac{ri}{m}})m+i,i) \in \widehat{\Gamma}(P_1, P_2)\cap \cC(P_1, P_2)$. Then, by Proposition \ref{absmax}, $\nabla^2_2((\mu,i)) = \{(\mu,i)\}$ and $\nabla^2_2(\negbeta) = \{\negbeta\}$. Suppose that $(\mu,i) \neq \negbeta$. If $\mu < (r-2-\floor*{\frac{ri}{m}})m+i$, then  $(\mu,i) \in \nabla^2_2(\negbeta)$, a contradiction. And, if $\mu > (r-2-\floor*{\frac{ri}{m}})m+i$, then $\negbeta \in \nabla^2_2((\mu,i))$, a contradiction. So, $(\mu,i) = ((r-2-\floor*{\frac{ri}{m}})m+i,i)$, and we have that the result is true for $n=2$.

Let now $n\geq 3$ and suppose that $\widehat{\Gamma}(P_1,\ldots,P_{k})\cap \cC(P_1, \dots, P_k)=\widehat{T}_{k}$ for $k=2,\ldots,n-1$. Note that $\alpha_1^{i,n} = \alpha_1^{i,n-j} - jm$ for any $j \in \{1,\ldots, n-1\}$.

Let $\mathbf{0} \neq \neglambda = (\lambda_1, \ldots , \lambda_n) \in \widehat{\Gamma}(\P)\cap \cC(\P)$.  If $\lambda_1 > 0$, then the result follows from Theorem \ref{absmaxKummer}. If $\lambda_1=0$, then $\tilde{\neglambda}= (\lambda_2, \ldots , \lambda_n) \in H(P_2, \ldots , P_n)$ with $\lambda_j < m$ for each $j=2,\ldots,n$. By Theorem \ref{absmaxKummer}, if $\negbeta = (\beta_2,\ldots,\beta_{n}) \in \Gamma(P_2, \ldots, P_n)$ with $\beta_\ell < m$ for any $\ell \in \{2,\ldots,n\}$, then $\negbeta = (i,\ldots,i)$. So, for an absolute maximal $\negbeta^{(1)} = (i^{(1)},\ldots,i^{(1)}) \in \nabla_{1}^{n-1}(\tilde{\neglambda})$, we get $\lambda_2 = i^{(1)}$. Now, for an absolute maximal $\negbeta^{(2)} = (i^{(2)},\ldots,i^{(2)}) \in \nabla_{2}^{n-1}(\tilde{\neglambda})$, we get $\lambda_3 = i^{(2)}$. Hence $i^{(1)}=i^{(2)}$. Then, we get $\tilde{\neglambda} = (i,\ldots,i)$ and so $\neglambda = (0, i \ldots , i) \in H(\P)$. Thus there is an absolute maximal $\negbeta \in \Gamma(\P)$ such that $\negbeta \in \nabla_{1}^{n}(\neglambda)$, a contradiction by Theorem \ref{absmaxKummer}, since there is no element in $\Gamma(\P)$ with this characteristic. Therefore we can suppose $\lambda_1 < 0$.

Note that by Proposition \ref{absmax} $\ell(D_{\neglambda}(\P)) \geq 1$. Thus $\deg(D_{\neglambda}(\P)) \geq 0$ and so $\textstyle \lambda_1 \geq - \sum_{j=2}^{n} \lambda_j \geq - (n-1)(m-1)$. Then, we have that $\lambda_1 + (n-1)m \geq 0$. Let $s:=\min \{ t \in \mathbb{N}_0 : \lambda_1 + tm \geq 0 \}$. Thus $1 \leq s \leq n-1$.

$\bullet$ If $1 \leq s \leq n-2$, then $2 \leq n-s \leq n-1$. Without loss of generality assume that $\lambda_2 = \min \{\lambda_j : 2 \leq j \leq n\}$. Hence Theorem \ref{theo_relativemaximal} and (\ref{eta_i}) lead to
$$
\neglambda' = (\lambda_1 + sm, \lambda_2, \ldots, \lambda_{n-s}, \lambda_{n-s+1} -m , \ldots, \lambda_n - m) \in \widehat{H}(\P).
$$

So, $\negbeta = \mbox{lub} (\neglambda', \mathbf{0}) = (\lambda_1 + sm, \lambda_2, \ldots, \lambda_{n-s},0,\ldots,0) \in \widehat{H}(\P)$, and we have that $\tilde{\negbeta} = (\lambda_1 + sm, \lambda_2, \ldots, \lambda_{n-s})  \in \widehat{H}(P_1, \ldots , P_{n-s})$. In particular, $\nabla_{2}^{n-s}(\tilde{\negbeta}) \neq \emptyset$. From the induction hypothesis, there exists $\negalpha^{i, n-s} \in \widehat{T}_{n-s}$ such that $\negalpha^{i, n-s} \in \nabla_{2}^{n-s}(\tilde{\negbeta})$. Since $\alpha_{1}^{i,n} = \alpha_{1}^{n-s} -sm$ and $\alpha_2 = i \leq \alpha_j$, for each $j=3,\ldots,n$, we have that $\negalpha^{i,n} \in \nabla_{2}^{n} (\neglambda)$. Therefore, $\neglambda = \negalpha^{i,n}$.

\medskip

$\bullet$ If $s=n-1$, by using a similar argument, we obtain $\lambda_1 + (n-1)m \in H(P_1)$. By definition of $s$, we have that $\lambda_1 + (n-1)m < m$. By Theorem \ref{HP1}, $H(P_1) \cap \{0,1,\ldots,m\} = \{ 0,m - \lfloor \frac{m}{r}\rfloor, \ldots, m \}$. Thus $\lambda_1 + (n-1)m =0$ or $\lambda_1 + (n-1)m = m - j$, with $j \in \{ 1, \ldots , \lfloor \frac{m}{r}\rfloor \}$. Note that, if $\lambda_1 + (n-1)m =0$, then $\alpha_1 = -(n-1)m$, a contradiction, since $\alpha_1 \geq - (n-1)(m-1)$. 

Now, let $\lambda_1 + (n-1)m = m - j$, with $j \in \{ 1, \ldots , \lfloor \frac{m}{r}\rfloor \}$. Observe that in this case $m>r$ and we can write $\lambda_1 = -(n-1)m + m - (\lfloor \frac{ri}{m} \rfloor m + 2m - rm - i) = (r-n-\lfloor \frac{ri}{m} \rfloor)m + i$, for some $i \in \{ m - \lfloor \frac{m}{r} \rfloor , \ldots, m-1 \}$.  So, $\lambda_1 = \alpha_{1}^{i,n}$.

Suppose that $\neglambda$ and $\negalpha^{i,n}$ are not comparable in the partial order $\leq$. Without loss of generality, we may assume that $\lambda_n<\alpha_n^{i,n}$. In this way, since $\alpha^{i,n}_k \geq 1$ for all $k=2,\ldots,n$, we have that $\neglambda \in \nabla_{1}^{n}((\alpha^{i,n}_1,\alpha^{i,n}_2 +m - 1,\ldots,\alpha^{i,n}_{n-1}+m-1,\alpha^{i,n}_n - 1))$, a contradiction by Lemma \ref{nabla vazio}. Thus $\neglambda$ and $\negalpha^{i,n}$ are comparable with respect to $\leq$, which leads to $\neglambda=\negalpha^{i,n}$ by their absolute maximality. Therefore, $\widehat{\Gamma}(\P)\cap \cC(\P) \subseteq \widehat{T}_n$.
\end{proof}

In the next result, we will explicit the relative maximal elements of $\widehat{H}(\P)$  in the region $\mathcal{C}(\P)$. %According to Theorem \ref{theo_relativemaximal}, these elements determine all relative maximal elements of $\widehat{H}(\P)$ and using the results in the previous section we can determine the pure gap set $G_0(\P)$.

\begin{theorem}\label{relativemaxKummer}
Let $F/K(x)$ be the Kummer extension given in (\ref{KummerEquation}) with genus $g \geq 1$. For $2 \leq n \leq r - \floor*{\frac{r}{m}}$, let
$$\begin{array}{rcll}
\negbeta^{0,n} & = & \left((n-2)m,0,\ldots,0\right), & \mbox{and} \\
\negbeta^{i,n} & = & \left((k+n-2)m + i,i,\ldots,i\right), & \mbox{for } 1 \leq i\leq m-1, \mbox{ } k=r-n-\floor*{\frac{ri}{m}}.
\end{array}
$$
Then
$$\widehat{\Lambda}(\P) \cap \cC(\P)=\left\{\negbeta^{i,n} \ : \ 0 \leq i\leq m-1 \right\}.$$
\end{theorem}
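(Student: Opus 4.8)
The plan is to apply Theorem \ref{theo_relativemaximal}, which reduces the determination of $\widehat{\Lambda}(\P)$ to that of $\widehat{\Lambda}(\P)\cap\cC(\P)$, and then to prove the stated equality by two inclusions. Each $\negbeta^{i,n}$ lies in $\cC(\P)$ because its coordinates $2,\dots,n$ all equal $i\in\{0,\dots,m-1\}\subseteq[0,m)$; see (\ref{Cm}). For the inclusion $\{\negbeta^{i,n}\}\subseteq\widehat{\Lambda}(\P)$ I would use the discrepancy characterization of Proposition \ref{relmax}(iii), taking the distinguished index to be $1$: it then suffices to prove, for each $i$, that $A_j:=D_{\negbeta^{i,n}-\neg1}(\P)+P_1+P_j$ is a discrepancy with respect to $P_1$ and $P_j$ for every $j\in\{2,\dots,n\}$. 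This mirrors Proposition \ref{discrepancia m upla}, and as there the two required conditions should be witnessed by explicit functions built from the $x-\alpha_t$ and from $z$ of Proposition \ref{functions Yang Hu}.

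Concretely, for $1\le i\le m-1$ and $k=r-n-\floor*{\frac{ri}{m}}$, the function
$$
f=\frac{\prod_{t=n+1}^{r}(x-\alpha_t)\,\prod_{\ell=2,\ \ell\ne j}^{n}(x-\alpha_\ell)}{(x-\alpha_1)^{\,k+n-2}\,z^{\,i}}
$$
satisfies $v_{P_j}(f)=-i$ and $f\in\cL(A_j)\setminus\cL(A_j-P_j)$, so that $\cL(A_j)\ne\cL(A_j-P_j)$; the computation $v_{P_\infty}(f)=(ri\bmod m)\ge 0$ is what forces the exponent $k+n-2$. For the other condition $\cL(A_j-P_1)=\cL(A_j-P_1-P_j)$ I would invoke Noether's Reduction Lemma (Lemma \ref{lemma noether}) with $D=A_j-P_1-P_j=D_{\negbeta^{i,n}-\neg1}(\P)$ and $P=P_j$, the needed inequality $\ell(W-D-P_j)\ne\ell(W-D)$ being witnessed by $h=(x-\alpha_1)^{\,k+n-2}z^{\,i-1}\in\cL(W-D)\setminus\cL(W-D-P_j)$, whose pole at $P_\infty$ respects $W=(rm-r-m-1)P_\infty$ exactly because $(ri\bmod m)\le m-1$. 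The hypothesis $\ell(D)>0$ is checked by exhibiting one element of $\cL(D_{\negbeta^{i,n}-\neg1}(\P))$, a monomial in $x-\alpha_1$ and $z$ (for $i=1$ simply the constant $1$, since then $D_{\negbeta^{1,n}-\neg1}(\P)=(k+n-2)mP_1\ge 0$). The element $\negbeta^{0,n}=((n-2)m,0,\dots,0)$ must be treated separately, since there $\ell(D_{\negbeta^{0,n}-\neg1}(\P))$ may vanish and Lemma \ref{lemma noether} does not apply; I would instead verify $\negbeta^{0,n}\in\widehat{\Lambda}(\P)$ directly through Proposition \ref{relmax}(ii), computing $\ell(D_{\negbeta^{0,n}}(\P))$ and $\ell(D_{\negbeta^{0,n}-\neg1}(\P))$ by Riemann--Roch together with the semigroup $H(P_1)$ from Theorem \ref{HP1}.

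The harder half is the reverse inclusion $\widehat{\Lambda}(\P)\cap\cC(\P)\subseteq\{\negbeta^{i,n}:0\le i\le m-1\}$, and I expect it to be the main obstacle. The route I would follow imitates the proof of Theorem \ref{gamma hat}: induct on $n$, using Theorem \ref{theo_relativemaximal} and the generators $\negeta^{i}$ of (\ref{eta_i}) to drop a relative maximal at $n$ places to one at fewer places, and finishing with a comparability argument in the spirit of Lemma \ref{nabla vazio} to pin down each coordinate. As an alternative that exploits the work already done, one can note that the stated set has exactly $m$ elements, as does $\widehat{\Gamma}(\P)\cap\cC(\P)=\widehat{T}_n$ by Theorem \ref{gamma hat}, and try to prove that $\negalpha\mapsto\negalpha+(n-2)m\nege_1$ is a bijection from $\widehat{\Gamma}(\P)\cap\cC(\P)$ onto $\widehat{\Lambda}(\P)\cap\cC(\P)$; since $(x-\alpha_1)=mP_1-mP_\infty$ gives $D_{\negalpha+(n-2)m\nege_1}(\P)\sim D_\negalpha(\P)+(n-2)mP_\infty$, this amounts to showing that adjoining $(n-2)mP_\infty$ converts the single dimension jump of Proposition \ref{absmax}(iv) (absolute maximality) into the jump of size $n-1$ of Proposition \ref{relmax}(ii) (relative maximality). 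Controlling this change of jump across the Riemann--Roch boundary is precisely the delicate point on which the completeness of the list rests.
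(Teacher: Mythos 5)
Your first half (the inclusion $\{\negbeta^{i,n}\}\subseteq\widehat{\Lambda}(\P)\cap\cC(\P)$) follows essentially the paper's route: Proposition \ref{relmax}(iii) with distinguished index $1$, explicit witness functions in $x-\alpha_t$ and $z$, and Noether's Reduction Lemma for the vanishing condition; your functions differ from the paper's $h_j$ and $h$ but the valuation computations check out. Two caveats there. First, your claim that $\ell(D_{\negbeta^{i,n}-\neg1}(\P))>0$ can be ``checked by exhibiting a monomial'' is false in general: when $m>r$ and $i$ is large one can have $(k+n-2)m+i-1<0$ and in fact $\ell(D_{\negbeta^{i,n}-\neg1}(\P))=0$ (e.g.\ $m=5$, $r=3$, $n=2$, $i=4$ gives $D=-2P_1+3P_2$ with $\ell(D)=0$). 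The conclusion you want still holds because $\ell(W-D-P)=\ell(W-D)-1$ forces $\ell(D+P)=\ell(D)$ by Riemann--Roch alone, but you should argue it that way rather than assert a witness that need not exist. Second, for $\negbeta^{0,n}$ you invoke Proposition \ref{relmax}(ii) but only discuss the dimension count $\ell(D_{\negbeta^{0,n}}(\P))=\ell(D_{\negbeta^{0,n}-\neg1}(\P))+n-1$; condition (ii) also requires $\nabla(\negbeta^{0,n})=\emptyset$, which you never address. (The paper instead verifies the discrepancy condition (iii) for $\negbeta^{0,n}$ with two explicit functions.)

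The genuine gap is the reverse inclusion, which you yourself flag as unresolved. Neither of your two sketches closes it: route (a) is only the hope that the induction of Theorem \ref{gamma hat} transfers, and route (b) reduces the theorem to showing that adding $(n-2)m\nege_1$ converts the dimension jump of Proposition \ref{absmax}(iv) into that of Proposition \ref{relmax}(ii) --- which is exactly the statement to be proved, not a proof of it. The paper's actual argument is different from both and is worth internalizing: for $\negbeta\in\widehat{\Lambda}(\P)\cap\cC(\P)$ with $\negbeta\neq\negbeta^{0,n}$, each set $\nabla_j^n(\negbeta)$, $j=2,\dots,n$, contains an absolute maximal element, which by Theorem \ref{gamma hat} must be some $\negalpha^{i,n}=(\ast,i,\dots,i)$; comparing the conclusions for $j=2$ and $j=3$ (and so on) forces $\beta_2=\cdots=\beta_n=i$, so that $\negalpha^{i,n}\in\nabla_J(\negbeta)$ for $J=\{2,\dots,n\}$ and the assignment $\negbeta\mapsto\negalpha^{i,n}$ is injective. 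This bounds $\#\bigl((\widehat{\Lambda}(\P)\cap\cC(\P))\setminus\{\negbeta^{0,n}\}\bigr)$ by $\#(\widehat{\Gamma}(\P)\cap\cC(\P))$, and the already-established forward inclusion then forces equality. Without this (or some substitute) you have only produced candidates, not shown the list is complete.
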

\begin{proof} 
If $n=2$, then by Propositions \ref{absmax} and \ref{relmax} $\widehat{\Gamma}(\P)=\widehat{\Lambda}(\P)$ and so the result follows from Theorem \ref{gamma hat}.
Suppose $n\geq 3$. In order to verify that $\negbeta^{0,n}\in \widehat{\Lambda}(\P)$, we will prove that $\mathcal{L}(A)\neq \mathcal{L}(A-P_1)$ and $\mathcal{L}(W-A+P_1+P_j)\neq \mathcal{L}(W-A+P_j)$ for each $j\in \{2,\ldots,n\}$, where $A=D_{\negbeta^{0,n}-\textbf{1}+\textbf{e}_1+\textbf{e}_j}(\P)$. Observe that
$$
A=(n-2)m P_1-\sum_{t=2 \atop t\neq j}^{n}P_t
$$
and therefore
$$
(x - \alpha_1)^{2-n}\prod_{\substack{t=2\\t\neq j}}^{n}(x - \alpha_t) \in \mathcal{L}(A)\backslash \mathcal{L}(A-P_1).$$
Moreover, since $W-A+P_1+P_j= (rm - r - m - 1)P_{\infty} - (n-2)m P_1 + \sum_{t=1}^n P_t$, we get
$$
z^{-1}(x - \alpha_1)^{n-2} \prod_{t=n+1}^{r}(x - \alpha_t)\in \mathcal{L}(W-A+P_1+P_j)\backslash \mathcal{L}(W-A+P_j)
$$
and so we conclude that $\negbeta^{0,n} \in \widehat{\Lambda}(\P)\cap \cC(\P)$.

Let us denote $R_n=\left\{\negbeta^{i,n} \ : \ 1 \leq i \leq m-1 - \floor*{\frac{m}{r}}\right\}$. First, we will prove that $R_n\subseteq \widehat{\Lambda}(\P) \cap \cC(\P)$. By Proposition~\ref{relmax}, it is sufficient to prove that $A=D_{\negbeta^{i,n}-\mathbf{1}+\nege_1+\nege_j}(\P)$, with $1 \leq i \leq m-1 - \lfloor \frac{m}{r} \rfloor$, is discrepancy with respect to $P_1$ and $P_j$ for any $j\in I\backslash\{1\}$. 

Note that,
$$
A= [(k+n-2)m + i] P_1 + iP_j + (i-1) \sum_{\substack{t=2\\ t\neq j}}^{n} P_t.
$$
Let $\displaystyle h_j= \dfrac{(x - \alpha_j)^k \prod_{\substack{t=2\\ t\neq j}}^r (x - \alpha_t)^{k+1}}{(x-\alpha_1)^{n-2}z^{km + i}}$. Then, by Proposition \ref{functions Yang Hu}, we have that
\begin{align*}
(h_j) &= \displaystyle  k m P_j - k m P_{\infty} + (k+1)m \sum_{\substack{t=2\\ t\neq j}}^{r} P_t - (r-2)(k+1)m P_{\infty}\\
&\quad -(n-2)m P_1 + (n-2)m P_{\infty} - (km + i) \sum_{t=1}^{r} P_t + (km + i)r P_{\infty}\\
&\displaystyle = - [(k+n-2)m + i]P_1 -iP_j + (m - i) \sum_{\substack{t=2\\ t\neq j}}^{r} P_t + [(k+n-r)m + ri]P_{\infty}.
\end{align*}
Thus, $\displaystyle (h_j) + A = (m - 1)\textstyle\sum_{\substack{t=2\\ t\neq j}}^{r} P_t + [(k+n-r)m + ri]P_{\infty}$. So, $h_j \in \mathcal{L}(A)\setminus \mathcal{L}(A - P_j)$ since $k=r-n-\floor*{\frac{ri}{m}}$ and $m \geq 1$. This implies that $\cL(A)\neq\cL(A-P_j)$ for any $j \in \{2,\ldots,n\}$. Now, we must prove that $\mathcal{L}(A-P_1)=\mathcal{L}(A-P_1-P_j)$. By Lemma \ref{lemma noether}, it suffices to prove that $\mathcal{L}(W-A+P_1+P_j)\neq \mathcal{L}(W-A+P_1)$. We have
$$W-A+P_1+P_j=(r m - r - m - 1)P_{\infty} - [(k+n-2)m + i-1] P_1 - (i-1) \sum_{\substack{t=2}}^{n} P_t.$$
Let $\displaystyle h = \dfrac{(x - \alpha_1)^{n-2} z^{km + i-1}}{\prod_{t=2}^{r}(x-\alpha_t)^k}$. By Proposition \ref{functions Yang Hu}, we have that
\begin{align*}
(h) &= (n-2)m P_1 - (n-2)m P_{\infty} + (km +i - 1) \sum_{t=1}^{r}P_t - (km + i - 1)r P_{\infty}\\
 & \quad  - km  \sum_{t=2}^{r} P_t + km (r-1)P_{\infty}\\
 & \displaystyle = [(k+n-2)m + i - 1]P_1 + (i-1) \sum_{t=2}^{r} P_t - [(k+n-2)m +ri - r]P_{\infty}.
\end{align*}
Thus, 
$$ (h) + W-A+P_1+P_j   \displaystyle = (i-1) \sum_{t=n+1}^{r} P_t + \left[\left(1+ \floor*{\frac{ri}{m}}\right)m - ri - 1 \right]P_{\infty}.$$ 
So, $h\in \mathcal{L}(W-A+P_1+P_j)\backslash \mathcal{L}(W-A+P_1)$, since $(1+ \lfloor \frac{ri}{m} \rfloor)m - ri - 1 \geq 0$. Therefore, we have that $A$ is a discrepancy with respect to $P_1$ and $P_j$ for each $j\in \{2,\ldots,n\}$ and, by Proposition \ref{relmax}, we get $\negbeta^{i,n}\in \widehat{\Lambda}(\P)$ for any $1\leq i \leq m-1 - \lfloor \frac{m}{r} \rfloor$. Thus, we conclude that $R_n\subseteq \widehat{\Lambda}(\P) \cap \cC(\P)$. 

Now, let $\negbeta\in \widehat{\Lambda}(\P)\cap \cC(\P)$, with $\negbeta \neq \negbeta^{0,n}$. Since $\nabla_j^n(\negbeta)\neq \emptyset$ for any $j\in \{2,\ldots,n\}$, there exists an absolute maximal element $\negalpha^{i,n}\in \nabla_2^n(\negbeta)$, and thus (Theorem \ref{gamma hat}) $\beta_2=i$ and $\beta_3\geq i$. Similarly, there exists an absolute maximal element $\negalpha^{i',n}\in \nabla_3^n(\negbeta)$, and thus $\beta_3=i'$ and $\beta_2\geq i'$. Hence $i=i'$. Proceeding in the same way with pairs of the remaining indexes, we conclude that there exists an absolute maximal element $\negalpha^{i,n}\in \bigcap_{j=2}^n \nabla_j^n(\negbeta)$ and in particular that $\beta_j=i$ for $j=2,\ldots,n$. As $\negbeta\in \widehat{\Lambda}(\P)$ and $n\geq 3$, it follows that $\negbeta\neq \negalpha^{i,n}$ and thus $\beta_1>\alpha^{i,n}_1$. Hence, for each $\negbeta\in \widehat{\Lambda}(\P)\cap \cC(\P)$, there exists a unique $\negalpha^{i,n}\in~\widehat{\Gamma}(\P)\cap \cC(\P)$ such that $\negalpha^{i,n}\in~\nabla_{J}(\negbeta)$, where $J=\{2,\ldots,n\}$. Therefore, $\#(\widehat{\Gamma}(\P)\cap \cC(\P))\geq \#((\widehat{\Lambda}(\P)\cap \cC(\P))\setminus \{\negbeta^{0,n}\})$. As $\#R_n=\#(\widehat{\Gamma}(\P) \cap \cC(\P))$ and $R_n\subseteq ((\widehat{\Lambda}(\P)\cap \cC(\P))\setminus \{\negbeta^{0,n}\})$, we have $ ((\widehat{\Lambda}(\P)\cap \cC(\P))\setminus \{\negbeta^{0,n}\})=R_n$, which proves the result.
\end{proof}

\begin{corollary}\label{corollary-LambdaKummer}
Let $F/K(x)$ be the Kummer extension given in (\ref{KummerEquation}) and assume $g \geq 1$. For $2 \leq n \leq r-\floor*{\frac{r}{m}}$,
\begin{align*}
\Lambda^*(\P)=\Bigg\{(k_1m+i, k_2m+i, \dots, k_nm+i)&: 1\leq i \leq m-1-\floor*{\frac{m}{r}},\, k_1, k_2, \dots, k_n\geq 0,\\
&\quad k_1+\dots+k_n=r-2-\floor*{\frac{ri}{m}}\Bigg\}.
\end{align*}
\end{corollary}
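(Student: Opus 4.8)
The plan is to obtain $\Lambda^*(\P)=\widehat{\Lambda}(\P)\cap\N^n$ by intersecting the full generalized relative maximal set with $\N^n$. By Theorem~\ref{relativemaxKummer}, the set $\widehat{\Lambda}(\P)\cap\cC(\P)$ consists of $\negbeta^{0,n}$ together with the elements $\negbeta^{i,n}$, and Theorem~\ref{theo_relativemaximal} gives $\widehat{\Lambda}(\P)=(\widehat{\Lambda}(\P)\cap\cC(\P))+\Theta(\P)$, with $\Theta(\P)$ generated by the $\negeta^j$ of (\ref{eta_i}). So every element of $\widehat{\Lambda}(\P)$ has the shape $\negbeta^{i,n}+\z$ with $\z=\sum_{j=2}^{n}t_j\negeta^j\in\Theta(\P)$, and I would decide exactly which of these lie in $\N^n$. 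Since each $\negeta^j$ has all coordinates divisible by $m$, the element $\negbeta^{i,n}+\z$ has every coordinate congruent to $i$ modulo $m$; this lets me organize the whole computation according to the residue $i$ and reduces the corollary to analyzing each seed separately.

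First I would treat the seeds $\negbeta^{i,n}=((k+n-2)m+i,i,\dots,i)$ with $i\geq 1$, where $k=r-n-\floor*{\frac{ri}{m}}$. Writing $\negbeta^{i,n}+\z$ coordinatewise, the $n$-th coordinate is $i+mt_n$, the $j$-th coordinate for $2\leq j\leq n-1$ is $i+m(t_j-t_{j+1})$, and the first is $(k+n-2)m+i-mt_2$. Because each coordinate is congruent to $i\pmod m$ with $1\leq i\leq m-1$, positivity of a coordinate is equivalent to its ``quotient'' being nonnegative; so I would set $k_n=t_n$, $k_j=t_j-t_{j+1}$ for $2\leq j\leq n-1$, and $k_1=k+n-2-t_2$, obtaining the reparametrization of $\Theta(\P)$ under which $\negbeta^{i,n}+\z=(k_1m+i,\dots,k_nm+i)$ lies in $\N^n$ if and only if $k_1,\dots,k_n\geq 0$. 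A telescoping computation then gives the invariance $k_1+\cdots+k_n=k+n-2=r-2-\floor*{\frac{ri}{m}}$, which is precisely the sum condition in the statement. This step, the faithful translation of the lattice $\Theta(\P)$ into the coordinates $(k_1,\dots,k_n)$ together with the sum identity, is the technical heart of the argument, although it is entirely routine once the change of variables is fixed.

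It then remains to pin down the admissible range of $i$ and to dispose of the seed $\negbeta^{0,n}$. For the range, I would observe that a seed $\negbeta^{i,n}$ contributes a nonempty family to $\Lambda^*(\P)$ exactly when the target sum $r-2-\floor*{\frac{ri}{m}}$ is nonnegative, and verify the arithmetic equivalence $r-2-\floor*{\frac{ri}{m}}\geq 0\iff i\leq m-1-\floor*{\frac{m}{r}}$ by comparing $i<m-\frac{m}{r}$ with $\floor*{\frac{m}{r}}$, treating the cases $r\mid m$ and $r\nmid m$ separately; this yields the bound $1\leq i\leq m-1-\floor*{\frac{m}{r}}$ and shows that seeds with larger $i$, even if admissible as relative maximal elements, contribute nothing in $\N^n$. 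Finally, for $\negbeta^{0,n}=((n-2)m,0,\dots,0)$ every coordinate of $\negbeta^{0,n}+\z$ is a multiple of $m$, so positivity forces each such coordinate to be at least $m$; reading the coordinates from the bottom gives $t_n\geq 1$ and $t_j\geq t_{j+1}+1$, whence $t_2\geq n-1$, while the first coordinate forces $n-2-t_2\geq 1$, i.e. $t_2\leq n-3$, a contradiction. Thus $\negbeta^{0,n}$ contributes no element to $\Lambda^*(\P)$, and taking the union over $1\leq i\leq m-1-\floor*{\frac{m}{r}}$ gives exactly the claimed description.
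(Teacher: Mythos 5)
Your proposal is correct and follows essentially the same route as the paper: the paper's one-line proof invokes Proposition \ref{Gammasymmetry} together with Theorem \ref{relativemaxKummer}, and your argument simply unpacks that — translating the fundamental-domain seeds $\negbeta^{i,n}$ by the lattice $\Theta(\P)$, performing the telescoping change of variables to get $k_1+\cdots+k_n=r-2-\floor*{\frac{ri}{m}}$, and checking positivity. Your explicit verification that $\negbeta^{0,n}$ and the seeds with $i>m-1-\floor*{\frac{m}{r}}$ contribute nothing to $\N^n$ is a welcome detail the paper leaves implicit.
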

\begin{proof}
It follows from Proposition \ref{Gammasymmetry} and Theorem \ref{relativemaxKummer}.
\end{proof}

\subsection{The pure gap set $G_0(\P)$}
We begin this section by presenting an explicit description for the set $G_{k, 0, \dots, 0}$ for $k\geq 0$.

\begin{proposition}\label{Kummer-Gk0..0}
For $0\leq k \leq r-n-1-\floor*{\frac{r}{m}}$, let 
\begin{equation} \label{Bk}
B_k:=\left\{(i_1, i_2, \dots, i_n)\in \N^n:  1\leq i_j \leq m-\ceil*{\frac{m(k+j)}{r}} \text{ for }j=1,\dots, n\right\}.
\end{equation}
Then
$$
G_{k, 0, \dots, 0}=(km, 0, \dots, 0)+\bigcup_{\si\in S_n}\si(B_k)
$$
for $0\leq k \leq r-n-1-\floor*{\frac{r}{m}}$, and $G_{k, 0, \dots, 0}=\emptyset$ for $r-n-\floor*{\frac{r}{m}}\leq k$.
\end{proposition}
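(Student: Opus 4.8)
The plan is to combine the explicit shape of $G_{k,0,\dots,0}$ from (\ref{Gk1...km}) with the explicit description of $\Lambda^*(\P)$ in Corollary \ref{corollary-LambdaKummer}, reducing everything to a combinatorial feasibility question about box-indices. By (\ref{Gk1...km}), an element of $G_{k,0,\dots,0}$ is $\glb(\negbeta^1,\dots,\negbeta^n)$ for relative maximal $\negbeta^l\in\Lambda^*(\P)$ subject to the box constraints coming from $(k_1,\dots,k_n)=(k,0,\dots,0)$ together with the strict-minimum conditions $\beta_l^l<\min_{j\neq l}\beta_l^j$. Writing each witness as $\negbeta^l=(k_1^l m+i^l,\dots,k_n^l m+i^l)$ with $1\le i^l\le m-1-\floor*{\frac{m}{r}}$ and $\sum_t k_t^l=r-2-\floor*{\frac{ri^l}{m}}$ (Corollary \ref{corollary-LambdaKummer}), the box constraints force $k_1^1=k$, $k_l^l=0$ for $l\ge 2$, and $k_1^l\ge k$ for all $l$. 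Hence the strict-minimum conditions make the glb equal to the diagonal $(km+i^1,i^2,\dots,i^n)$, and the first task is to determine exactly which residue-tuples $(i^1,\dots,i^n)$ admit such witnesses.

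Next I would convert the strict inequalities into lower bounds on the box-indices. Fixing $j$, the requirement $\beta_l^j>\beta_l^l$ for every $l\neq j$ is a condition on $\negbeta^j$ alone, since the diagonal entries $\beta_l^l$ are pinned down by the target; using $1\le i^l\le m-1$ a short floor computation shows the minimal admissible $k_l^j$ equals the indicator $[\,i^l\ge i^j\,]$ for $l\ge 2$, while $k_1^j\ge k+[\,i^1\ge i^j\,]$. As the total $\sum_t k_t^j$ is forced to be $r-2-\floor*{\frac{ri^j}{m}}$ and any surplus may be dumped on a free coordinate (here $n\ge2$ provides one), and as the witnesses may be chosen independently once the diagonal is fixed, admissible witnesses exist if and only if
\begin{equation}\label{feaseq}
k+\#\{\,l\neq j: i^l\ge i^j\,\}\le r-2-\floor*{\frac{ri^j}{m}}\qquad\text{for all }j.
\end{equation}
Thus (\ref{feaseq}) is necessary and sufficient for $(km+i^1,i^2,\dots,i^n)\in G_{k,0,\dots,0}$.

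The combinatorial core, which I expect to be the main obstacle, is to identify (\ref{feaseq}) with membership in $\bigcup_{\si\in S_n}\si(B_k)$. Putting $c(v):=\#\{l:i^l\ge v\}$, condition (\ref{feaseq}) reads $c(i^j)+k+\floor*{\frac{ri^j}{m}}\le r-1$, which depends only on the values, so it suffices to test distinct residues. Sorting the residues decreasingly as $i_{[1]}\ge\cdots\ge i_{[n]}$ and noting that $c(i_{[p]})$ equals the largest index of the tie-block containing $i_{[p]}$, a careful tie analysis shows (\ref{feaseq}) is equivalent to $p+k+\floor*{\frac{r\,i_{[p]}}{m}}\le r-1$ for every $p$. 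Using $\gcd(m,r)=1$, which forces $\frac{r\,i_{[p]}}{m}\notin\Z$, this rewrites as $i_{[p]}\le m-\ceil*{\frac{m(k+p)}{r}}=:b_p$. Since $b_1\ge\cdots\ge b_n$, a Hall-type rearrangement argument identifies the tuples whose decreasing reordering obeys $i_{[p]}\le b_p$ precisely with $\bigcup_{\si\in S_n}\si(B_k)$, in agreement with the symmetry of $G_{k,0,\dots,0}$ recorded in Corollary \ref{corollary-permutation-Gk0..0}.

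Finally I would settle the empty range. The same rewriting shows $B_k\neq\emptyset$ exactly when the smallest bound satisfies $b_n\ge1$, which (again via $\gcd(m,r)=1$) is equivalent to $k\le r-n-1-\floor*{\frac{r}{m}}$; for $k\ge r-n-\floor*{\frac{r}{m}}$ one has $B_k=\emptyset$, whence $G_{k,0,\dots,0}=\emptyset$ by the equivalence above. I would also verify that the box bound automatically gives $i^j\le m-1-\floor*{\frac{m}{r}}$, so that the constructed witnesses genuinely lie in $\Lambda^*(\P)$; this follows from $\ceil*{\frac{m(k+j)}{r}}\ge\ceil*{\frac{m}{r}}=1+\floor*{\frac{m}{r}}$, and $r-2-\floor*{\frac{ri^j}{m}}\ge k\ge0$ guarantees each $\negbeta^j$ is well defined.
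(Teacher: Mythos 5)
Your proposal is correct and follows essentially the same route as the paper's proof: both reduce membership in $G_{k,0,\dots,0}$ to the feasibility of choosing box-indices for witnesses in $\Lambda^*(\P)$ via Corollary \ref{corollary-LambdaKummer} and (\ref{Gk1...km}), both extract the same counting bound $k+\#\{l\neq j: i^l\geq i^j\}\leq r-2-\floor*{\frac{ri^j}{m}}$ leading to $i_{[p]}\leq m-\ceil*{\frac{m(k+p)}{r}}$, and both prove the reverse inclusion by explicitly constructing witnesses with minimal box-indices and dumping the surplus on a free coordinate. The only difference is organizational: you state a single two-sided feasibility criterion before sorting, whereas the paper sorts first via Corollary \ref{corollary-permutation-Gk0..0} and handles emptiness, necessity, and sufficiency as three separate claims.
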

\begin{proof}
We begin by remembering that an element $(km+i_1, i_2, \dots, i_n)$ is in $G_{k, 0, \dots, 0}$ if and only if there exist $\negbeta^1, \dots, \negbeta^n$ in $\Lambda^*(\P)$ such that $\glb(\negbeta^1, \dots, \negbeta^n)=(km+i_1, i_2, \dots, i_n)$ and
\begin{equation}\label{equation_condition_1}
\beta_j^j<\min\{\beta_j^1, \dots, \widehat{\beta_j^j}, \dots, \beta_j^n\}\quad \text{for}\quad 1\leq j \leq n.
\end{equation}
Furthermore, note that from (\ref{Gk1...km}) and Corollary \ref{corollary-LambdaKummer}, each element $\negbeta^j$ must be of the form
\begin{equation}\label{betas}
\negbeta^j=((k+k_1^j)m+i_j, k_2^jm+i_j, \dots, k_n^jm+i_j),
\end{equation}
where $k_j^j=0$, $k_s^j\geq 0$ for $s\neq j$, $1\leq i_j\leq m-1-\floor*{\frac{m}{r}}$, and 
\begin{equation}\label{equation_condition_2}
k+k^j=r-2-\floor*{\frac{ri_j}{m}},\quad \text{where}\quad k^j:=\sum_{s=1}^n k_s^j.
\end{equation}
Now, we divide the proof into three parts.\\

{\bf Claim 1:} $G_{k, 0, \dots, 0}=\emptyset$ for $r-n-\floor*{\frac{r}{m}}\leq k$.

{\it Proof of the Claim 1: }Suppose that there exists an element $(km+i_1, i_2, \dots, i_n)$ in $G_{k, 0, \dots, 0}$. From Corollary \ref{corollary-permutation-Gk0..0} we can assume, without loss of generality, that $i_1\geq i_2\geq \cdots \geq i_n$. Then there exists $\negbeta^1, \dots, \negbeta^n$ elements in $\Lambda^*(\P)$ as in (\ref{betas}) satisfying (\ref{equation_condition_1}) and (\ref{equation_condition_2}). 

From (\ref{equation_condition_1}) and since $i_1\geq i_2 \geq \cdots \geq i_n$, we obtain $k_s^j\geq 1$ for $1\leq s\leq n-1$ and $s+1\leq j\leq n$. This implies 
\begin{equation}\label{lowerbound_k^j}
j-1\leq k^j\quad \text{for}\quad 1\leq j \leq n.
\end{equation}

From (\ref{equation_condition_2}), for each $1\leq j \leq n$ we have 
$$
r-n-\floor*{\frac{r}{m}}+k^j\leq k+k^j=r-2-\floor*{\frac{ri_j}{m}}\leq r-2-\floor*{\frac{r}{m}}.
$$
Hence $0\leq k^j\leq n-2$ for $1\leq j\leq n$. In particular, for $j=n$ we obtain $k^n\leq n-2$, a contradiction with (\ref{lowerbound_k^j}). Therefore $G_{k, 0, \dots, 0}=\emptyset$.\\

{\bf Claim 2: }$G_{k, 0, \dots, 0}\subseteq (km, 0, \dots, 0)+\cup_{\sigma\in S_n}\sigma(B_k)$.

{\it Proof of the Claim 2: }Let $(km+i_1, i_2, \dots, i_n)\in G_{k, 0, \dots, 0}$. Analogously to the proof of Claim 1, we can assume that $i_1\geq i_2\geq \cdots \geq i_n$, and therefore there exists $\negbeta^1, \dots, \negbeta^n$ elements in $\Lambda^*(\P)$ as in (\ref{betas}) satisfying (\ref{equation_condition_1}) and (\ref{equation_condition_2}), and $j-1\leq k^j$ for $1\leq j \leq n$. This implies $k+j\leq k+k^j + 1$. Now, from (\ref{equation_condition_2}), we get $k+j \leq r-1-\floor*{\frac{ri_j}{m}} = r- \ceil*{\frac{ri_j}{m}}$, and thus $\frac{m(k+j)}{r} \leq m - \frac{m}{r}  \ceil*{\frac{r i_j}{m}}$. So, we obtain 
$$
1\leq i_j \leq m-\ceil*{\frac{m(k+j)}{r}}\quad \text{for}\quad 1\leq j \leq n.
$$
Therefore $(km+i_1, i_2, \dots, i_n)\in (km, 0, \dots, 0)+B_k$.\\

{\bf Claim 3: }$(km, 0, \dots, 0)+\cup_{\sigma\in S_n}\sigma(B_k)\subseteq G_{k, 0, \dots, 0}$.

{\it Proof of the Claim 3: }Let $(i_1, i_2, \dots, i_n)$ be an element in $\cup_{\sigma\in S_n}\sigma(B_k)$. Then there exists a permutation $\tau\in S_n$ such that
\begin{equation}\label{i_1>i_2>...}
i_{\tau(1)}\geq i_{\tau(2)}\geq \cdots \geq i_{\tau(n)}.
\end{equation}
We are going to prove that $(km+i_{\tau(1)}, i_{\tau(2)}, \dots, i_{\tau(n)})\in G_{k, 0, \dots, 0}$. For this, we need to construct elements $\negalpha^1, \dots, \negalpha^n$ in $\Lambda^*(\P)$ such that
$\alpha_j^j<\min\{\alpha_j^1, \dots, \widehat{\alpha_j^j}, \dots, \alpha_j^n\}$ for $1\leq j \leq n$ and $\glb(\negalpha^1, \dots, \negalpha^n)=(km+i_{\tau(1)}, i_{\tau(2)}, \dots, i_{\tau(n)})$.
 
First, note that the inequality (\ref{i_1>i_2>...}) implies that $(i_{\tau(1)}, i_{\tau(2)}, \dots, i_{\tau(n)})\in B_k$, that is, 
$$
1\leq i_{\tau(j)}\leq m-\ceil*{\frac{m(k+j)}{m}}\quad \text{for}\quad 1\leq j \leq n.
$$
Now, for each $j$ such that $1\leq i_{\tau(j)} \leq m-\ceil*{\frac{m(k+n)}{r}}$, define $\tilde{k}^j:=r-k-2-\floor*{\frac{ri_{\tau(j)}}{m}}$. Then
$$
n-1\leq \tilde{k}^j \leq r-k-2-\floor*{\frac{r}{m}}.
$$
For this case, define $\tilde{k}_j^j:=0$ and let $\tilde{k}_1^j, \dots, \tilde{k}_{j-1}^j, \tilde{k}_{j+1}^j, \dots, \tilde{k}_n^j\in \left\{1, 2, \dots, r-2-\floor*{\frac{r}{m}}\right\}$ be integers such that $\tilde{k}^j=\sum_{s=1}^{n}\tilde{k}_s^j$. 

On the other hand, for each $j$ such that $m-\floor*{\frac{m(k+n)}{r}}\leq i_{\tau(j)}$, there exist an unique integer $\tilde{k}^j$ in $\{j-1, \dots , n-2\}$ such that 
$$
m-\floor*{\frac{m(k+\tilde{k}^j+2)}{r}}\leq i_{\tau(j)}\leq m-\ceil*{\frac{m(k+\tilde{k}^j+1)}{r}}.
$$
This implies that $k+\tilde{k}^j=r-2-\floor*{\frac{ri_{\tau(j)}}{m}}$. Furthermore, for $1\leq s\leq n$ define
$$ 
\tilde{k}^j_s= \left\{
\begin{array}{ll}
1, & \text{if }s\in \{1, 2, \dots, \dots, \tilde{k}^j+1\}\setminus \{j\},\\
0,  & \text{in otherwhise}.
\end{array} \right. 
$$
Thus, we have that $\tilde{k}^j=\sum_{s=1}^n \tilde{k}_s^j$. 

Now, for $1\leq j \leq n$ define 
$$
\negalpha^j:=((k+\tilde{k}_1^j)m+i_{\tau(j)}, \tilde{k}_2^jm+i_{\tau(j)}, \dots, \tilde{k}_n^jm+i_{\tau(j)}).
$$
Since $k+\tilde{k}^j=r-2-\floor*{\frac{ri_{\tau(j)}}{m}}$ for each $j$, we obtain $\negalpha^j\in \Lambda^*(\P)$. Furthermore, from the construction of the integers $\tilde{k}_1^j, \tilde{k}_2^j, \dots, \tilde{k}_n^j $ and from (\ref{i_1>i_2>...}), we obtain $\alpha_j^j<\min\{\alpha_j^1, \dots, \widehat{\alpha_j^j}, \dots, \alpha_j^n\}$ for $1\leq j \leq n$ and $\glb(\negalpha^1, \dots, \negalpha^n)=(km+i_{\tau(1)}, i_{\tau(2)}, \dots, i_{\tau(n)})$. This implies that $(km+i_{\tau(1)}, i_{\tau(2)}, \dots, i_{\tau(n)})\in G_{k, 0, \dots, 0}$. Finally, from Corollary \ref{corollary-permutation-Gk0..0}, we have $(km+i_1, i_2, \dots, i_n)\in G_{k, 0, \dots, 0}$.
\end{proof}

In order to calculate the cardinality of the set $G_{k, 0, \dots, 0}$, we define a family of functions inductively. For $n=1$ define the function $D_1:\Z\rightarrow \Z$ such that $D_1(a_1)=a_1$ for all $a_1\in \Z$. For $n\geq 2$ define the function $D_n:\Z^n\rightarrow \Z$ given by
\begin{equation}\label{function-Sn}
D_n(a_1, \dots, a_n)=a_n^n+\sum_{i=1}^{n-1}\binom{n}{i}a_n^{n-i}D_i(a_1-a_n,\dots, a_i-a_n).
\end{equation}

\begin{lemma}\label{properties_Sn}
Let $D_n$ be the function defined in (\ref{function-Sn}). The following statements hold:
\begin{enumerate}[\rm (i)]
\item $D_n(ua_1, \dots, ua_n)=u^nD_n(a_1, \dots, a_n)$ for $u\in \Z$.
\item If $a_1,  \dots, a_n$ is a non-increasing consecutive sequence of integers, that is, $a_i=a_n+n-i$ for $i=1, \dots, n$, then
$$
D_n(a_1,  \dots, a_n)=a_n(a_n+n)^{n-1}.
$$
\end{enumerate}
\end{lemma}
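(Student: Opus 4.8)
The plan is to prove both statements by induction on $n$, using the recursion (\ref{function-Sn}) as the engine in each case. For (i), the base case $n=1$ is immediate since $D_1(ua_1)=ua_1=u\,D_1(a_1)$. For the inductive step, substituting $ua_1,\dots,ua_n$ into (\ref{function-Sn}) gives
$$
D_n(ua_1,\dots,ua_n)=(ua_n)^n+\sum_{i=1}^{n-1}\binom{n}{i}(ua_n)^{n-i}D_i\bigl(u(a_1-a_n),\dots,u(a_i-a_n)\bigr).
$$
Applying the induction hypothesis to each $D_i$ term produces a factor $u^i$, which combines with the $u^{n-i}$ in front to yield $u^n$ uniformly; factoring out $u^n$ then recovers $D_n(a_1,\dots,a_n)$.

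For (ii), I would argue by strong induction on $n$, writing $b:=a_n$ so that $a_i-a_n=n-i$ for $1\leq i\leq n$. The key observation is that the shifted tuple $(a_1-a_n,\dots,a_i-a_n)=(n-1,n-2,\dots,n-i)$ is again a non-increasing consecutive sequence, now of length $i$ and with last entry $n-i$. Hence the induction hypothesis applies and gives
$$
D_i(a_1-a_n,\dots,a_i-a_n)=(n-i)\bigl((n-i)+i\bigr)^{i-1}=(n-i)\,n^{i-1}.
$$
Substituting this into (\ref{function-Sn}) reduces the claim to the purely combinatorial identity
$$
b^n+\sum_{i=1}^{n-1}\binom{n}{i}(n-i)\,n^{i-1}\,b^{n-i}=b(b+n)^{n-1}.
$$

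To finish, I would verify this identity directly: re-indexing the sum by $j=n-i$ and using $\binom{n}{j}\,j=n\binom{n-1}{j-1}$ rewrites each summand as $\binom{n-1}{j-1}b^{j}n^{n-j}$, after which absorbing the leading term $b^n$ as the $j=n$ contribution yields
$$
\sum_{j=1}^{n}\binom{n-1}{j-1}\,b^{j}\,n^{n-j}=b\sum_{k=0}^{n-1}\binom{n-1}{k}\,b^{k}\,n^{(n-1)-k}=b(b+n)^{n-1}
$$
by the binomial theorem. The main obstacle is the bookkeeping in (ii)—recognizing that the shifted consecutive sequence still has exactly the form needed for the induction hypothesis, and then matching the re-indexed sum to the binomial expansion via the identity $\binom{n}{j}\,j=n\binom{n-1}{j-1}$; the homogeneity in (i) and the remaining algebra are routine.
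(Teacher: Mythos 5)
Your proposal is correct and follows essentially the same route as the paper: induction on $n$ via the recursion for part (i), and for part (ii) the observation that the shifted tuple $(n-1,\dots,n-i)$ is again a non-increasing consecutive sequence so the induction hypothesis gives $(n-i)n^{i-1}$, followed by the identity $\binom{n}{i}(n-i)=n\binom{n-1}{i}$ and the binomial theorem. Your re-indexing by $j=n-i$ is only a cosmetic variation of the paper's computation.
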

\begin{proof}
The proof is by induction on $n$. Observe that both properties are true for $n=1$. Now, assume that the properties are true for any positive integer less than $n$. Then, for the first item,
\begin{align*}
D_n(ua_1,  \dots, ua_n)&=(ua_n)^n+\sum_{i=1}^{n-1}\binom{n}{i}(ua_n)^{n-i}D_i(ua_1-ua_n,\dots, ua_i-ua_n)\\
&=u^na_n^n+\sum_{i=1}^{n-1}\binom{n}{i}u^{n-i}a_n^{n-i}D_i(u(a_1-a_n),\dots, u(a_i-a_n))\\
&=u^na_n^n+\sum_{i=1}^{n-1}\binom{n}{i}u^na_n^{n-i}D_i(a_1-a_n,\dots, a_i-a_n)\\
%&=u^n\left(a_n^n+\sum_{i=1}^{n-1}\binom{n}{i}a_n^{n-i}D_i(a_1-a_n,\dots, a_i-a_n)\right)\\
&=u^nD_n(a_1, \dots, a_n).
\end{align*}
On the other hand, for the second item,
\begin{align*}
D_n(a_1, \dots, a_n)&=a_n^n+\sum_{i=1}^{n-1}\binom{n}{i}a_n^{n-i}D_i(a_1-a_n,\dots, a_i-a_n)\\
&=a_n\left(a_n^{n-1}+\sum_{i=1}^{n-1}\binom{n}{i}a_n^{n-1-i}D_i(n-1, \dots, n-i)\right)\\
&=a_n\left(a_n^{n-1}+\sum_{i=1}^{n-1}\binom{n}{i}a_n^{n-1-i}(n-i)n^{i-1}\right)\\
&=a_n\left(a_n^{n-1}+\sum_{i=1}^{n-1}\binom{n-1}{i}a_n^{n-1-i}n^i\right)\\
&=a_n(a_n+n)^{n-1}.
\end{align*}
\end{proof}

The following result establishes that the function $D_n$ given in (\ref{function-Sn}) determines the cardinality of the set $\cup_{\si\in S_n}\si(B_k)$ for each $k$, where $B_k$ is the set defined in (\ref{Bk}). The proof will be omitted because the argument is exactly the same as the one given in \cite[Lemma 5]{YH2018}.

\begin{lemma}\label{Cardinality-permutation}
Let $a_1, \dots, a_n$ be positive integers such that $a_1\geq a_2\geq \dots \geq a_n$. Define the set 
$$
B:=\{ \negi =(i_1, \dots, i_n)\in \N^n: 1\leq i_j \leq a_j \text{ for }j=1, \dots, n\}.
$$ 
Then
$$
\left|\bigcup_{\si\in S_{n}}\si(B)\right|=D_n(a_1,  \dots, a_n).
$$
\end{lemma}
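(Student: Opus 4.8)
The plan is to prove the (slightly more general) statement for nonnegative non-increasing bounds $a_1\ge a_2\ge\cdots\ge a_n\ge 0$, so that the induction closes, and to reduce the counting of the permuted union to counting tuples whose sorted rearrangement is dominated by $(a_1,\dots,a_n)$. First I would reinterpret the union set-theoretically. Using the definition $\si(\x)=(x_{\si(1)},\dots,x_{\si(n)})$, a tuple $\y\in\N^n$ lies in $\bigcup_{\si\in S_n}\si(B)$ exactly when there is a permutation $\tau\in S_n$ with $y_{\tau(k)}\le a_k$ for every $k$. Writing $y_{(1)}\ge y_{(2)}\ge\cdots\ge y_{(n)}$ for the non-increasing rearrangement of $\y$, I claim this is equivalent to $y_{(k)}\le a_k$ for all $k$. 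One direction is immediate (take $\tau$ to be the sorting permutation, using $a_1\ge\cdots\ge a_n$). For the converse, fix $k$ and consider the $k$ positions carrying the $k$ largest entries of $\y$; since only $k-1$ slots have index below $k$, the pigeonhole principle forces one such position $j$ to be matched by $\tau$ to a slot $k'\ge k$, whence $y_{(k)}\le y_j\le a_{k'}\le a_k$. This yields
\[
\bigcup_{\si\in S_n}\si(B)=U:=\bigl\{\y\in\N^n : y_{(k)}\le a_k \text{ for } k=1,\dots,n\bigr\},
\]
reducing the problem to showing $|U|=D_n(a_1,\dots,a_n)$.

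Next I would induct on $n$, the case $n=1$ being $|U|=a_1=D_1(a_1)$. For the inductive step I would partition $U$ according to the number $i$ of coordinates of $\y$ exceeding $a_n$. Since $y_{(n)}\le a_n$ always holds, at least one coordinate is $\le a_n$, so $0\le i\le n-1$. The coordinates not exceeding $a_n$ occupy the sorted positions $(k)$ with $k>i$, where $y_{(k)}\le a_n\le a_k$ automatically; hence they range freely over $\{1,\dots,a_n\}$ and impose no further constraint. The $i$ coordinates exceeding $a_n$ occupy the top $i$ sorted slots, so the only surviving inequalities are $y_{(k)}\le a_k$ for $1\le k\le i$ on these large entries.

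It then remains to count each block and match the recursion. For $i=0$ all coordinates lie in $\{1,\dots,a_n\}$, contributing $a_n^n$. For $1\le i\le n-1$, I would choose the $\binom{n}{i}$ positions of the large entries and fill the remaining $n-i$ positions in $a_n^{n-i}$ ways; writing each large entry as $a_n+w_\ell$ with $w_\ell\ge 1$, the surviving constraints become $w_{(k)}\le a_k-a_n$ for $1\le k\le i$, so $(w_1,\dots,w_i)$ ranges over the dimension-$i$ domination set built from the nonnegative non-increasing bounds $(a_1-a_n,\dots,a_i-a_n)$. By the induction hypothesis this block contributes $D_i(a_1-a_n,\dots,a_i-a_n)$ (the degenerate case $a_i=a_n$ makes the last bound $0$, giving $0$ on both the combinatorial and the $D_i$ side). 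Summing over $i$ reproduces exactly the recursion (\ref{function-Sn}), so $|U|=D_n(a_1,\dots,a_n)$.

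The hard part will be the first step: establishing cleanly that membership in the permuted union is governed by the sorted-domination inequalities, and in particular justifying the pigeonhole matching argument together with the bookkeeping that the large entries form precisely the top of the sorted order. Once this equivalence is secured, the inductive count is routine, the only delicate point being the degenerate boundary case $a_i=a_n$, which is absorbed by allowing nonnegative bounds in the inductive statement.
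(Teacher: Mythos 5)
Your proof is correct and complete. Note that the paper itself does not prove this lemma---it omits the argument and defers to \cite[Lemma 5]{YH2018}---so there is nothing in the text to compare against directly; your route (identify $\bigcup_{\si\in S_n}\si(B)$ with the set of tuples whose non-increasing rearrangement is dominated by $(a_1,\dots,a_n)$ via the pigeonhole matching, then partition by the number $i$ of coordinates exceeding $a_n$ and induct) is exactly the combinatorial decomposition that the recursion (\ref{function-Sn}) encodes, so it is almost certainly the intended argument. Your two extra precautions---extending the statement to nonnegative non-increasing bounds so the induction closes, and checking that the degenerate case $a_i=a_n$ gives $0$ on both sides---are exactly the points that need care, and you handle them correctly.
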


\begin{theorem}\label{Kummer-PureGaps} 
The pure gap set $G_0(\P)$ is given by
%$$
%G_0(\P)=\bigcup_{\substack{0\leq k_1, k_2, \dots, k_n\\k_1+\cdots+k_n=k\\ 0\leq k \leq r-n-1-\floor*{r/m}}}\left((k_1m, k_2m, \dots, k_nm)+\cup_{\sigma\in S_I}\sigma(B_k)\right),
%$$
\begin{align*}
G_0(\P)=\Bigg\{(k_1m, \dots, k_nm)+\si(\negi)&:\, k_1, \dots, k_n\geq 0, \, k_1+\cdots +k_n=k,\\
&\quad 0\leq k \leq r-n-1-\floor*{\frac{r}{m}},\, \negi\in B_{k},  \text{ and }\si\in S_n\Bigg\},
\end{align*}
where $B_k$ is the set defined in Proposition \ref{Kummer-Gk0..0}. Furthermore,
$$
|G_0(\P)|=\sum_{k=0}^{r-n-1-\floor*{r/m}}\binom{k+n-1}{n-1}D_n\left(m-\ceil*{\frac{m(k+1)}{r}}, \dots, m-\ceil*{\frac{m(k+n)}{r}}\right).
$$
\end{theorem}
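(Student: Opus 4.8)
The plan is to assemble the statement directly from the structural results already in place, since the substantive combinatorial work is contained in Propositions \ref{Puregapsymmetry} and \ref{Kummer-Gk0..0} and in Lemma \ref{Cardinality-permutation}. For the explicit description I would start from the disjoint decomposition (\ref{PureGapsDecomposition}), namely $G_0(\P)=\bigcup_{0\leq k_1,\dots,k_n}G_{k_1,\dots,k_n}$, and rewrite each block by Proposition \ref{Puregapsymmetry}, which gives $G_{k_1,\dots,k_n}=G_{k,0,\dots,0}+\w_{k_2,\dots,k_n}$ with $k=k_1+\cdots+k_n$. Since the period here is $\pi=m$, we have $\w_{k_2,\dots,k_n}=(-m\sum_{i=2}^{n}k_i,\,k_2m,\dots,k_nm)$, while Proposition \ref{Kummer-Gk0..0} expresses $G_{k,0,\dots,0}$ as $(km,0,\dots,0)+\bigcup_{\si\in S_n}\si(B_k)$ for $0\leq k\leq r-n-1-\floor*{r/m}$ and as the empty set otherwise.

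The one genuine computation is to telescope the two translation vectors. Combining the shifts gives
$$
(km,0,\dots,0)+\w_{k_2,\dots,k_n}=\left(m\left(k-\textstyle\sum_{i=2}^{n}k_i\right),k_2m,\dots,k_nm\right)=(k_1m,k_2m,\dots,k_nm),
$$
because $k-\sum_{i=2}^{n}k_i=k_1$. Hence $G_{k_1,\dots,k_n}=(k_1m,\dots,k_nm)+\bigcup_{\si\in S_n}\si(B_k)$, and taking the union over all tuples $(k_1,\dots,k_n)$—only those with $k=k_1+\cdots+k_n\leq r-n-1-\floor*{r/m}$ contributing, by the vanishing clause of Proposition \ref{Kummer-Gk0..0}—yields exactly the claimed set.

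For the cardinality I would invoke the counting half of Proposition \ref{Puregapsymmetry}, reducing everything to $|G_0(\P)|=\sum_{0\leq k}\binom{k+n-1}{n-1}|G_{k,0,\dots,0}|$. Translation is a bijection, so $|G_{k,0,\dots,0}|=\big|\bigcup_{\si\in S_n}\si(B_k)\big|$, and this last quantity is computed by Lemma \ref{Cardinality-permutation} with $a_j=m-\ceil*{\frac{m(k+j)}{r}}$. The only hypothesis to check is $a_1\geq a_2\geq\cdots\geq a_n\geq 1$: monotonicity holds because $\ceil*{\frac{m(k+j)}{r}}$ is non-decreasing in $j$, and positivity in the relevant range is guaranteed by Proposition \ref{Kummer-Gk0..0}, since $0\leq k\leq r-n-1-\floor*{r/m}$ is precisely where $G_{k,0,\dots,0}$, hence $B_k$, is nonempty. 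Substituting gives the stated sum, with the upper limit $r-n-1-\floor*{r/m}$ inherited from the same vanishing clause.

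The proof is therefore essentially bookkeeping, with no serious obstacle remaining: the difficult steps—identifying $G_{k,0,\dots,0}$ and computing the cardinality of a symmetrized box—were already carried out in Proposition \ref{Kummer-Gk0..0} and Lemma \ref{Cardinality-permutation}. The only points that require care are the telescoping of the two translation vectors into $(k_1m,\dots,k_nm)$ and confirming that the summation range and the monotonicity hypothesis of Lemma \ref{Cardinality-permutation} are mutually compatible.
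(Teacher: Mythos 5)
Your proposal is correct and follows essentially the same route as the paper: decompose $G_0(\P)$ via (\ref{PureGapsDecomposition}), translate each block to $G_{k,0,\dots,0}$ by Proposition \ref{Puregapsymmetry}, substitute the description from Proposition \ref{Kummer-Gk0..0}, and count with (\ref{CardinalityPuregaps}) and Lemma \ref{Cardinality-permutation}. The only difference is that you make explicit the telescoping of the translation vectors and the verification of the monotonicity hypothesis of Lemma \ref{Cardinality-permutation}, both of which the paper leaves implicit.
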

\begin{proof}
To obtain such description of the pure gap set $G_0(\P)$, it is enough to observe that
\begin{align*}
G_0(\P)&=\bigcup_{0\leq k_1,  \dots, k_n}G_{k_1, \dots, k_n}& \text{(from Equation (\ref{PureGapsDecomposition}))}\\
&=\bigcup_{\substack{0\leq k_1,  \dots, k_n\\k_1+\cdots+k_n=k\\ 0\leq k \leq r-n-1-\floor*{r/m}}}(G_{k, 0, \dots, 0}+\w_{k_2, \dots, k_n})& \text{(from Proposition \ref{Puregapsymmetry})}\\
&=\bigcup_{\substack{0\leq k_1,  \dots, k_n\\k_1+\cdots+k_n=k\\ 0\leq k \leq r-n-1-\floor*{r/m}}}((k_1m,  \dots, k_nm)+\cup_{\sigma\in S_n}\sigma(B_k)).& \text{(from Proposition \ref{Kummer-Gk0..0})}
\end{align*}
The formula for the cardinality of the set $G_0(\P)$ follows from (\ref{CardinalityPuregaps}), Proposition \ref{Kummer-Gk0..0}, and Lemma \ref{Cardinality-permutation}.
\end{proof}

\begin{example} \label{ex numerico}
Consider the Kummer extension defined over $K$ by the affine equation $\mathcal{X}: y^5 = f(x)$, where $f(x)$ is a separable polynomial of degree $r=9$ and $\char(K)\nmid 5$. The genus of the function field $F = K(\mathcal{X})$ is $g=16$. Let $\P=(P_1, P_2, P_3)$, where $P_1, P_2$ and $P_3$ are totally ramified places on the extension $F/K(x)$ distinct to $P_\infty$. In Figure \ref{image} we present a diagram with the elements in the sets $G_0(\P)$ and $\Lambda^*(\P)$. The cubes with side equal to the period $m=5$ represent the following sets:

$\bullet$ one set $G_{0,0,0}$ with $54$ pure gaps: one pink cube;

$\bullet$ three sets of the form $G_{1,0,0} + \w_{k_2, k_3}$ with $26$ pure gaps in each of these sets: three purple cubes;

$\bullet$ six sets of the form $G_{2,0,0} + \w_{k_2, k_3}$ with $20$ pure gaps in each of these sets: six green cubes;

$\bullet$ ten sets of the form $G_{3,0,0} + \w_{k_2, k_3}$ with $7$ pure gaps in each of these sets: ten blue cubes;

$\bullet$ fifteen sets of the form $G_{4,0,0} + \w_{k_2, k_3}$ with $4$ pure gaps in each of these sets: fifteen yellow cubes.

In \url{https://www.geogebra.org/m/rmrw2y7j} we provide a three-dimensional interactive image of Figure \ref{image} where we can appreciate the symmetry of the set $G_0(\P)$ and the relationship between the sets $G_0(\P)$ and $\Lambda^*(\P)$ given by Corollary \ref{coro_G0(Q)}.

From Lemma \ref{Cardinality-permutation} and Theorem \ref{Kummer-PureGaps} we have $|G_0(\P)| = 54 + 3 \cdot 26 + 6 \cdot 20 + 10\cdot 7 + 15 \cdot 4 = 382$. 
\end{example}

%From Theorem \ref{Kummer-PureGaps}, we can see that the set of pure gaps $G_0(\P)$ is invariant under permutations, that is, $\si(G_0(\P))=G_0(\P)$ for any $\si\in S_n$, verifying what is established in Proposition \ref{Gapspermutation}. As an example, in Figure \ref{image} we show the set of pure gaps $G_0(\P)$ and the set of maximal relative elements $\Lambda^*(\P)$ for the case $m=5$, $r=9$ and $n=3$. In addition, in \url{https://www.geogebra.org/m/rmrw2y7j} we provide a three-dimensional interactive image of Figure \ref{image} where we can appreciate the symmetry of the set $G_0(\P)$ and the relationship between the sets $G_0(\P)$ and $\Lambda^*(\P)$ given by Corollary \ref{coro_G0(Q)}.\\

\begin{figure}
\begin{center}
\includegraphics[width=0.7\textwidth]{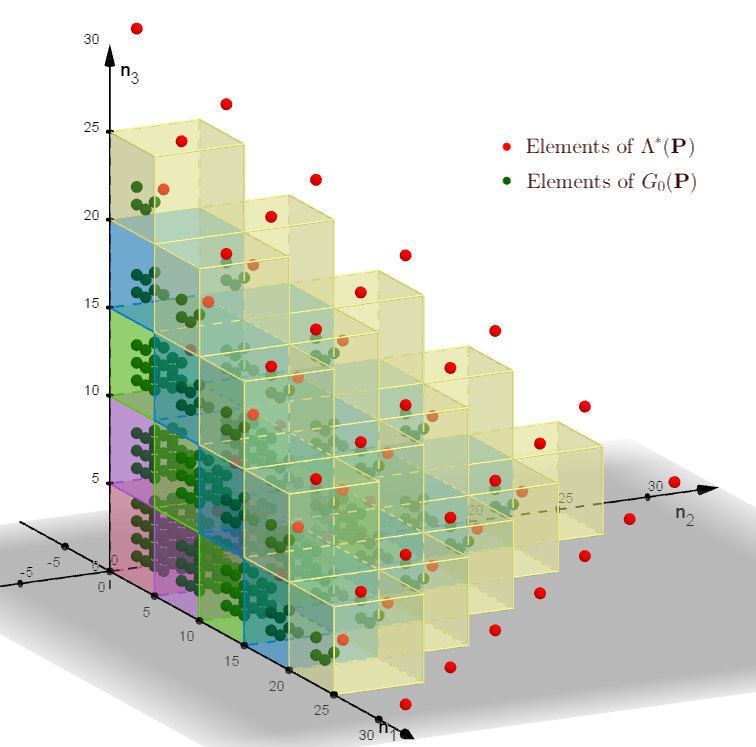}  
\caption{The sets $G_0(\P)$ and $\Lambda^*(\P)$ for the case $m=5$, $r=9$, and $n=3$.}
\label{image}
\end{center}
\end{figure}

In recent works the cardinality of the pure gap set $G_0(\P)$ was determined in some particular cases. For example, in \cite[Theorema 4.12]{CMT2024} the authors determined the cardinality of $G_0(\P)$ for $n=2$, that is when $\P=(P_1, P_2)$, and in \cite[Theorem 6]{YH2018} the authors determined the cardinality of $G_0(\P)$ over the Hermitian function field, that is, when $m=q+1$ and $r=q$. Next, we show that the formula for the cardinality of $G_0(\P)$ given in Theorem \ref{Kummer-PureGaps} generalizes these results.\\

$\bullet$ Case $n=2$. Note that the function $D_2$ is given by 
\begin{align*}
D_2(a_1, a_2)&=a_2^2+2a_2(a_1-a_2)\\
&=a_2^2+2a_2(a_1-a_2)+(a_1-a_2)^2-(a_1-a_2)^2\\
&=a_1^2-(a_1-a_2)^2.
\end{align*}
Therefore, from Theorem \ref{Kummer-PureGaps}, we obtain
\begin{align*}
|G_0(P_1, P_2)|&=\sum_{k=0}^{r-3-\floor*{\frac{r}{m}}}\binom{k+1}{1}D_2\left(m-\ceil*{\frac{m(k+1)}{r}}, m-\ceil*{\frac{m(k+2)}{r}}\right)\\
&=\sum_{k=0}^{r-3-\floor*{\frac{r}{m}}}(k+1)\left[\left(m-\ceil*{\frac{m(k+1)}{r}}\right)^2-\left(\ceil*{\frac{m(k+2)}{r}}-\ceil*{\frac{m(k+1)}{r}}\right)^2\right]\\
&=\sum_{k=1}^{r-2-\floor*{\frac{r}{m}}}k\left[\left(m-\ceil*{\frac{mk}{r}}\right)^2-\left(\ceil*{\frac{m(k+1)}{r}}-\ceil*{\frac{mk}{r}}\right)^2\right].
\end{align*}
This formula coincides with the one given in \cite[Theorem 4.12]{CMT2024}.\\

$\bullet$ Case $m=ur+1$ with $u\in \N$. In this case we have
\begin{align*}
|G_0(\P)|&=\sum_{k=0}^{r-n-1}\binom{k+n-1}{n-1}D_n\left(u(r-k-1), \dots, u(r-k-n)\right)\\
&=u^n\sum_{k=0}^{r-n-1}\binom{k+n-1}{n-1}D_n\left(r-k-1, \dots, r-k-n\right)& \text{(from Lemma \ref{properties_Sn} (i))}\\
&=u^n\sum_{k=0}^{r-n-1}\binom{k+n-1}{n-1}(r-k-n)(r-k)^{n-1}.& \text{(from Lemma \ref{properties_Sn} (ii))}
\end{align*}
In particular, for $u=1$ and $r=q$ we obtain the formula for the cardinality of the pure gap set $G_0(\P)$ over the Hermitian function field which coincides with the formula given in \cite[Theorem 6]{YH2018}.

\subsection{AG Codes in many places over Kummer Extensions}

Finally, in this subsection we will use the simple explicit description of the pure gap set given in Theorem \ref{Kummer-PureGaps} to provide a general construction of differential AG codes in many places over the Kummer extension $\fq(\cX)/\fq(x)$ defined by the equation given in (\ref{KummerEquation}). For this, our main tool will be Theorem \ref{TorresCarvalho}.

Let $2\leq n \leq r-1-\floor*{\frac{r}{m}}$. From Theorem \ref{Kummer-PureGaps}, the pure gap set $G_0(\P)$ can be decomposed as the disjoint union
$$
G_0(\P)=\bigcup_{\substack{0\leq k_1, \dots, k_n\\k_1+\cdots+k_n=k\\ 0\leq k \leq r-n-1-\floor*{r/m}}}G_{k_1, k_2, \dots, k_n},
$$ 
where $G_{k_1, \dots, k_n}=\{(k_1m, \dots, k_nm)+\si(\negi): \negi\in B_k \text{ and }\si\in S_n\}$. Therefore we can use each set $G_{k_1, \dots, k_n}$ of pure gaps at $\P$ to construct codes using Theorem \ref{TorresCarvalho}. For $\si\in S_n$, we define the following subset of $G_{k_1, \dots, k_n}$:
$$
G_{k_1, \dots, k_n}^\si:=\{(k_1m, \dots, k_nm)+\si(\negi): \negi\in B_k\}.
$$

\begin{remark} \label{G sigma}
Let $\negalpha, \negbeta \in G_{k_1, \dots, k_n}^\si$. If $\neggamma\in \N^n$ with $\negalpha \leq \neggamma \leq \negbeta$, then $\neggamma \in G_{k_1, \dots, k_n}^\si$. In fact, suppose that $\negalpha = (k_1m, \dots, k_nm) + \si(\negi)$ and $\negbeta= (k_1m, \dots, k_nm) + \si(\negi')$, with $\negi , \negi' \in B_k$. Since $\negalpha \leq \neggamma \leq \negbeta$, we have that $\neggamma = (k_1m, \dots, k_nm) + \mathbf{t}$, with $\si(\negi) \leq \mathbf{t} \leq \si(\negi')$. Then, from definition of $B_k$ we get $\widehat{\negi} \in B_k$ such that $\mathbf{t} = \sigma(\widehat{\negi})$ and so $\neggamma \in G_{k_1, \dots, k_n}^\si$.
\end{remark}

This fact implies that the sets $G_{k_1, \dots, k_n}^\si$ are suitable for construct codes using Theorem \ref{TorresCarvalho}.

Without loss of generality, in the following result we construct differential AG codes using the set $G_{k_1, \dots, k_n}^\si$ for $\si=\Id$. It is not difficult to see that using the sets $G_{k_1, \dots, k_n}^\si$ for $\si\neq \Id$ we obtain codes with the same parameters.

\begin{theorem}\label{teo_codes}
Let $\cX$ be the curve of genus $g$ given in  (\ref{KummerEquation}) and $P_1, \dots, P_n$ be totally ramified places distinct to $P_\infty$ in the extension $\fq(\cX)/\fq(x)$, where $2\leq n\leq r-1-\floor*{\frac{r}{m}}$. Let $k\in \left\{0, \dots, r-n-1-\floor*{\frac{r}{m}}\right\}$. For $(a_1, a_2, \dots, a_n)\in B_k$ and $k_1, \dots, k_n\geq 0$ be such that $k_1+\cdots + k_n=k$, define the divisors
$$
G_k:=\sum_{j=1}^{n}\left(2k_jm+a_j+m-1-\ceil*{\frac{m(k+j)}{r}}\right)P_j\quad \text{and}\quad D:=\sum_{\substack{P\in \cX(\fq)\\ P\not\in \{P_1, \dots, P_n\}}}P.
$$
If $2g-2<\deg (G_k)<N:=\deg (D)$, then the differential AG code $C_{\Omega}(D, G_k)$ has parameters $[N, k_\Omega, d_\Omega]$, where
\begin{align*}
&k_\Omega=N+n-2km-nm-1+\frac{(m-1)(r-1)}{2}-\sum_{j=1}^{n}a_j+\sum_{j=1}^{n}\ceil*{\frac{m(k+j)}{r}} \text{ and}\\
&d_\Omega \geq 2km+2nm-mr+m+r+1-2\sum_{j=1}^{n}\ceil*{\frac{m(k+j)}{r}}.
\end{align*}
\end{theorem}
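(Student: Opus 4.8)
The two parameters are governed by two independent tools: the dimension $k_\Omega$ will come straight from the Goppa bound (Proposition \ref{goppabound}), and the lower bound on $d_\Omega$ from the Carvalho--Torres estimate (Theorem \ref{TorresCarvalho}) applied to a box of pure gaps. The plan is therefore to rewrite the fixed divisor $G_k$ in the shape $\sum_{j=1}^n(\alpha_j+\beta_j-1)P_j$ demanded by Theorem \ref{TorresCarvalho}, with every point of $\prod_{j=1}^n[\alpha_j,\beta_j]$ a pure gap at $\P$, and then to substitute $\deg(G_k)$ and $g=\tfrac{(m-1)(r-1)}{2}$ into the two results.

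The crucial step is to guess the correct endpoints. I would take
$$
\alpha_j:=k_jm+a_j \qquad\text{and}\qquad \beta_j:=k_jm+m-\ceil*{\frac{m(k+j)}{r}}\qquad (1\le j\le n).
$$
Because $(a_1,\dots,a_n)\in B_k$ forces $1\le a_j\le m-\ceil*{\frac{m(k+j)}{r}}$, these satisfy $\alpha_j\le\beta_j$, and a one-line check gives $\alpha_j+\beta_j-1=2k_jm+a_j+m-1-\ceil*{\frac{m(k+j)}{r}}$, i.e.\ exactly the coefficient of $P_j$ in $G_k$. Both endpoints lie in $G^{\Id}_{k_1,\dots,k_n}$, since $\negalpha=(k_1m,\dots,k_nm)+(a_1,\dots,a_n)$ and $\negbeta=(k_1m,\dots,k_nm)+(m-\ceil*{\frac{m(k+1)}{r}},\dots,m-\ceil*{\frac{m(k+n)}{r}})$ add tuples that belong to $B_k$. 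Invoking Remark \ref{G sigma}, every $\neggamma\in\N^n$ with $\negalpha\le\neggamma\le\negbeta$ then also lies in $G^{\Id}_{k_1,\dots,k_n}\subseteq G_0(\P)$, so the whole box consists of pure gaps and the hypothesis of Theorem \ref{TorresCarvalho} is met.

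With the box fixed, both conclusions are routine substitutions. The hypotheses $2g-2<\deg(G_k)<N$ license Proposition \ref{goppabound}, giving $k_\Omega=N-\deg(G_k)-1+g$; computing $\deg(G_k)=2mk+\sum_{j=1}^n a_j+nm-n-\sum_{j=1}^n\ceil*{\frac{m(k+j)}{r}}$ and inserting $g$ yields the stated dimension. For the distance, Theorem \ref{TorresCarvalho} gives $d_\Omega\ge\deg(G_k)-(2g-2)+n+\sum_{j=1}^n(\beta_j-\alpha_j)$; here $\sum_{j=1}^n(\beta_j-\alpha_j)=nm-\sum_{j=1}^n a_j-\sum_{j=1}^n\ceil*{\frac{m(k+j)}{r}}$, and with $2g-2=mr-m-r-1$ the $\sum_j a_j$ terms cancel, collapsing the bound to the claimed formula. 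I expect no genuine difficulty beyond the bookkeeping of the second paragraph, namely correctly matching the prescribed divisor $G_k$ to a box that Remark \ref{G sigma} certifies to be pure gaps; once that identification is in place, the remainder is mechanical.
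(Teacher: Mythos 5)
Your proposal is correct and follows essentially the same route as the paper's proof: the same choice of endpoints $\alpha_j=k_jm+a_j$ and $\beta_j=k_jm+m-\ceil*{\frac{m(k+j)}{r}}$, the same appeal to Remark \ref{G sigma} to certify the box of pure gaps, and the same substitutions into Proposition \ref{goppabound} and Theorem \ref{TorresCarvalho}. No gaps to report.
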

\begin{proof}
For $k\in \left\{0, \dots, r-n-1-\floor*{\frac{r}{m}}\right\}$, define $
\negalpha:=(k_1m+a_1, \dots, k_nm+a_n)$ and
$$
\negbeta:=\left(k_1m+m-\ceil*{\frac{m(k+1)}{r}}, \dots, k_nm+m-\ceil*{\frac{m(k+n)}{r}}\right).
$$
So, $\negalpha$ and $\negbeta$ are elements in $G_{k_1, \dots, k_n}^{\Id}$ such that $\negalpha\leq \negbeta$. Let $[N, k_\Omega, d_\Omega]$ be the parameters of the code $C_{\Omega}(D, G_k)$. If $2g-2<\deg (G_k)<N$, then from Proposition \ref{goppabound} we obtain
\begin{align*}
k_\Omega&=N+g-1-\deg(G_k)\\
&=N+\frac{(m-1)(r-1)}{2}-1-\left(2km+nm-n+\sum_{j=1}^{n}a_j-\sum_{j=1}^{n}\ceil*{\frac{m(k+j)}{r}}\right)\\
&=N+n-2km-nm-1+\frac{(m-1)(r-1)}{2}-\sum_{j=1}^{n}a_j+\sum_{j=1}^{n}\ceil*{\frac{m(k+j)}{r}}.
\end{align*}
Now, by Remark \ref{G sigma}, we have that $\neggamma \in G_0(\mathbf{P})$ for each $\neggamma \in \N^n$ with $\negalpha \leq \neggamma \leq \negbeta$. Thus, from Theorem \ref{TorresCarvalho} we obtain
\begin{align*}
d_\Omega&\geq \deg(G_k)-(2g-2)+n+\sum_{j=1}^{n}\left(m-\ceil*{\frac{m(k+j)}{r}}-a_j\right)\\
&=2km+nm-n+\sum_{j=1}^{n}a_j-\sum_{j=1}^{n}\ceil*{\frac{m(k+j)}{r}}-(mr-m-r-1)+n\\
&\quad +mn-\sum_{j=1}^{n}\ceil*{\frac{m(k+j)}{r}}-\sum_{j=1}^{n}a_j\\
&=2km+2nm-mr+m+r+1-2\sum_{j=1}^{n}\ceil*{\frac{m(k+j)}{r}}.
\end{align*}
\end{proof}

Note that the dimension $k_\Omega$ and the lower bound for the minimum distance $d_\Omega$ obtained in the previous theorem depend on the value of $k$, but not on the values of $k_1, k_2, \dots, k_n$. This means that to obtain codes with the same parameters as those described in the previous theorem, it is enough to use the sets of pure gaps of the form $G_{k, 0, \dots, 0}$ for $k\geq 0$.

\begin{example}\label{Example1}
Consider the subcover of the Hermitian curve given by the affine equation $y^m=x^q+x$, where $m$ is a divisor of $q+1$. This curve is maximal over $\fqs$ and has $q+1+m(q^2-q)$ rational points over $\fqs$. Using the same notation of Theorem \ref{teo_codes}, let $a:=\sum_{j=1}^{n}a_j$. In Table \ref{table1} we present the parameters of the differential AG codes obtained using Theorem \ref{teo_codes}.

\begin{table}[h!]
	\centering
	\begin{tabular}{|c|c|c|c|c|c|}\hline
		\, $q$\, & \, $m$\,  & \, $n$\,  & \, $k$ \, & \, $a$\,  & \, $[N, k_\Omega, d_\Omega]$ \\ \hline  
		$7$  & $4$ &  $2$ & $3$ & $2$ &   $[174, 156, \geq 12]$ \\     \hline
		$8$  & $3$ &  $2$ & $3$ & $2$ &   $[ 175 , 161 ,  \geq 10 ]$ \\    \hline
		$9$ & $5$ & $2$ & $5$ & $2$ & $[ 368 , 331 ,  \geq 24 ]$  \\ \hline
		$9$  & $5$ &  $3$ & $4$ & $3, 4$ &   $[367, 341-a, \geq 18]$ \\    \hline
	\end{tabular}
	\captionof{table}{Parameters of AG codes over the curve $y^m=x^q+x$.}
	\label{table1}
\end{table}
\end{example}

\begin{example}\label{Example2}
Consider the curve defined by $y^m=(x^{q^{t/2}}-x)^{q^{t/2}-1}$ over $\mathbb{F}_{q^t}$, where $t$ is even, $m$ is a divisor of $q^t-1$, and $\gcd(m, q^{t/2}-1)=1$. This curve has $(q^t-q^{t/2})m+q^{t/2}+1$ rational points over $\mathbb{F}_{q^t}$. Similarly to the previous example, from Theorem \ref{teo_codes} we obtain families of differential AG codes. The parameters of these codes are presented in Table \ref{table2}.

\begin{table}[h!]\label{table2}
	\centering
	\begin{tabular}{|c|c|c|c|c|c|c|}\hline
		\, $q$\, & \, $t$\, & \, $m$\,  & \, $n$\,  & \, $k$ \, & \, $a$\,  & \, $[N, k_\Omega, d_\Omega]$ \\ \hline 
		\, $2$\, & \, $6$\, & \, $3$\,  & \, $2$\,  & \, $3$ \, & \, $2$\,  & \, $[175, 161, \geq 10]$ \\ \hline 
		\, $2$\, & \, $6$\, & \, $9$\,  & \, $2$\,  & \, $5$ \, & \, $2, 3$\,  & \, $[511, 447-a, \geq 42]$ \\ \hline 
		\, $2$\, & \, $6$\, & \, $9$\,  & \, $3$\,  & \, $4$ \, & \, $3, 4, 5$\,  & \, $[510, 462-a, \geq 30]$ \\ \hline  
		\, $3$\, & \, $4$\, & \, $5$\,  & \, $3$\,  & \, $4$ \, & \, $3, 4$\,  & \, $[367, 341-a, \geq 18]$ \\ \hline  
		\end{tabular}
	\captionof{table}{Parameters of AG codes over the curve $y^m=(x^{q^{t/2}}-x)^{q^{t/2}-1}$.}
	\label{table2}
\end{table}
\end{example}

The codes obtained in the previous examples (see Tables \ref{table1} and \ref{table2}) improve the parameters given in MinT’s Tables \cite{MinT}. %Furthermore, all  codes over $\mathbb{F}_{64}$ obtained in Example \ref{Example2} has better relative parameters than the AG codes obtained in \cite[Example 3]{HY2018}.

We observe that, by \cite[Exercise 7 (iii)]{TV1991}, if there is a linear code over $\mathbb{F}_{q}$ of length $n$, dimension $k$ and minimum distance $d$, then for each nonnegative integer $s < k$, there exists a linear code over $\mathbb{F}_{q}$ of length $n-s$, dimension $k-s$, and minimum distance $d$. 

Thus, from the codes obtained in Example \ref{Example1}, we obtain new linear codes with parameters $[367-s, 338-s, \geq 18]$ with $0\leq s\leq 7$ over $\mathbb{F}_{81}$, which have better relative parameters $R+\delta$ than the AG codes of length $367-s$ obtained in \cite[Example 1]{HY2018}.

Analogously, from the codes obtained in Example \ref{Example2}, we obtain new linear codes with parameters $[510-s, 489-s, \geq 30]$ with $0\leq s\leq 110$ over $\mathbb{F}_{64}$, which have better relative parameters $R+\delta$ than the AG codes of length $510-s$ obtained in \cite[Example 3]{HY2018}.

%STYLE:
%\bibliographystyle{alpha}
%\bibliographystyle{amsalpha}
\bibliographystyle{abbrv}
%\bibliographystyle{acm}
%\bibliographystyle{apalike}
%\bibliographystyle{ieeetr}
%\bibliographystyle{plain}
%\bibliographystyle{siam}
%\bibliographystyle{unsrt}

%\nocite{*}
\bibliography{puregapsII} 

\end{document}